\newcommand\bx{\mathbf{x}}
\newcommand\MF{{\mathfrak{M}}}
\newenvironment{proof}{\paragraph{Proof:}}{\hfill$\square$\\}
  \renewcommand\footnotemark{}
\newtheorem{exemp}{Example}[section]
\newtheorem{theorem}{Theorem}
\begin{document} 

\title{\bf Computer code validation via mixture model estimation\thanks{Kaniav Kamary, UMR MIA-Paris, AgroParisTech, INRA, Universit\'e Paris-Saclay, 75005, Paris, France, {\sf kamary@ceremade.dauphine.fr}, Merlin Keller and C\'edric G\oe ury, EDF R\&D PRISME, Chatou, France, {\sf merlin.keller@edf.fr} and {\sf cedric.goeury@edf.fr}, Pierre Barbillon and \'Eric Parent, UMR MIA-Paris, AgroParisTech, INRA, Universit\'e Paris-Saclay, 75005, Paris, France, {\sf pierre.barbillon@agroparistech.fr} and {\sf eric.parent@agroparistech.fr}.}}

\author{{\sc Kaniav Kamary},
{\sc Merlin Keller},
{\sc Pierre Barbillon},
{\sc C\'edric G\oe{}ury},
{\sc \'Eric Parent}
}
\maketitle

\begin{abstract} 
When computer codes are used for modeling complex physical systems, their unknown parameters are tuned by calibration techniques.
A discrepancy function may be added to the computer code in order to capture its discrepancy with the real physical process.  
This discrepancy (also called code error or bias) 
is potentially caused by other inaccuracies of the computer code than the uncertainty on parameters to calibrate. 
While both parameter calibration and discrepancy are two different sources of model uncertainty, distinguishing  the  effects  of  the  two  sources can  be  challenging. 
This difficulty results in a non-identifiability problem between the discrepancy function and the code parameters. 
By considering the validation question of a computer code as a Bayesian selection model problem, \citet{GDMKPBAPEP2016} have highlighted a possible confounding effect in certain configurations between the code discrepancy and a linear computer code by using a Bayesian testing procedure based on the intrinsic Bayes factor.
In this paper,
we investigate the issue of code error identifiability by applying another Bayesian model selection technique which has been recently developed by \citet{KKKMCPRJR2014}. By embedding the competing models within an encompassing mixture model, \citet{KKKMCPRJR2014}'s method allows each observation to belong to a different mixing component, providing a more flexible inference, while remaining competitive in terms of computational cost with the intrinsic Bayesian approach. By using the technique of sharing parameters mentioned in \citet{KKKMCPRJR2014}, an improper non-informative prior can be used for some computer code parameters and we demonstrate that the resulting posterior distribution is proper.
We then check the sensitivity of our posterior estimates to the choice of the parameter prior distributions. 
We illustrate that the value of the correlation length of the discrepancy Gaussian process prior impacts the Bayesian inference of the mixture model parameters and that the model discrepancy can be identified by applying the \cite{KKKMCPRJR2014} method when the correlation length is not too small.
Eventually, the proposed method is applied on a hydraulic code in an industrial context.
\end{abstract}

\noindent
{\em Keywords:}
{Mixture estimation model},
{computer code validation},
{Bayesian model selection},
{Noninformative prior}.

\section{Introduction} 
Since field experiments are often either impractical or economically expensive, computer codes have been developed
as substitutes for many complex physical systems \citep{santner2003,fang2006}.
At the core of such codes are physical models, {\em i.e.} mathematical equations modeling a certain physical phenomenon. These are implemented into computer codes, also know as numerical codes or computer models, which solve numerically the equations defining the physical model, at a certain cost and with a certain level of approximation, depending (among other things) on the computer's internal precision. Hence, throughout the paper, we will use the term
``computer code" to denote the implementation of a physical model. We will avoid to use ``computer model'' to prevent the confusion with the statistical models in which the computer code is embedded.

The aim of numerical simulation is therefore to build such a numerical code, with the aim of simulating the physical system's behavior as realistically as possible. Two fundamental challenges need to be faced to achieve this goal:
{\em Verification} \citep{Roache1998} which is the task of ensuring that the computer code is a `sufficiently good' approximation (in a certain sense, and over a certain domain) of the physical model;
{\em Validation} \citep{AIAA1998} which is the task of ensuring that the physical model is a `sufficiently good' approximation (again, in a certain sense, and over a certain domain) of the actual physical phenomenon.
These two steps constitute the Verification \& Validation framework, \citep{AIAA1998,Roy2011}, which has become increasingly popular in engineering practice. These are often complemented by a so-called {\em calibration} step \citep{MCKAO2001}. This aims at making validation easier by reducing the uncertainty tainting some parameters of the physical model using experimental data.
Note that verification, which is a prerequisite of both calibration and validation, is beyond the scope of this work. Hence, in the following we will assume that the code is verified, and can be assimilated to the physical model it represents.
\citet{MJBJOBPRSJA2007} proposed to validate the computer code by checking the predictions are within a certain tolerance bands around the true physical process. 
They advocated for using a prediction which makes use of the computer code corrected by the discrepancy function while \citet{GDMKPBAPEP2016} considered validation as the decision favoring the computer code alone over the computer code with a discrepancy correction.   
This decision relies on Bayesian model selection methods, and it can deal with the case where the computer code depends on uncertain parameters to be calibrated. Before making this clear, we start by reviewing Bayesian model selection methods.

From a statistical point of view, a `model' is a probability distribution that reflects a set of assumptions concerning the generation of a dataset. In practice, what we expect is to obtain a model which adequately mimics the distribution from which the dataset has been produced \citep{CR2001,MJBJOBAFGGD2012}. When several potential models are available for a dataset, model selection aims at identifying the model that is most strongly supported by the data. In the context of Bayesian statistics, the most common comparison tool is the Bayes factor or its various extensions \citep{AEGAKD1994,JOBLRP1993,STBKPBNHA1997,CR2001,JMMNSPCPRJR2014}. The Bayes factor defined by the ratio of marginal likelihoods (whatever the estimated model) is obtained by a probabilist reasoning based on a loss function over competing models. 
In order to make a decision about two competing models, the Bayes factor is usually compared with the threshold value of one. In the case where more than two models are compared, the selection procedure is based on the highest posterior probability of the model given the data. 
Despite a widespread use of Bayes factor and its equivalent (posterior probability of the model), by the Bayesian community, it appears however problematic in some cases \citep[see][]{KKKMCPRJR2014}. \citet{KKKMCPRJR2014} proposed therefore a paradigm shift in Bayesian hypothesis testing which resolves many difficulties with testing via Bayes factor. Their idea is to
define an alternative to the posterior probability construction that the data originates from a specific encompassing model based on considering the models under comparison as components of a mixture
model. 
The posterior probabilities of the competing models are therefore replaced by estimation of the weights of the competing models within the mixture model. In addition to allow for an extended use of improper priors, one noticeable advantage of this approach is theoretically proof of asymptotic convergence toward true model and its natural interpretation. In this paper, we apply the model selection via mixture model estimation to study whether a pure code prediction or a discrepancy-corrected one provides the best approximation of the physical system.
Traditionally the discrepancy-corrected prediction is based on a Gaussian process modeling of the discrepancy function \citep{kennedy2001}
which results in a dependence structure between data.
However, the \cite{KKKMCPRJR2014}s method is based on the original mixture model formulation where the data are assumed to be iid.
We will show that, under some conditions, this data dependence difficulty can be bypassed which makes the \cite{KKKMCPRJR2014}'s method possible.      

The paper is organized as follows. In Section \ref{sec:models}, the two models in competitions and the mixture model are presented. Section \ref{sec:analysis} deals with the analysis of the mixture model and details the MCMC algorithm used for the inference. 
Section \ref{sec:illu} is a simulation study to illustrate the efficiency of the proposed model selection procedure and Section \ref{sec:case} deals with an application in an industrial context.

\section{Models in competition}\label{sec:models}

In the code validation context, a pure computer code or a simulator is supposed to be a parametric function $f(x, \pmb{\theta})$ aiming at substituting the physical quantity of interest denoted by $r(x)\in \mathbb{R}$ where $x\in \mathcal{X}\subset \mathbb{R}^d$ is the controllable input variable and $\pmb{\theta} \in \Theta \subset \mathbb{R}^d$ is a vector of parameters. We assume that the code
 response is a deterministic real value function of input variable $x$. 
 In order to account for systematic differences between the true value of $r(x)$ and the value predicted by $f(x, \pmb{\theta})$, \cite{MCKAO2001} introduced a model {\em bias}, or {\em discrepancy} term, defined as 
\begin{equation}\label{eq:b}
\delta(x)=r(x)-f(x, \pmb{\theta}^*).
\end{equation} 
in which $\pmb{\theta}^*$ is supposed to be the true value of $\pmb{\theta}$. $\delta(x)$ is therefore an unknown process independent from $\pmb{\theta}$ and both $\delta(x), \pmb{\theta}^*$ can be jointly estimated using the available observations and limited number of simulations \citep{MJBJOBPRSJA2007}. 
The above definition is problematic, since there are infinitely many pairs $(\pmb{\theta}, \delta(\cdot))$ verifying Equation~(\ref{eq:b}), meaning that any statistical model including both parameters is unidentifiable. This is why a prior is needed to distinguish between them, since only the sum $\delta(x) + f(x, \pmb{\theta})$ is estimable under certain conditions.

In the case where the bias is significantly different from zero, a discrepancy-corrected 
prediction is considered as the accurate estimation of $r(x)$. 
In order to perform the statistical modeling of the code validation, 
we suppose that $X$ and $Y$ are respectively the input physical design matrix of size $n \times d$ and the vector of related available 
field measurements of size $n$. We have then $X=(x_1 \quad \ldots \quad x_n)^T$ where $x_i=(x_{i1}\quad \ldots \quad x_{id}); i=1, \ldots, n$ and $Y=(y_1, \ldots, y_n)$. For each $i \in \{1, \ldots, n \}$, $y_i$ is therefore written as
$y_i=r(x_i)+\epsilon_i$ where $\epsilon_i \sim\mathcal{N}(0, \lambda^2)$. 
If $\delta$ cannot be neglected, the field measurement 
is written as $y_i=f(x_i, \pmb{\theta})+\delta(x_i)+\epsilon_i$ otherwise $y_i=f(x_i, \pmb{\theta})+\epsilon_i$ which means that the code sufficiently represents the physical system. The problem of detecting the statistical significance of $\delta$ can be therefore considered as a Bayesian model comparison problem between two following models:
\begin{align}\label{eq:2}
\MF_0: y_i&=f(x_i, \pmb{\theta}_0)+\epsilon_i^0 \nonumber\\
\MF_1: y_i&=f(x_i, \pmb{\theta}_1)+\delta(x_i)+\epsilon_i^1.
\end{align}

where $\epsilon_i^0 \sim \mathcal{N}(0, \lambda_0^2)$ and $\epsilon_i^1 \sim \mathcal{N}(0, \lambda_1^2)$. If we consider a latent variable $\zeta_i$ which takes values $0$ or $1$, then the model selection problem in \eqref{eq:2} is equivalent to test the hypotheses $\mathcal{H}_0: \zeta=0$ versus $\mathcal{H}_1: \zeta=1$ when $y_i=f(x_i, \pmb{\theta}_1)+\zeta \delta(x_i)+\epsilon_i^1$. 

Under the definition \eqref{eq:2}, if $\ell_{\MF_0}(\pmb{\theta}_0, \lambda_0;y_i, x_i)$ and $\ell_{\MF_1}(\pmb{\theta}_1, \lambda_1, \delta;y_i, x_i)$ indicate the likelihood functions of $y_i$ under the models $\MF_0$ and $\MF_1$, both models can be embedded within an encompassing mixture model \citep{KKKMCPRJR2014} as follows

\begin{equation}\label{eq:1}
\MF_\alpha: y_i\sim \alpha \left(\ell_{\MF_0}(\pmb{\theta}_0, \lambda_0;y_i, x_i)\right)+(1-\alpha)\left(\ell_{\MF_1}(\pmb{\theta}_1, \lambda_1, \delta;y_i, x_i)\right)
\end{equation}
where mixture weight, $\alpha$ belongs to $[0, 1]$. 
The model $\MF_\alpha$ has been defined under the hypothesis that the likelihood of the model $\MF_1$ in \eqref{eq:2} is conditioned on $\delta$. More precisely, $\delta$ is considered as a parameter of $\MF_1$ on which we will later place a prior distribution and that will be a posteriori estimated. This hypothesis ensures independence of $y_i$s under the model $\MF_1$ which leads to perform the mixture model \eqref{eq:1}.
The model \eqref{eq:1} interprets each observation $y_i$ as having been generated according to $\MF_\alpha$ in the special cases where $\alpha=0$ or $\alpha=1$. The Bayesian decision step proceeds by comparing the posterior of $\alpha$ to a Dirac mass at $1$. \cite{KKKMCPRJR2014} demonstrated that $\alpha$ always converges to one as the sample size increases when $\MF_0$ is the true model from which the data has been generated.
In other words, when the posterior distribution of the weight $\alpha$ is concentrated close to one, this means that model $\MF_0$ is strongly supported by the data versus $\MF_1$. 

The complexity of the posterior distribution of the mixture model parameters depends on the complexity of the code structure. In practice, the code $f()$ can be either a  simple linear function such as $f(x,\pmb{\theta})=g(x)\pmb{\theta}$ \citep{GDMKPBAPEP2016} or a complex function whose running may be time-consuming and expensive. Hence, depending on the complexity of the function $f()$, we may launch either unlimited simulation runs or only a fairly limited number of runs can be carried out. In the latter case, out of the scope of this paper, the code itself has to be considered as unknown for runs not yet performed and a common solution is to make recourse to a Gaussian process realization as a surrogate to emulate the code \citep{sacks1989}. 


\section{Bayesian analysis of the mixture model}\label{sec:analysis}
In the case where the number of the code runs can be considered as unlimited with regards to computer tune requirements, the likelihood functions under model $\MF_0$ and $\MF_1$ can be therefore written as 
\begin{align}
\ell_{\MF_0}(\pmb{\theta}_0, \lambda_0;y_i, x_i)&=\nicefrac{\exp\left(-\frac{1}{2\lambda_0^2}(y_i-f(x_i,\pmb{\theta}_0))^2 \right)}{(2\pi\lambda_0^2)^{n/2}}\nonumber\\
\ell_{\MF_1}(\pmb{\theta}_1, \lambda_1, \delta;y_i, x_i)&=\nicefrac{\exp\left(-\frac{1}{2\lambda_1^2}(y_i-f(x_i,\pmb{\theta}_1)-\delta(x_i))^2\right)}{(2\pi\lambda_1^2)^{n/2}}\,.
\end{align}
For the linear function $f(x, \pmb{\theta})=g(x)\pmb{\theta}$, the likelihoods above are defined as
 \begin{align}
\ell_{\MF_0}(\pmb{\theta}_0, \lambda_0;y_i, x_i)&=\nicefrac{\exp\left(-\frac{1}{2\lambda_0^2}(y_i-g(x_i)\pmb{\theta}_0)^2 \right)}{(2\pi\lambda_0^2)^{n/2}}\nonumber\\
\ell_{\MF_1}(\pmb{\theta}_1, \lambda_1, \delta;y_i, x_i)&=\nicefrac{\exp\left(-\frac{1}{2\lambda_1^2}(y_i-g(x_i)\pmb{\theta}_1-\delta(x_i))^2\right)}{(2\pi\lambda_1^2)^{n/2}}\,.
\end{align}
The likelihood function of $Y=(y_1, \ldots, y_n)$ under the mixture model $\MF_\alpha$ will be
\begin{equation}
\ell_{\MF_\alpha}(\pmb{\theta}_0, \lambda_0, \pmb{\theta}_1, \lambda_1, \delta;Y, X)=\prod_{i=1}^n\left(\alpha\ell_{\MF_0}(\pmb{\theta}_0, \lambda_0;y_i, x_i)+(1-\alpha )\ell_{\MF_1}(\pmb{\theta}_1, \lambda_1, \delta;y_i, x_i) \right). 
\end{equation}
While the prior distribution is a key part of Bayesian inference, posterior distribution of the unknown mixture model parameters is derived from combining the prior with the probability distribution of new data \citep{AG2002}. The following section deals with the prior choice for the parameters of the mixture model $\MF_{\alpha}$ and we focus on the case where the code is a simple linear function. 

\subsection{Prior choice}
Since the variation of the discrepancy in Equation \eqref{eq:b} depends on the variation of the inputs $x_i$s, formally, the random function, or stochastic process $\delta(X)$ has a Gaussian process prior 
\begin{equation}\label{dltp}
\delta(X)\sim \mathcal{GP}(\mu_\delta, \Sigma_{\delta}); \quad \Sigma_{\delta}=\sigma_\delta^2\text{Corr}_{\gamma_\delta}(x_i, x_{i'})
\end{equation}

If we further assume that the process is stationary, it has a known constant mean which is usually considered zero \citep{CERCKIW2006} and a covariance
function only depending on the distance between the inputs. While the computer code is determined based on the expert opinions, $\mu_\delta$ can be indeed considered as a value a priori constant and known.   
The variance covariance matrix $\Sigma_{\delta}$ is written by the product of a scale parameter $\sigma^2_\delta$ that measure the discrepancy variances and a correlation function $\text{Corr}_{\gamma_\delta}(x_i, x_{i'})$ indexed by a parameter denoted by $\gamma_\delta$. The parameter $\sigma_\delta$ expresses the overall scale of the prior and $\gamma_\delta$ may allow a different length scale on each input dimension which means that for irrelevant inputs, $\gamma_\delta$ becomes small \citep{CKIWDB1998}. Here, we suppose that the correlation function is an exponential function defined by $\text{Corr}_{\gamma_\delta}(x_i, x_{i'})=\exp\left(-\nicefrac{|x_i-x_{i'} |}{\gamma_\delta} \right)$.

\subsubsection{Informative prior}  
Since the parameters are well identified, the prior distribution choice can be done based on the available information. However, the assessment of the information that can
be included in prior distributions is a key issue in eliciting a prior \citep{AG2002}. In practice, precise determination of an exact or even a parametrized distribution for the prior is not always an easy task even in the presence of fairly precise information about the parameters or with qualified experts \citep{CR2001,IASDCGJSLCKMJR2012}. %

\subsubsection{Noninformative prior}
If we do not have information about some uncertain parameters, we resort to noninformative priors. In the case where the prior distribution of the model parameter is proper, generally the posterior distribution of the mixture model $\MF_\alpha$ is proper as well. However, the improper noninformative prior can not be used on the mixture components because of the inconsistent behavior of the resulting posterior distribution \citep{JMMCPR2007}. 
The only way to enable the noninformative priors to be used for the mixture models is to first reparameterize the mixture components towards common-meaning and shared parameters \citep{KKKMCPRJR2014}. This allows us to use the noninformative priors for the common parameters without jeopardizing the propriety of the posterior distribution. 

For instance, suppose that the parameters $\pmb{\theta}$ and $\lambda$ are made common to both models $\MF_0, \MF_1$ in \eqref{eq:2}: the mixture \eqref{eq:1} can read as 

\begin{equation}\label{eq:6}
\MF_\alpha: y_i\sim \alpha \ell_{\MF_0}(\pmb{\theta}, \lambda;y_i, x_i)+(1-\alpha)\ell_{\MF_1}(\pmb{\theta}, \lambda, \delta;y_i, x_i)
\end{equation} 

where for each $y_i$, $i=1, \ldots, n$ 
\begin{align}\label{eq:7}
\ell_{\MF_0}(\pmb{\theta}, \lambda;y_i, x_i)&=\nicefrac{\exp\left(-\frac{1}{2\lambda^2}(y_i-g(x_i)\pmb{\theta})^2 \right)}{(2\pi\lambda^2)^{n/2}}\nonumber\\
\ell_{\MF_1}(\pmb{\theta}, \lambda, \delta;y_i, x_i)&=\nicefrac{\exp\left(-\frac{1}{2\lambda^2}(y_i-g(x_i)\pmb{\theta}-\delta(x_i))^2\right)}{(2\pi\lambda^2)^{n/2}}. 
\end{align}
and the likelihood of $\MF_{\alpha}$ is therefore given by

\begin{equation}\label{eq:8}
\ell_{\MF_\alpha}=\nicefrac{1}{(2\pi\lambda^2)^{n/2}}\prod_{i=1}^n \left(\alpha \exp\left(-\frac{1}{2\lambda^2}(y_i-g(x_i)\pmb{\theta})^2 \right)+(1-\alpha)\exp\left(-\frac{1}{2\lambda^2}(y_i-g(x_i)\pmb{\theta}-\delta(x_i))^2\right)\right).
\end{equation}

In this case, the Jeffreys prior for $\pmb{\theta}, \lambda$ is defined as $\pi(\pmb{\theta}, \lambda)=\nicefrac{1}{\lambda}$ \citep{OJBVDBS2001}. In the following theorem, we establish that under some conditions,  the noninformative Jeffreys prior choice for the shared parameters of the mixture model $\MF_{\alpha}$  results in a proper posterior distribution.

\begin{theorem}\label{theorem1}
Let $g: \mathbb{R}^d \to \mathbb{R}^d; d>1$ be a finite-valued function and vector $x_1,\ldots,x_n$ such that the rank of $\{g(x_1),\ldots,g(x_n)\}$ is $d$.
The posterior distribution associated with the prior $\pi(\pmb{\theta}, \lambda)=\nicefrac{1}{\lambda}$ and with the likelihood \eqref{eq:8} is proper when
\begin{itemize}
\item for any $0<k<1$, the hyperparameter $\sigma_{\delta}^2$ of the discrepancy prior distribution is reparameterized as $\sigma_{\delta}^2=\nicefrac{\lambda^2}{k} $ and so $\Sigma_{\delta}=(\nicefrac{\lambda^2}{k})  \text{Corr}_{\gamma_\delta}$ when $\text{Corr}_{\gamma_\delta}$ is the correlation function of $\delta$. This reparameterization essentially implies that the variance of the computer code discrepancy is a priori supposed to be larger than the variance of the field measurement white noise; 
\item the mixture weight $\alpha$ has a proper beta prior $\mathcal{B}(a_0, a_0)$; 
\item $\gamma_\delta$ has a proper Beta prior $\mathcal{B}(b_1, b_2)$.
\item proper distribution is used on $k$.
\end{itemize}

\end{theorem}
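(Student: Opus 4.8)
The plan is to prove propriety by showing that the normalizing constant of the posterior (the marginal likelihood) is finite. Writing the full prior as $\pi(\pmb\theta,\lambda)\,\pi(\delta\mid\lambda,k,\gamma_\delta)\,\pi(\alpha)\,\pi(\gamma_\delta)\,\pi(k)$, with $\pi(\pmb\theta,\lambda)=1/\lambda$ and $\pi(\delta\mid\lambda,k,\gamma_\delta)$ the $\mathcal N(\mu_\delta\mathbf 1,(\lambda^2/k)\,\mathrm{Corr}_{\gamma_\delta})$ density obtained from \eqref{dltp} after the reparameterization $\sigma_\delta^2=\lambda^2/k$, this amounts to bounding
$$Z=\int \frac1\lambda\,\pi(\delta\mid\lambda,k,\gamma_\delta)\,\pi(\alpha)\,\pi(\gamma_\delta)\,\pi(k)\,\ell_{\MF_\alpha}\,\dd\pmb\theta\,\dd\lambda\,\dd\delta\,\dd\alpha\,\dd\gamma_\delta\,\dd k<\infty.$$
First I would linearize the mixture by expanding the product in \eqref{eq:8} over all assignments $S\subseteq\{1,\dots,n\}$ of observations to the $\MF_0$ component:
$$\ell_{\MF_\alpha}=\frac{1}{(2\pi\lambda^2)^{n/2}}\sum_{S}\alpha^{|S|}(1-\alpha)^{n-|S|}\prod_{i\in S}e^{-\frac{1}{2\lambda^2}(y_i-g(x_i)\pmb\theta)^2}\prod_{i\notin S}e^{-\frac{1}{2\lambda^2}(y_i-g(x_i)\pmb\theta-\delta(x_i))^2}.$$
Since the sum has only $2^n$ terms, and the $\alpha$-integral of each $\alpha^{|S|}(1-\alpha)^{n-|S|}$ against the $\mathcal B(a_0,a_0)$ prior is a finite Beta constant, it suffices to bound, for each fixed $S$, the integral $Z_S$ of the remaining factors over $(\pmb\theta,\lambda,\delta,k,\gamma_\delta)$.

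Next I would integrate out the Gaussian process $\delta$ analytically. For a fixed $S$, the coordinates $\delta(x_i)$ with $i\in S$ appear only in the prior and integrate to their marginal, while those with $i\notin S$ combine with the corresponding likelihood factors by Gaussian conjugacy. The essential point, and the reason the reparameterization $\sigma_\delta^2=\lambda^2/k$ is imposed, is that the resulting marginal law of $Y$ is Gaussian with mean $G\pmb\theta$ (up to the known shift $\mu_\delta\mathbf 1$ on the $S^c$-block, where $G$ has rows $g(x_i)$) and covariance $\lambda^2\,\Omega_S(k,\gamma_\delta)$, with
$$\Omega_S(k,\gamma_\delta)=\begin{pmatrix}I_{S} & 0\\ 0 & I_{S^c}+\tfrac1k\,[\mathrm{Corr}_{\gamma_\delta}]_{S^c}\end{pmatrix}$$
positive definite and, crucially, independent of $\lambda$ and $\pmb\theta$. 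Thus after marginalization every occurrence of $\lambda$ enters only through the scalar $\lambda^2$ multiplying a fixed covariance, which is exactly the degree-two homogeneity in $\lambda$ needed to make the improper prior $1/\lambda$ admissible.

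The third step integrates $\pmb\theta$ against its flat prior. Completing the square yields a $d$-dimensional Gaussian integral proportional to $\lambda^{d}\,|G^\top\Omega_S^{-1}G|^{-1/2}\exp(-\mathrm{RSS}_S/2\lambda^2)$, where $\mathrm{RSS}_S$ is the generalized residual sum of squares. Here the rank hypothesis enters decisively: because $\{g(x_1),\dots,g(x_n)\}$ has rank $d$, the matrix $G^\top\Omega_S^{-1}G$ is nonsingular, so the $\pmb\theta$-integral converges and produces the factor $\lambda^{d}$. Collecting the powers $\lambda^{-1}$ (Jeffreys), $\lambda^{-n}$ (Gaussian normalization) and $\lambda^{+d}$, and substituting $u=1/\lambda^2$, reduces the $\lambda$-integral to the Gamma integral $\int_0^\infty u^{(n-d)/2-1}e^{-\mathrm{RSS}_S u/2}\,\dd u$, finite precisely when $n>d$ (entailed by the rank-$d$ assumption, which forces $n\ge d$, in the strict regime $n>d$) and $\mathrm{RSS}_S>0$; the latter holds whenever $Y\notin\mathrm{col}(G)$, which is automatic almost surely under the continuous noise model.

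It then remains to integrate over $k$ and $\gamma_\delta$ against their proper priors, the integrand being a product of $|\Omega_S|^{-1/2}$, $|G^\top\Omega_S^{-1}G|^{-1/2}$ and $\mathrm{RSS}_S^{-(n-d)/2}$, continuous in $(k,\gamma_\delta)$ on $(0,1)\times[0,1]$. For the exponential correlation with distinct inputs, $\mathrm{Corr}_{\gamma_\delta}$ stays positive definite on the support of the $\mathcal B(b_1,b_2)$ prior, so no degeneracy arises from $\gamma_\delta$. The one delicate point, which I expect to be the main obstacle, is the behaviour as $k\to0$: there $\tfrac1k[\mathrm{Corr}_{\gamma_\delta}]_{S^c}$ blows up, driving $|\Omega_S|^{-1/2}\to0$ but potentially making $|G^\top\Omega_S^{-1}G|^{-1/2}$ diverge, so one must verify that the net singularity is of polynomial order and hence integrable against the proper prior on $k$. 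Establishing this bound, together with the analytic $\delta$-marginalization that yields the all-important $\lambda^2$-homogeneity of $\Omega_S$, is the technical heart of the argument; once both are in place, summing the finitely many finite contributions $Z_S$ gives $Z<\infty$, and hence a proper posterior.
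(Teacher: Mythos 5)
Your plan is correct and, at the global level, it is the same proof as the paper's: expand the $n$-fold mixture product over the $2^n$ assignments of observations to components (the paper's subsets $\mathcal{A}^{(l)}_j$ are your $S^c$), integrate out $\delta$ by Gaussian conjugacy --- with the reparameterization $\sigma_\delta^2=\nicefrac{\lambda^2}{k}$ playing exactly the role you identify, namely keeping every covariance a scalar multiple of $\lambda^2$ so that the Jeffreys prior $\nicefrac{1}{\lambda}$ can be absorbed --- then integrate $\pmb{\theta}$, reduce the $\lambda$-integral to a Gamma integral which is finite for $n>d$ with a.s.\ strictly positive residual constants, integrate $\alpha$ into Beta constants, and conclude with the proper priors on $k$ and $\gamma_\delta$. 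Where you genuinely depart from the paper is the $\pmb{\theta}$-step: you complete the square once against the combined precision $G^{T}\Omega_S^{-1}G$, which is nonsingular as soon as the full $n\times d$ design has rank $d$, because $\Omega_S^{-1}$ is positive definite on all of $\mathbb{R}^n$. The paper instead keeps the two quadratic forms separate, bounds the $\pmb{\theta}$-integral via H\"older's (Cauchy--Schwarz) inequality, and must then claim positive definiteness of $g^T(x_i)_{\mathcal{S}-\mathcal{A}^{(l)}_j}g(x_i)_{\mathcal{S}-\mathcal{A}^{(l)}_j}$ and of $g^T(x_i)_{i\in\mathcal{A}^{(l)}_j}\bigl(\mathbb{I}_{n_j^l}-\bigl(\pmb{\Sigma}_{j,l}\bigr)^{-1}\bigr)g(x_i)_{i\in\mathcal{A}^{(l)}_j}$ separately for \emph{every} subset; these claims fail whenever a subset contains fewer than $d$ points, so your exact joint marginalization is not only simpler but repairs a weak point of the published argument.

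The one step you leave open --- boundedness as $k\to 0$ --- does close, and is worth recording since the paper glosses over it as well (its final bound still contains $k$-dependent determinant and $\text{Const}_{(2,\cdot)}$ factors, yet it dispatches the $k$-integral by properness of $\pi(k)$ alone). In your formulation all $k$-dependence sits in $|\Omega_S|^{-\nicefrac{1}{2}}\,|G^{T}\Omega_S^{-1}G|^{-\nicefrac{1}{2}}\,\mathrm{RSS}_S^{-\nicefrac{(n-d)}{2}}$. On the $S^c$ block $\Omega_S^{-1}=k\,(k\mathbb{I}+\mathrm{Corr}_{\gamma_\delta})^{-1}\sim k\,\mathrm{Corr}_{\gamma_\delta}^{-1}$, so with $r=\mathrm{rank}(G_S)$ one gets $|\Omega_S|^{-\nicefrac{1}{2}}\asymp k^{\nicefrac{|S^c|}{2}}$ and $|G^{T}\Omega_S^{-1}G|^{-\nicefrac{1}{2}}=O\bigl(k^{-\nicefrac{(d-r)}{2}}\bigr)$, the latter because $G^{T}\Omega_S^{-1}G\to G_S^{T}G_S$ plus an order-$k$ term that restores the missing $d-r$ rank directions (it does so since the full design has rank $d$). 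Almost surely either $\mathrm{RSS}_S$ has a strictly positive limit, in which case the net exponent of $k$ is $\nicefrac{(|S^c|-(d-r))}{2}\ge 0$ because the $S^c$ rows must supply the missing rank, so $|S^c|\ge d-r$; or $Y_S\in\mathrm{col}(G_S)$, which a.s.\ forces $r=|S|$ and $\mathrm{RSS}_S\asymp k$, giving net exponent $\nicefrac{(r-|S|)}{2}=0$. In every case the integrand is bounded near $k=0$, hence integrable against any proper prior on $k$, and continuity in $\gamma_\delta$ over the support of the $\mathcal{B}(b_1,b_2)$ prior handles the last integral. With that computation added, your proof is complete and in fact somewhat tighter than the paper's.
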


\begin{proof}
See Appendix \ref{appendix}.
\end{proof}

An alternative prior choice for $\pmb{\theta}, \lambda$, can consist of a Jeffreys prior for $\pmb{\theta}$ and an improper maximum entropy prior for $\lambda$ that evidently results in a proper posterior distribution for the mixture model. 
If the entropy of the prior distribution $\pi(\lambda)$ relative to a given distribution $\pi_0(\lambda)$ is defined as 
\begin{equation*}
\mathcal{E}(\pi(\lambda))=-\int \pi(\lambda) \log(\nicefrac{\pi(\lambda)}{\pi_0(\lambda)})\mathrm{d}(\lambda),
\end{equation*}

then a maximum entropy (or minimum information \citep{JSP2010}) prior distribution $\pi^*(\lambda)=\underset{\pi}{\arg\max}\mathcal{E}(\pi(\lambda))$ is defined as 
\begin{equation}
\pi^*(\lambda)\propto \pi_0(\lambda)\exp(-\nicefrac{\beta}{\lambda}).
\end{equation} 

where $\mathbb{E}_{\lambda}(\nicefrac{1}{\lambda})=\nicefrac{1}{\beta}$ and $\beta <\infty$.This choice is basically obtained based on the maximum entropy prior proposed by \cite{JSP2010} and $\pi_0(\lambda)$ denotes a benchmark distribution against which $\pi(\lambda)$ is measured. If we consider the Jeffreys prior for $\lambda$ as $\pi_0(\lambda)=\nicefrac{1}{\lambda}$, then we obtain $\pi^*(\lambda)\propto \nicefrac{1}{\lambda}\exp(-\nicefrac{\beta}{\lambda})$. Because the posterior results obtained based on the improper prior $\pi(\pmb{\theta},\lambda)=\nicefrac{1}{\lambda}$ are similar to those obtained for maximum entropy prior, in the following section we only show the posterior inference based on $\pi(\pmb{\theta},\lambda)=\nicefrac{1}{\lambda}$.

\subsection{MCMC algorithm}\label{sec:algo}         
 For the simulation studies, \cite{KKKMCPRJR2014} recommended the implementation of a Metropolis-Hastings algorithm that generates samples from the posterior distribution over the parameter space instead of using Gibbs sampler.
 This is basically because of the convergence difficulties that the Gibbs sampling faces especially in the case of large samples. Here for the mixture model $\MF_\alpha$, the number of unknown entries is $d+n+4$. While the vector length $(\delta(x_1), \delta(x_2), \ldots, \delta(x_n))$ increases with the sample size, sampling $\delta$ based on the standard Metropolis-Hastings algorithm encounters difficulties. Essentially, Metropolis-Hastings based on independent proposal is not easy to explore such high-dimensional space. On the one hand, finding a proposal distribution scale that results in a proper proposal acceptance rate is almost impossible in the case of using a random walk Metropolis-Hastings. Our algorithm is therefore based on Metropolis-Hastings within Gibbs algorithm (noted by Metropolis-within-Gibbs) that explores the subspace at a time. The algorithm contains the following steps for simulating the model discrepancy after having randomly initialized all model parameters based on their support

 \begin{algorithm}[H]
  \SetSideCommentLeft
  \DontPrintSemicolon
  \For{t=1,\ldots,T}{  
  \begin{description}
\item[a)] 
$\delta^{(t)}$ is sampled from $\pi(\delta|\pmb{y},\bx,\pmb{\theta}^{(t-1)},\lambda^{(t-1)},k^{(t-1)},\gamma_\delta^{(t-1)},\alpha^{(t-1)})$ as follows.
\begin{description}
\item[a.1)] For $i=1, \ldots, n; j=0,1$, generate auxiliaire variable $\zeta_i^{(t)}$ from 
$$\mathbb{P}(\zeta_i=j|y_i,x_i,\delta^{(t-1)},\pmb{\theta}^{(t-1)},\lambda^{(t-1)},k^{(t-1)},\gamma_\delta^{(t-1)})\,.$$
\item[a.2)] Generate $\delta^{(t)}$ according to the conditional posterior distribution 
$$\delta^{(t)}|\pmb{y},\bx,\zeta^{(t)}=1,\pmb{\theta}^{(t-1)},\lambda^{(t-1)},k^{(t-1)},\gamma_\delta^{(t-1)},\alpha^{(t-1)}\sim\mathcal{N}_{n}(\hat{\mu}_\delta, \hat{\Sigma}_\delta)\,.$$
\end{description}
\item[b)] Generate $\pmb{\theta}^{(t)}|\pmb{y}, \bx, \pmb{\zeta}^{(t)}, \delta^{(t)}, \lambda^{(t-1)}, k^{(t-1)}, \alpha^{(t-1)} \sim\mathcal{N}_d(\hat{\mu}_{\pmb{\theta}}, \hat{\Sigma}_{\pmb{\theta}})$.
\item[c)] Generate $\lambda^{(t)}|\pmb{y}, \bx, \pmb{\zeta}^{(t)}, \delta^{(t-1)}, \pmb{\theta}^{(t)}, k^{(t-1)}, \alpha^{(t-1)} \sim \mathcal{IG}(\hat{a}_\lambda, \hat{b}_\lambda)$.
\item[d)] Generate $\alpha^{(t)}|\pmb{y}, \bx, \pmb{\zeta}^{(t)}, \delta^{(t)}, \pmb{\theta}^{(t)}, \lambda^{(t)}, k^{(t-1)}\sim \mathcal{B}eta(n-m+a_0, m+a_0)$.
\item[e)] Generate $k^{(t)}$ from a random walk Metropolis-Hastings algorithm conditionally to $(\pmb{y}, \bx, \pmb{\zeta}^{(t)}, \delta^{(t)}, \pmb{\theta}^{(t)}, \lambda^{(t)},\alpha^{(t)}, \gamma_\delta^{(t-1)})$.
\item[f)] Generate $\gamma_\delta^{(t)}$ from a random walk Metropolis-Hastings algorithm conditionally to $(\pmb{y}, \bx, \pmb{\zeta}^{(t)}, \delta^{(t)}, \pmb{\theta}^{(t)}, \lambda^{(t)}, \alpha^{(t)},k^{(t)})$.
\end{description}
  }
  \caption{Metropolis-within-Gibbs algorithm}
 \label{algo:mwg}
\end{algorithm}

Algorithm \ref{algo:mwg} sketches the Metropolis-within-Gibbs algorithm.
The details for the conditional distribution are provided below.

The auxiliary variable $\zeta_i$ associated to each observation indicates its
component. $\zeta_i$ and its conditional distribution has been defined so that to obtain more gains in efficiency compared to standard Markov chain Monte Carlo \cite{DMH1998}.
$\mathbb{P}(\zeta_i=j|.)$ is defined as 
\footnotesize \begin{equation}\label{zeta}
\mathbb{P}(\zeta_i=j|.)\propto \left(\alpha^{(t-1)}f_{\MF_0}(y_i|x_i,\pmb{\theta}^{(t-1)},\lambda^{(t-1)})\right)^{\mathbb{I}_{j=0}}\left((1-\alpha)^{(t-1)}f_{\MF_1}(y_i|x_i,\delta^{(t-1)},\pmb{\theta}^{(t-1)},\lambda^{(t-1)},k^{(t-1)},\gamma_\delta^{(t-1)})\right)^{1-\mathbb{I}_{j=0}}
\end{equation}\normalsize
where $\mathbb{I}_{j=1}$ is the indicator function, having the value 1 for $j=0$ and the value 0 otherwise. 

 In order to compute the conditional posterior distribution $\pi(\delta|.)$, we first determine the joint distribution of $(\delta, \pmb{y_m})^T$ in which 
$\delta =(\delta(x_1), \ldots, \delta(x_n))$, $\pmb{y_m}=(y_{i_1}, \ldots, y_{i_m})$ and $\pmb{x_m}=(x_{i_1}, \ldots, x_{i_m})$. $m$ is the number of observations $y_i$ for which $\zeta_i=1$ and $y_{i_1}, \ldots, y_{i_m}$ are defined by $y_{i_l}=g(x_{i_l})\pmb{\theta}+\delta(x_{i_l})$. Under $\delta$ prior distribution \eqref{dltp}, we then have

\begin{equation*}
\begin{pmatrix} 
\delta \\
\pmb{y_m}  
\end{pmatrix}\sim \mathcal{N}_{n+m}\left(\begin{pmatrix} 
\mu_{\delta} \\
g(\pmb{x_m})\pmb{\theta}+{\mu_{\delta}}_m  
\end{pmatrix}, \begin{pmatrix} 
\Sigma_{\pmb{\delta},\pmb{\delta}} & \Sigma_{\pmb{\delta},\pmb{y_m}} \\
\Sigma_{\pmb{\delta},\pmb{y_m}}^T & \Sigma_{\pmb{y_m},\pmb{y_m}} 
\end{pmatrix}\right)
\end{equation*}
in which
\begin{align*}
\pmb{x_m}&=(x_{i_1}, \ldots, x_{i_m})^T\\
{\mu_{\delta}}_m&=({\mu_{\delta}}_{i_1}, \ldots, {\mu_{\delta}}_{i_m})^T\\
[\Sigma_{\pmb{\delta},\pmb{\delta}}]_{k,k'}&=\sigma_\delta^2\text{Corr}_\delta(x_k,x_{k'})\\
[\Sigma_{\pmb{\delta},\pmb{y_m}}]_{k,i_l}&=\sigma_\delta^2\text{Corr}_\delta(x_k,x_{i_l})\\
[\Sigma_{\pmb{y_m},\pmb{y_m}}]_{i_l,i_{l'}}&=\lambda^2+\sigma_\delta^2\text{Corr}_\delta(x_{i_l},x_{i_{l'}})
\end{align*}
since
\begin{align*}
[\Sigma_{\pmb{\delta},\pmb{y_m}}]_{k,i_l}&=\text{Cov}(\delta(x_k),g(x_{i_l})\pmb{\theta}+\delta(x_{i_l})+\epsilon_i)\\
&=\text{Cov}(\delta(x_k),\delta(x_{i_l}))\\
&=\sigma_\delta^2\text{Corr}_\delta(x_k,x_{i_l}).
\end{align*}
for all $k,k'=1,\ldots,n$, $l,l'=1, \ldots, m$. The conditional distribution $\pi(\delta|\pmb{y},\bx,\zeta=1,\pmb{\theta},\lambda,k,\gamma_\delta,\alpha)$ is therefore given by

\begin{align}\label{musigdlt}
\pi(\delta &|\pmb{y},\bx,\zeta=1,\pmb{\theta},\lambda,k,\gamma_\delta,\alpha)\propto \mathcal{N}_n(\hat{\mu}_\delta, \hat{\Sigma}_\delta)\nonumber\\
\hat{\mu}_\delta&=\mu_{\delta}+\Sigma_{\pmb{\delta},\pmb{y_m}}\Sigma_{\pmb{y_m},\pmb{y_m}}^{-1}[\pmb{y_m}-(g(\pmb{x_m})\pmb{\theta}+{\mu_{\delta}}_m)]\nonumber\\
\hat{\Sigma}_\delta&=\Sigma_{\pmb{\delta},\pmb{\delta}}-\Sigma_{\pmb{\delta},\pmb{y_m}}\Sigma_{\pmb{y_m},\pmb{y_m}}^{-1}\Sigma_{\pmb{y_m},\pmb{\delta}}
\end{align}

For the parameters $\pmb{\theta}, \lambda$ and $\alpha$, the conditional posterior distributions are given as 

\begin{align}\label{G:2}
\pmb{\theta}|\pmb{y}, \bx, \pmb{\zeta}, \delta, \lambda, k, \alpha&\sim\mathcal{N}_d(\hat{\mu}_{\pmb{\theta}}, \hat{\Sigma}_{\pmb{\theta}})\nonumber\\
\hat{\mu}_{\pmb{\theta}}&=\left(g^T(\pmb{x_{n-m}})g(\pmb{x_{n-m}})+g^T(\pmb{x_{m}})g(\pmb{x_{m}}) \right)^{-1}\nonumber\\
&\times [g^T(\pmb{x_{n-m}})\pmb{y_{n-m}}+g^T(\pmb{x_{m}})\pmb{y_{m}}-g^T(\pmb{x_{m}})\delta(\pmb{x_{m}})]\nonumber\\
\hat{\Sigma}_{\pmb{\theta}}&=\lambda^2\left(g^T(\pmb{x_{n-m}})g(\pmb{x_{n-m}})+g^T(\pmb{x_{m}})g(\pmb{x_{m}}) \right)^{-1}\nonumber\\
\lambda|\pmb{y}, \bx, \pmb{\zeta}, \delta, \pmb{\theta}, k, \alpha&\sim \mathcal{IG}(\hat{a}_\lambda, \hat{b}_\lambda)\nonumber\\
\hat{a}_\lambda&=\nicefrac{(n+m-1)}{2}\nonumber\\
\hat{b}_\lambda&=\nicefrac{\left(\sum_{\zeta_i=0}(y_i-g(x_i)\pmb{\theta})^2+\sum_{\zeta_i=1}(y_i-g(x_i)\pmb{\theta}-\delta(x_i))^2+k(\delta(\pmb{x_{m}})-\mu_\delta^{(m)})^T\text{Corr}_{\pmb{x_{m}}}^{-1}(\delta(\pmb{x_{m}})-\mu_\delta^{(m)})\right)}{2}\nonumber\\
\alpha|\pmb{y}, \bx, \pmb{\zeta}, \delta, \pmb{\theta}, \lambda, k&\sim \mathcal{B}eta(n-m+a_0, m+a_0)
\end{align}
where $\pmb{y_{n-m}}=(y_{i_1}, \ldots, y_{i_{n-m}})$ and $\pmb{x_{n-m}}=(x_{i_1}, \ldots, x_{i_{n-m}})$ are $(n-m)-$components column vector of $y_i$ and $(n-m)\times d$ matrix of $x_i$ for which $\zeta_i=0$. 
Since $\delta(X)$ follows a Gaussian process centered on $n-$components vector of $\mu_\delta$ and covariance matrix $\Sigma_\delta=\nicefrac{\lambda^2}{k}\text{Corr}$, then the marginal distribution over the subset $\{ i: \zeta_i=1\}$ of $\delta(x_i)$ is a Gaussian process with the parameters $\mu_\delta^{(m)}$ and $\Sigma_{\pmb{x_{m}}}$. We denote by $\mu_\delta^{(m)}, \Sigma_{\pmb{x_{m}}}$, $m-$components column vector of $\mu_\delta^{(m)}$ and $m\times m$ covariance matrix $\Sigma_{\pmb{x_{m}}}=\nicefrac{\lambda^2}{k}\text{Corr}_{\pmb{x_{m}}}$ obtained by dropping the irrelevant variables, from the mean vector and the covariance matrix of $\delta(X)$. 

The posterior distributions of two parameters $k$ and $\gamma_\delta$ have not closed form and they are hence generated by two random walk Metropolis-Hastings steps. 
The random walk proposal scales $s_{k}, s_{\gamma_\delta}$, are determined according to a proposal scale calibration step which consists of automatically tuning the scales towards optimal acceptance rates \citep{OGRAGWRG1997,KKJELCPR2017}. Then each parameter is estimated using simulated samples from its posterior distribution by making use of the product of the full mixture density function to the prior distribution and independent proposal distribution.  

\section{Illustration}\label{sec:illu}

In this section, we examine the performance of the mixture estimation approach to differentiate between the competing models. In the following, we proceed with two examples starting with some simulated datasets. We therefore analyze some 
datasets simulated once from the models $\MF_0$ another time from $\MF_1$. Thereafter, a real-life dataset concerning an industrial case study is investigated. We will see that the Metropolis-within-Gibbs defined in the previous section produces convergent posterior simulations with acceptance rate close to the optimal one \citep{OGRAGWRG1997}.

\begin{exemp}\label{ex:1}
As a first example, we reassess some simulated datasets initially defined and analyzed by \cite{GDMKPBAPEP2016} who applied the intrinsic Bayes factor to make a decision about $\MF_0$ or $\MF_1$ and to provide the best approximation of the physical system. As mentioned before, \cite{GDMKPBAPEP2016} illustrated a strong confounding between the code calibration parameter and the model discrepancy that complicates the model selection. More precisely, the Bayes factor depends highly on the correlation length and for large $\gamma_\delta$ values, the intrinsic Bayes factor takes value close to 1 and remains in favor of $\MF_0$ even though the analyzed sample has been simulated from the discrepancy-corrected code. 

Here, we estimate the mixture model $\MF_\alpha$ defined in \eqref{eq:8} by analyzing two situations where the dependent variable $y$ is simulated according to the model $\MF_0$  and then to the model $\MF_1$. If $x=\{\nicefrac{i}{n}\}_{i=1}^n$, we define the function $g(x)$ with a degree 2 polynomial code in $x$ as $(\mathbf{1}, x, x^2)$. The true value of $\pmb{\theta}, \lambda$ and $k$ are taken equal to: $\pmb{\theta}^*=(4,1,2)^T, \lambda^*=0.1$ and  $k^*=0.1$. The correlation function of the Gaussian process prior modeling the discrepancy is constructed according to the exponential function. 

We simulate samples of size $n$ for different choices of the true value of the parameter $\gamma_\delta$, $\gamma_\delta^*$ between $0$ and $1$. We afterwards analyze the samples by supposing $\text{Corr}_{\gamma_\delta}$ to be unknown, and then assign a proper beta prior on $\gamma_\delta$, $\gamma_\delta\sim \mathcal{B}eta(b_1,b_2)$ to compute the posterior estimate of the correlation function. We then estimate the mixture distribution $\MF_\alpha$ and analyze the sensitivity of the results to the choice of $b_1$ and $b_2$. For the hyperparameter $k$, we adopt the prior distribution $\mathcal{B}eta(2,18)$ in the case where the data points are simulated from $\MF_1$. The hyperparameter $a_0$ of the mixture weight, $\alpha$, prior $\mathcal{B}eta(a_0, a_0)$ is equal to $0.5$. 

\subsection{Bayesian inference of $\MF_\alpha$ for samples simulated from $\MF_0$}
We simulate $50$ datasets of size $n=30$ from $\MF_0: y_i=g(x)\pmb{\theta}^*+\epsilon_i$. We then compute the posterior draws of the mixture model parameters by considering the following priors for the hyperparameters of $\MF_1$: $\delta\sim \mathcal{GP}(0_n, \Sigma_{\delta})$, $\gamma_\delta \sim \mathcal{B}eta(1,1)$ and  $k \sim \mathcal{B}eta(1,1)$.  
\begin{figure}[ht]
\includegraphics[width=0.45\textwidth,height=4cm]{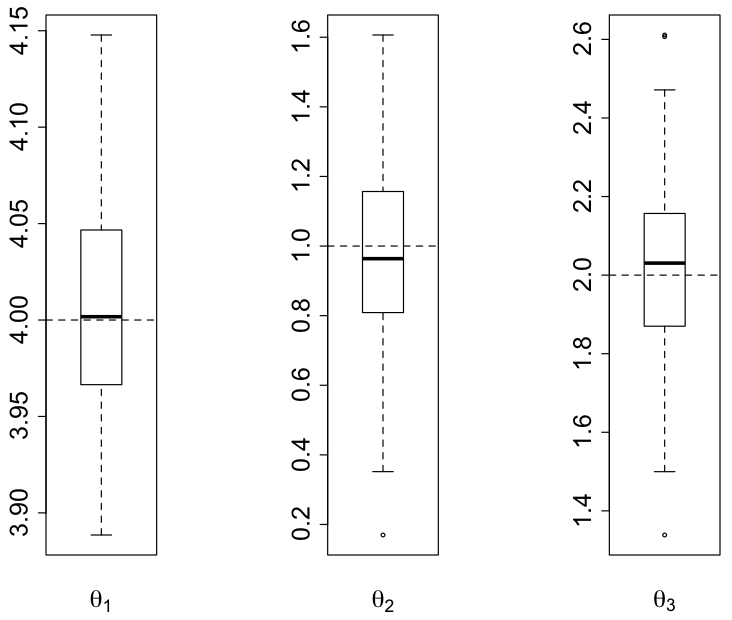}\includegraphics[width=0.45\textwidth,height=4cm]{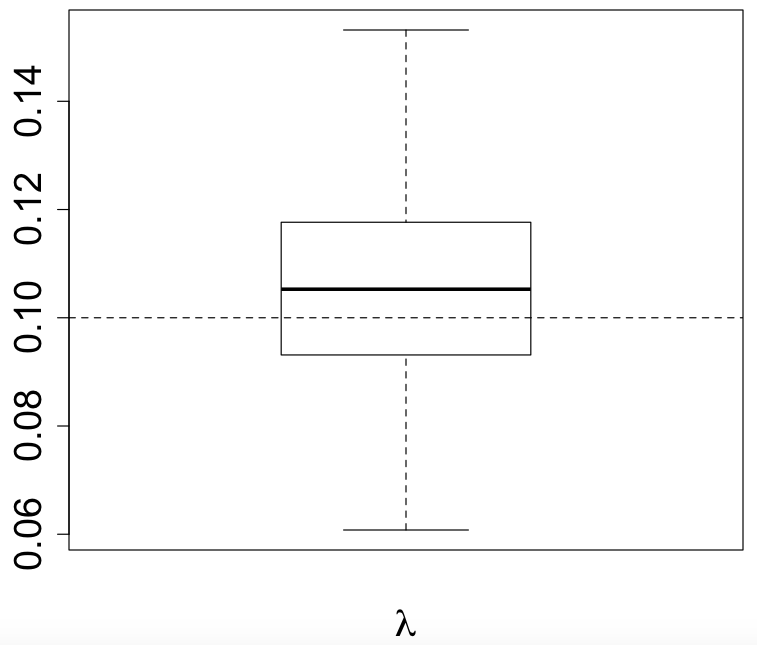}
\includegraphics[width=0.5\textwidth,height=4cm]{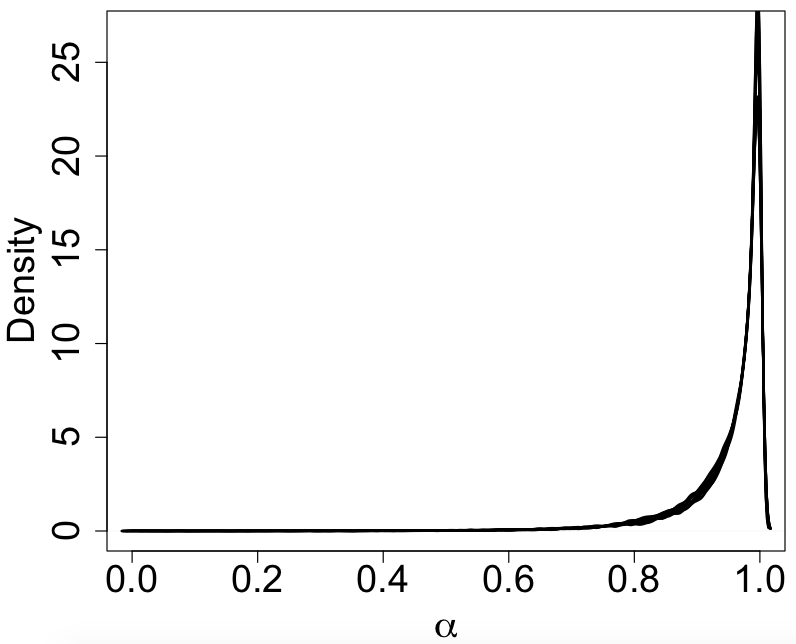}\includegraphics[width=0.5\textwidth,height=4cm]{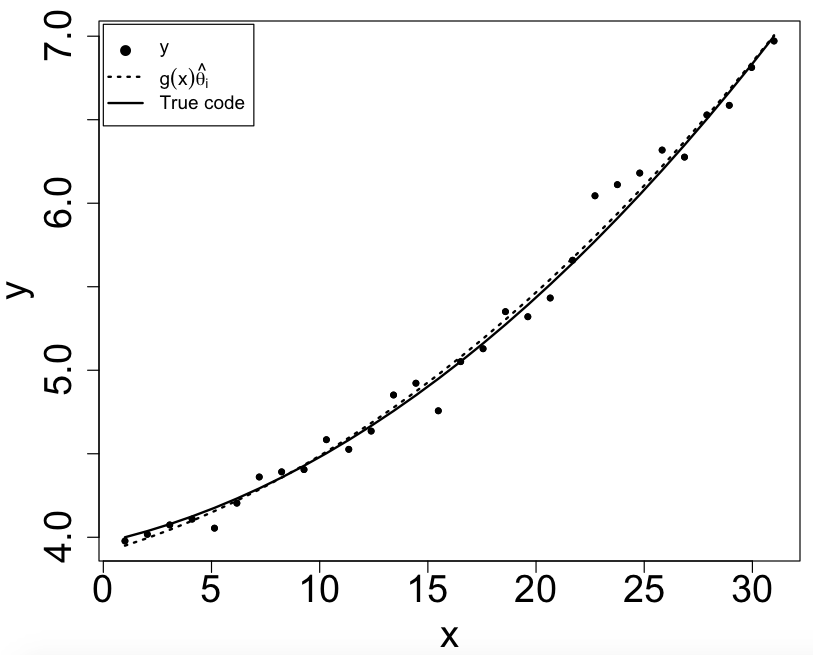}
\caption{{\bf Bayesian inference of $\MF_\alpha$ for samples simulated from $\MF_0$ (Example \ref{ex:1}):} For $50$ samples of size $30$ simulated from $\MF_0: y_i=g(x)\pmb{\theta}^*+\epsilon_i$, ({\em Top}) Boxplots of $50$ posterior point estimates of $\pmb{\theta}$ and $\lambda$ when each point estimate is computed by the sample average of the MCMC draws. The horizontal dotted lines indicate the true values considered for simulating datasets. ({\em Bottom}) On the left, $50$ empirical densities of the MCMC draws of $\alpha$ and on the right, comparison between data points $y_i$ plotted versus $x_i$ (black points), the posterior estimate of $\MF_0$ obtained by averaging over MCMC iterations and the true code (solide line).  
The number of MCMC iterations is $2\times10^4$ with a burn-in of $10^3$ iterations.}
\label{figure:m0}
\end{figure}

Figure \ref{figure:m0} displays the consistency and convergence of the posterior means of $\pmb{\theta}$ and $\lambda$ to the true parameter values for $50$ datasets.
As shown by the figure, the posterior mode of $\alpha$ (the weight of $\MF_0$ within the mixture model) is slightly close to one, indicating that the true model is supported by the data points. Furthermore, the calibrated code (the dotted line) looks very close to the true code (the solid line) from which the measurements have been simulated. Note that the calibrated code has been obtained based on the posterior mean of $\pmb{\theta}$. 

\subsection{Bayesian inference of $\MF_\alpha$ for samples simulated from $\MF_1$}
\begin{description}

\item[Sensitivity of $\lambda$ and $\alpha$ to the prior choice of the correlation length, $\gamma_\delta$ :] In order to compare the behavior of $\alpha$ and $\lambda$ for two prior choices : $\gamma_\delta\sim \mathcal{B}eta(1,1)$ and informative prior, we compute the posterior draws of the mixture model parameters by analyzing $50$ samples of size $50$ simulated from $\MF_1$ when $\gamma_\delta^*$ varies between $0.01$ and $0.9$, $\delta^*(x)\sim \mathcal{GP}(0_n, \Sigma_{\delta})$ and $ \lambda^*=0.1$. Figure \ref{figure:m1_1} illustrates  that the posterior means of $\alpha$ are close to zero when $\gamma_\delta^*\geq 0.1$. This means that the true model $\MF_1$ is supported by $\alpha$ for the data in this case. 
The increase of the means of $\alpha$ is however illustrated by the figure when $\gamma_\delta\in (0,0.1)$. On the other hand, $\lambda$ is overestimated when $\gamma_\delta<0.1$ and the more $\gamma_\delta$ increases, the more $\lambda$ tends to the true value $0.1$. For both correlation length priors, the posterior estimates of $\alpha$ and $\lambda$ have almost the same behaviors.  

\begin{figure}[ht]
\includegraphics[width=0.5\textwidth,height=4cm]{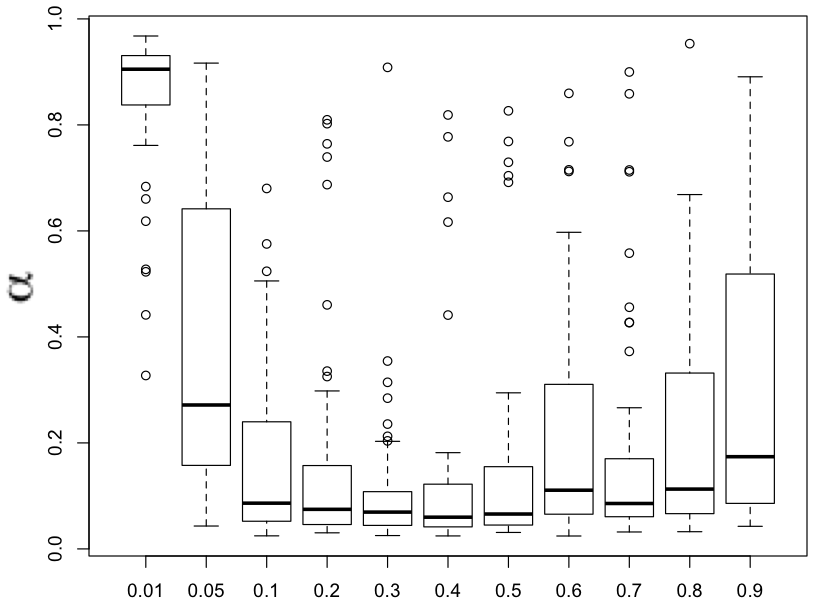}\includegraphics[width=0.5\textwidth,height=4cm]{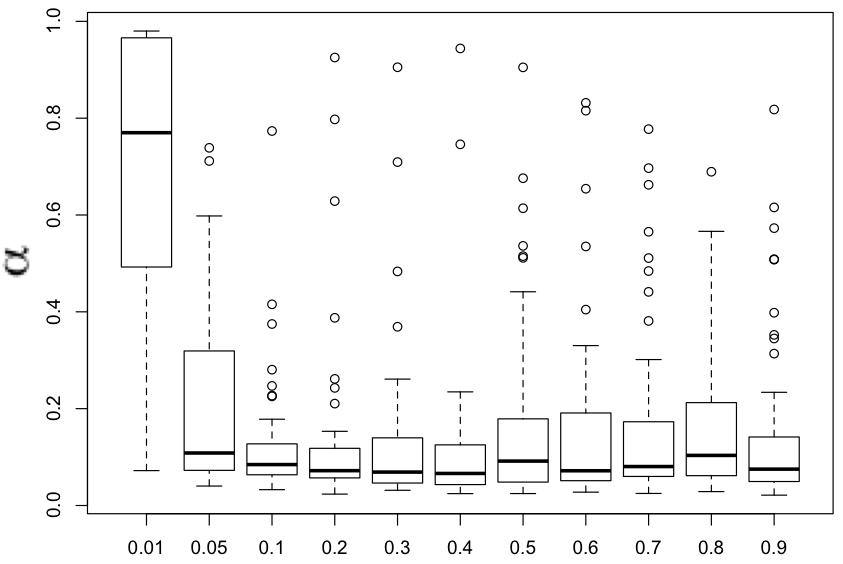}

\includegraphics[width=0.5\textwidth,height=4cm]{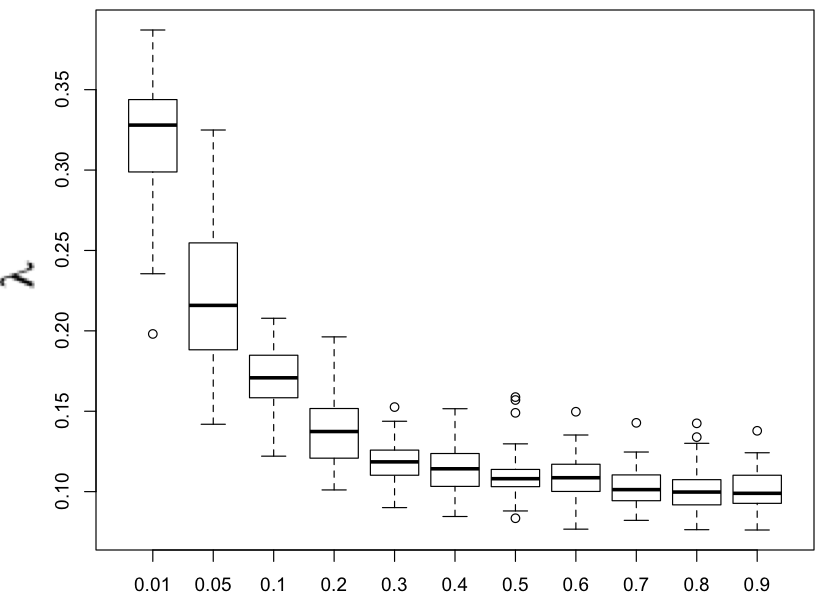}\includegraphics[width=0.5\textwidth,height=4cm]{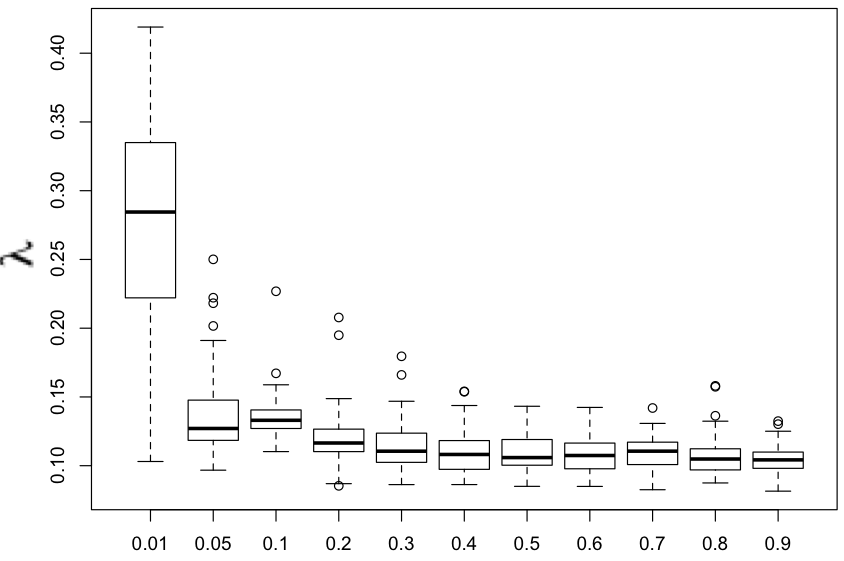}
\caption{{\bf Sensitivity of $\alpha$ and $\lambda$ to the prior choice of $\gamma_\delta$ (Example \ref{ex:1}):} For $50$ datasets of size $n=50$ simulated from $\MF_1$ when $\gamma_\delta^*\in {0.01, 0.05, 0.1, \ldots, 0.9}$, $\lambda^*=0.1$ and $\delta^*(x)\sim \mathcal{GP}(0_n, \Sigma_{\delta})$, Boxplots of $50$ posterior point estimates of ({\em Top}) $\alpha$; ({\em Bottom}) $\lambda$ when each point estimate is computed by the sample average of the MCMC draws. The prior distribution of $\gamma_\delta$ is ({\em left}) $\mathcal{U}(0,1)$; ({\em right}) informative $\mathcal{B}eta(b_1,b_2)$ with $b_1$ and $b_2$ chosen so as that the prior centers on the true value.  The number of MCMC iterations is $10^4$ with a burn-in of $10^3$ iterations.}
\label{figure:m1_1}
\end{figure}

\item[Posterior estimates of $\pmb{\theta}$ and $k$ :] For both parameters $\pmb{\theta}$ and $k$, the  distributions of the means of the $10^4$ MCMC draws obtained for $50$ datasets simulated from $\MF_1$, have been summarized on Figure \ref{figure:2}. While the posterior draws of $\pmb{\theta}$ and $k$  converge to the true value, a very slight increase is observed for $k$ when $\gamma_\delta<0.1$. 

\begin{figure}[ht]
\includegraphics[width=0.5\textwidth,height=4cm]{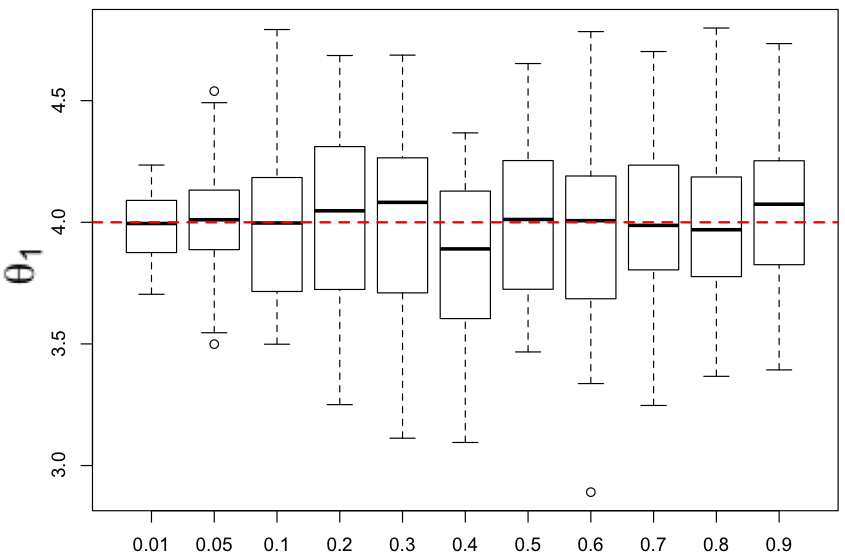}\includegraphics[width=0.5\textwidth,height=4cm]{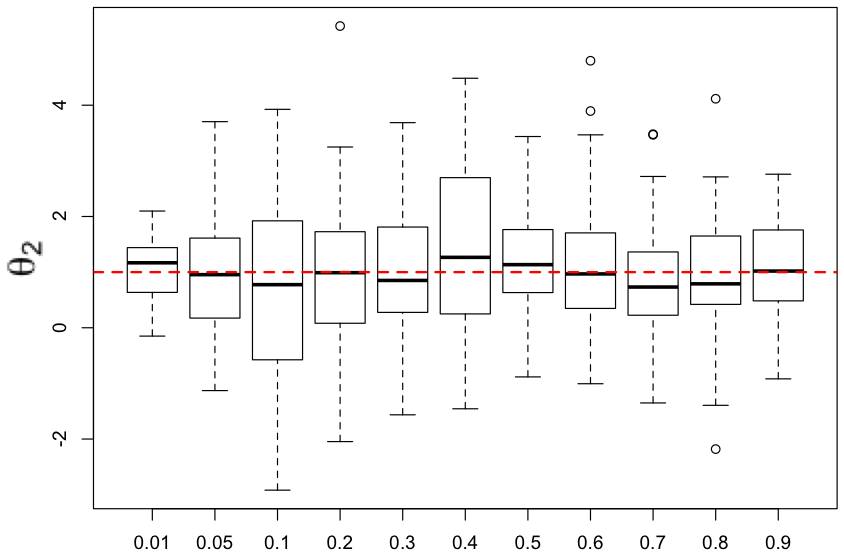}

\includegraphics[width=0.5\textwidth,height=4cm]{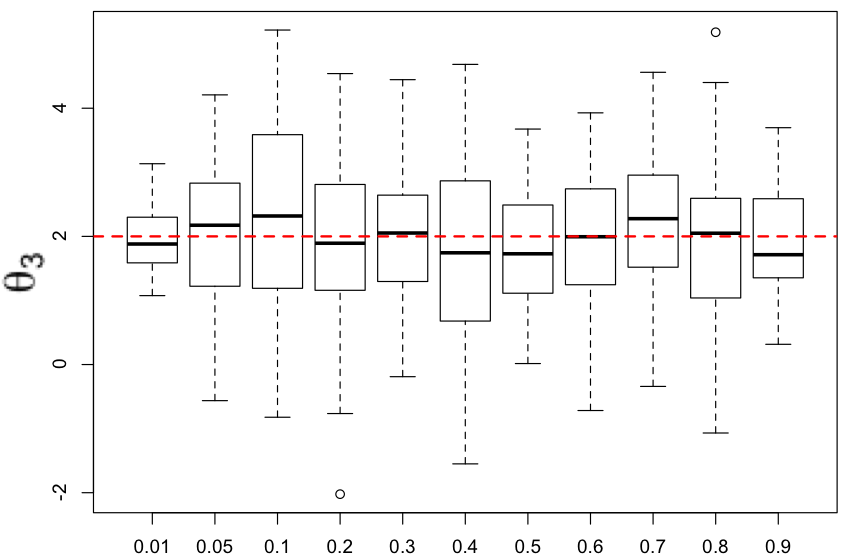}\includegraphics[width=0.5\textwidth,height=4cm]{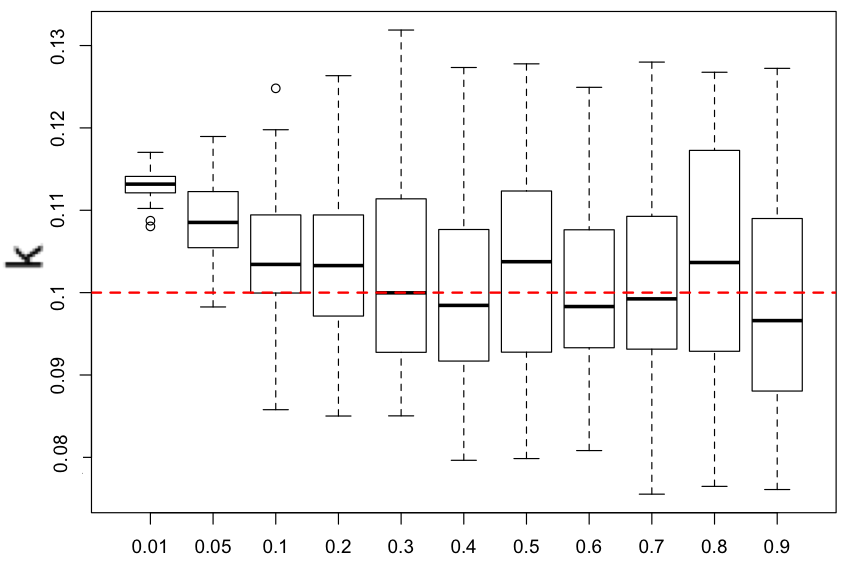}
\caption{{\bf Posterior estimates of $\pmb{\theta}$ and $k$ (Example \ref{ex:1}):} For $50$ datasets of size $n=50$ simulated from $\MF_1$ when $\gamma_\delta^*\in {0.01, 0.05, 0.1, \ldots, 0.9}$, $\lambda^*=0.1$ and $\delta^*(x)\sim \mathcal{GP}(0_n, \Sigma_{\delta})$, Boxplots of $50$ posterior point estimates of $\pmb{\theta}$ and $k$ when each point estimate is computed by the sample average of the MCMC draws.  The prior distribution of $\gamma_\delta$ is $\mathcal{U}(0,1)$. The dashed red lines indicate the true value of each parameter. The number of MCMC iterations is $10^4$ with a burn-in of $10^3$ iterations.}
\label{figure:2}
\end{figure}

\item[Evolution of $\alpha$ with the increase of the sample size :] For the samples simulated from $\MF_1$  when $\gamma_\delta^*=0.3$, $\lambda^*=0.1$ and $\delta^*(x)\sim \mathcal{GP}(0_n, \Sigma_{\delta})$, Figure \ref{figure:3} displays the range of the posterior distributions of $\alpha$ when either $n$ varies between $6$ and $100$. 
The figure demonstrates the convergence of the posterior draws of $\alpha$ to $0$, supporting the true model, as
the sample size increases. Note that the prior distribution of the correlation length is $\mathcal{U}(0,1)$. Clearly, $\alpha$ converges to zero  when the sample size is more than $n=20$.

\begin{figure}[ht]
\includegraphics[width=1\textwidth,height=4cm]{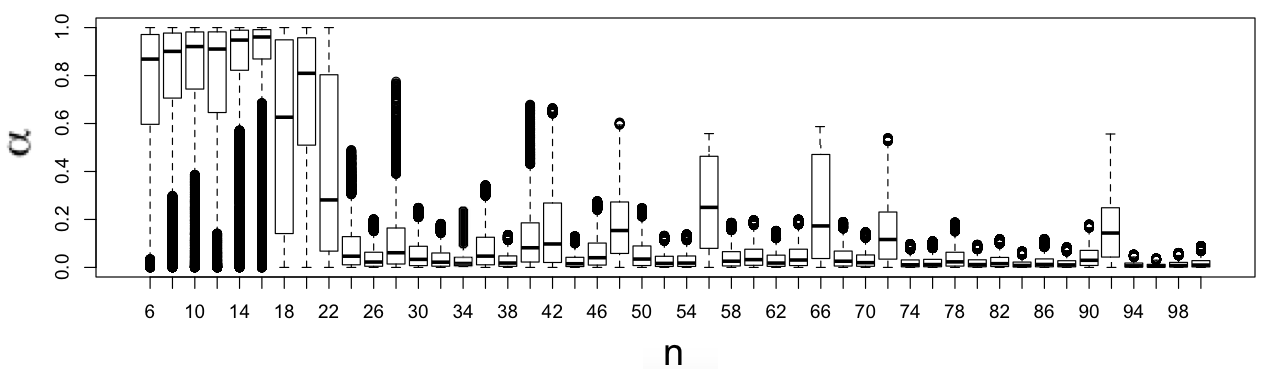}
\caption{{\bf Posterior distributions of $\alpha$ (Example \ref{ex:1}):} Boxplots of the posterior draws of $\alpha$ plotted for $50$ samples of size $n$ simulated from $\MF_1$ when $\gamma_\delta^*=0.3$, $\lambda^*=0.1$ and $\delta^*(x)\sim \mathcal{GP}(0_n, \Sigma_{\delta})$.  The prior distribution of $\gamma_\delta$ is $\mathcal{U}(0,1)$. The number of MCMC iterations is $10^4$ with a burn-in of $10^3$ iterations.}
\label{figure:3}
\end{figure}

\item[Model estimation :] For two samples of size $n=100$ simulated from $\MF_1$, Figure \ref{figure:4} displays the posterior estimates of the code $\MF_0: \hat{y}_i=g(x_i)\hat{\pmb{\theta}}$ and the code plus discrepancy $\MF_1: \hat{y}_i=g(x_i)\hat{\pmb{\theta}}+\hat{\delta}(x_i)$ when $\hat{\pmb{\theta}}$ and $\hat{\delta}(x_i)$ are the arithmetic averages of related MCMC draws. 
The figure illustrates that the calibrated code (skyblue line), model $\MF_0$, and the model with discrepancy (brown line), $\MF_1$, are very similar when the correlation length is very small $\gamma_\delta=0.01$. This leads $\alpha$ to approach both boundaries of the interval $[0,1]$ in favor of both models, $\MF_0$ and $\MF_1$. In the case where $\gamma_\delta=0.3$, $\alpha$ is very close to zero supporting the true model from which the sample has been simulated and the estimated model with discrepancy fits very well the data points.

\begin{figure}[h]
\includegraphics[width=0.6\textwidth,height=4cm]{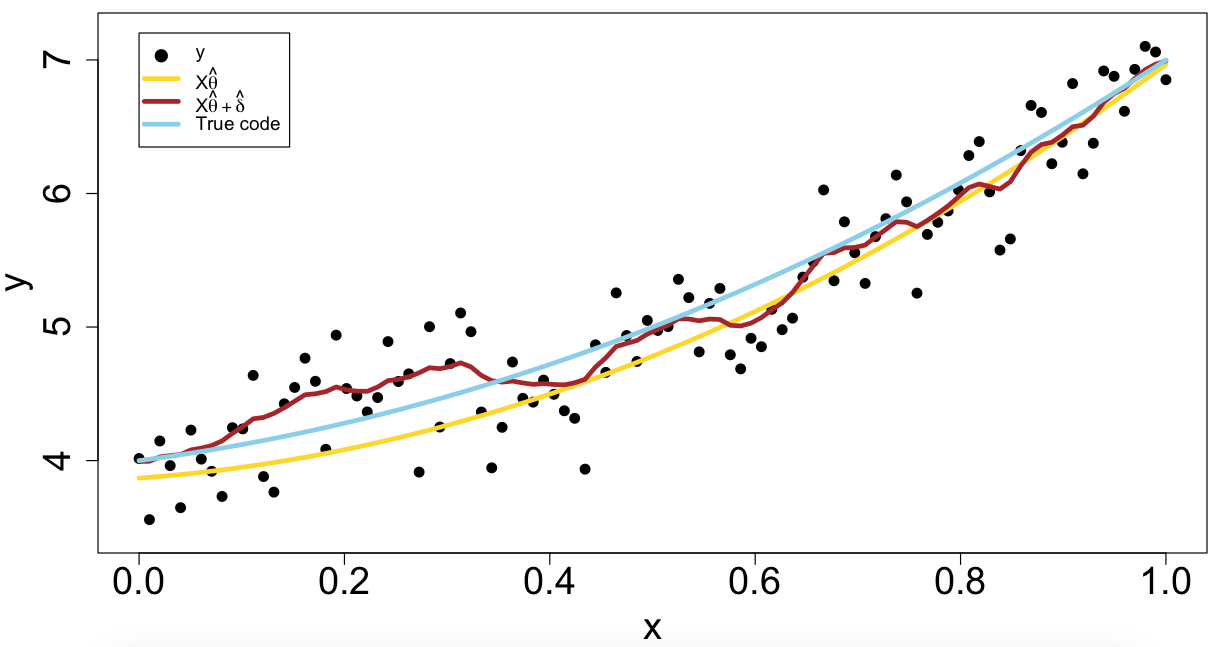}\includegraphics[width=0.4\textwidth,height=4cm]{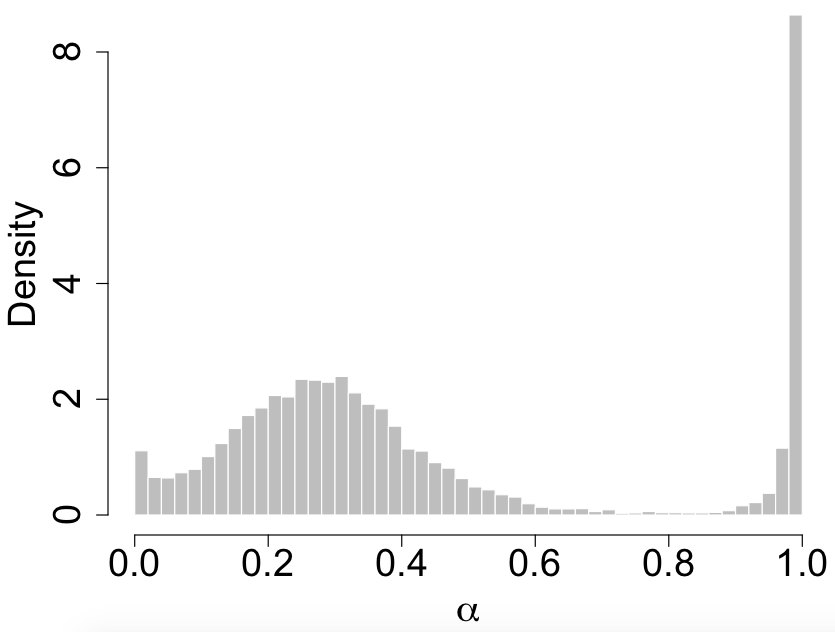}
\includegraphics[width=0.6\textwidth,height=4cm]{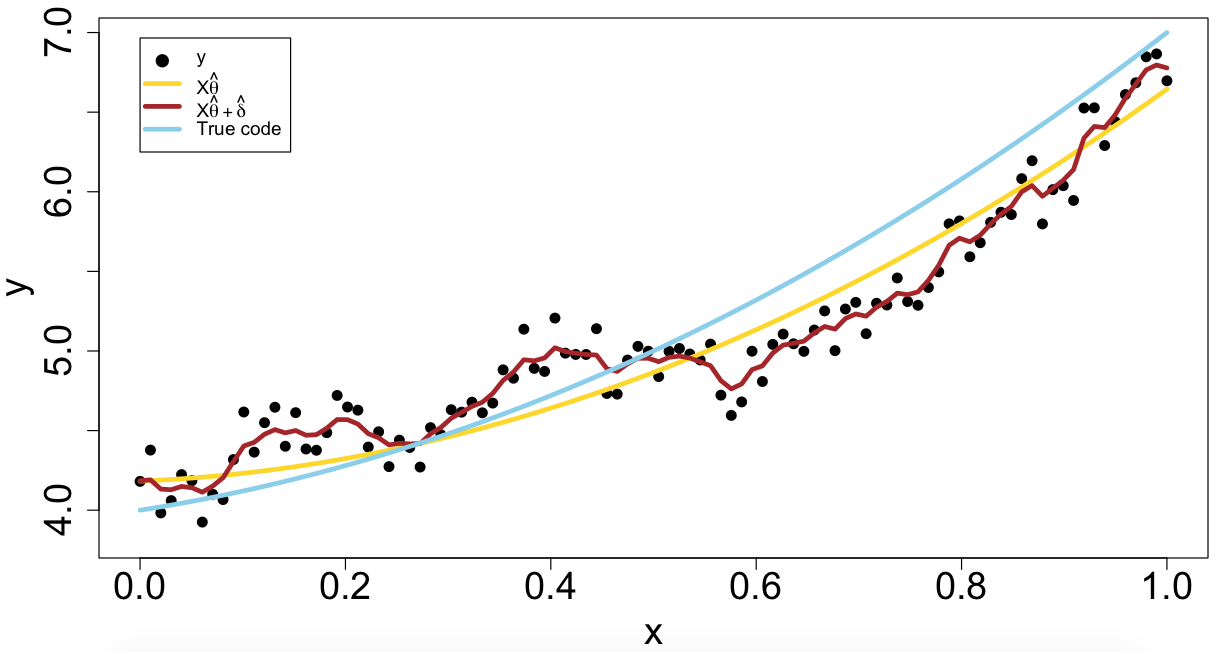}\includegraphics[width=0.4\textwidth,height=4cm]{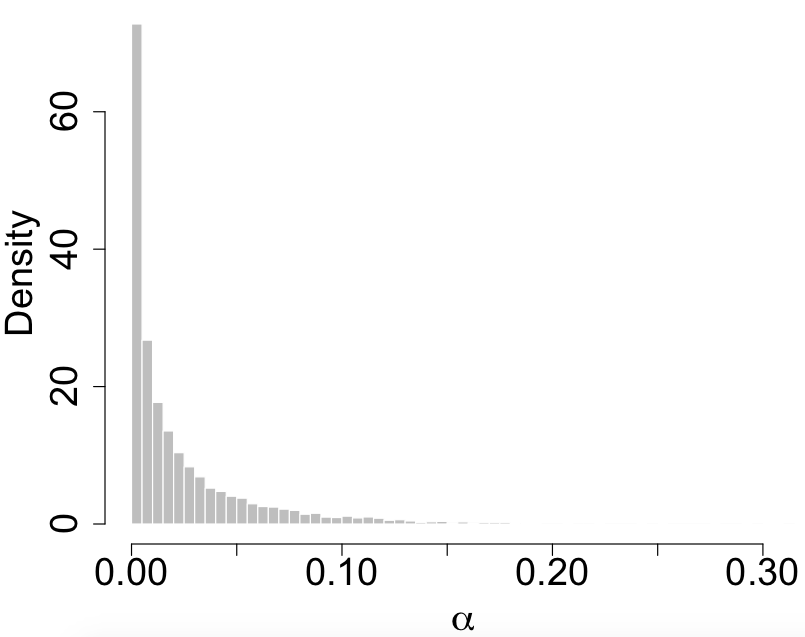}
\caption{{\bf Model estimation (Example \ref{ex:1}):} For two samples of size $100$ simulated from $\MF_1$ when ({\em Top}) $\gamma_\delta^*=0.01$; ({\em Bottom}) $\gamma_\delta^*=0.3$ and $\delta^*(x)$ is simulated from $\mathcal{GP}(0_n, \Sigma_{\delta})$: ({\em Left}) Comparison between data points $y_i$ plotted versus $x_i$ (black points), the posterior estimate of $\MF_1$ obtained by averaging over MCMC iterations (brown line), estimated code (gold line) and the true code (skyblue line). The prior distribution of $\delta(x)$ is $\mathcal{GP}(0_n, \Sigma_{\delta})$. ({\em Right}) Posterior distribution of $\alpha$, the weight of model $\MF_0$ in the mixture model. The number of MCMC iterations is $2\times10^4$ with a burn-in of $10^3$ iterations.}
\label{figure:4}
\end{figure}

\end{description}

\end{exemp}

\section{Case-study: Validation of a hydraulic model}\label{sec:case}

\subsection{Industrial context and case description}

\begin{figure}[h!]
\centering
\includegraphics[width=.4\textwidth]{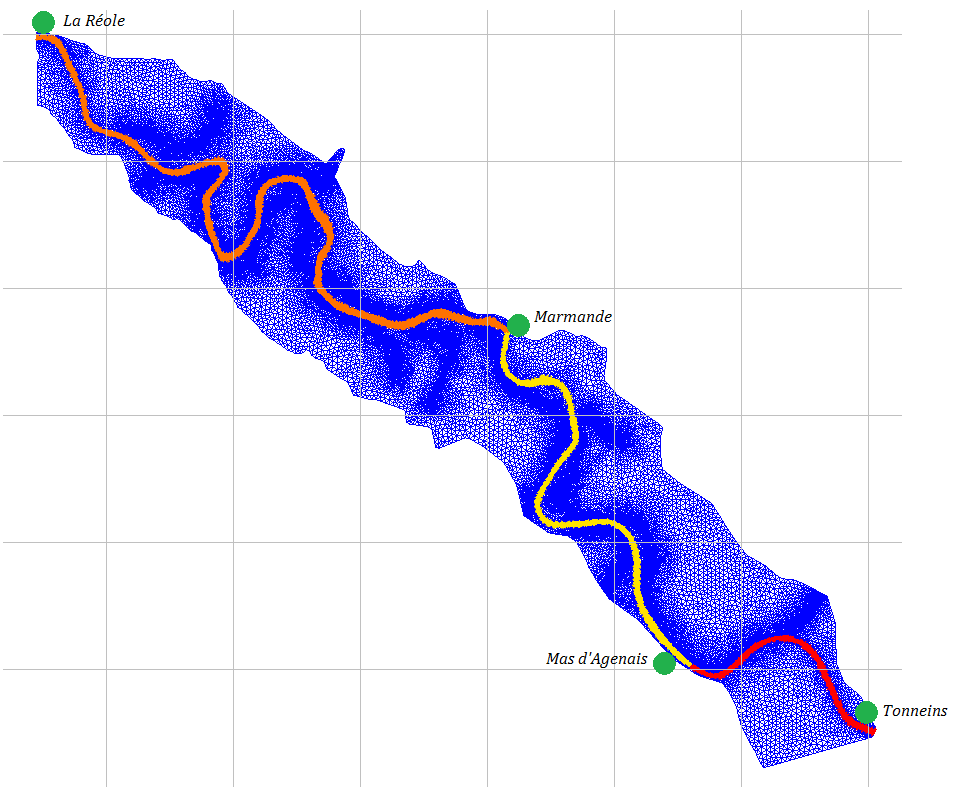}
\caption{\label{fig:Garonne}  Meshing of the major water bed for the Garonne river, between Tonneins and La R\'eole.}
\end{figure}

EDF possesses many electric power plants located nearby water sources, either for cooling purposes, or as a primary source of energy, for instance in the case of water dams. Hence, protection against floods has always been one of the main concerns driving joint work between the production, engineering and research \& developpment (R\&D) departments of EDF, toward ensuring the safety and reliability of its industrial park (see for instance the MADONE R\&D project, aimed at evaluating safety margins concerning all types of natural agressions). To this end, the TELEMAC-MASCARET hydraulic solver suite \url{http://opentelemac.org/} has been developped over several decades by the Hydraulic and Environment national Lab of EDF (LNHE), allowing to simulate complex free-surface water flows, based on a meshing of the watercourse's bed.

We study here a TELEMAC2D model of the Garonne river between Thonneins and La R\'eole. This is based on Saint-Venant's equations, solved through a finite-element approximations based on a meshing of the river bed (see Figure~\ref{fig:Garonne}). This model can be seen as a function:
\begin{eqnarray}
\bf h &=& f(q, \bf K_s),
\end{eqnarray}
where $\bf h$ is the simulated field of water heights throughout the mesh, over a certain amount of time (in practice, until it stablizes into a stable state). This depends on two different sets of inputs:
\begin{itemize}
\item a single controlled variable $q$, which is the water discharge at the entry of the river section, considered as a (fixed) boundary condition;
\item a vector $\bf {K_s}$ of five uncertain parameters, corresponding to the values taken by the Strickler coefficient, which is a measure of the river bed's smoothness, over five subdivisions of the considered river section, each subdivision being assumed homogeneous in terms of regularity.
\end{itemize}

\subsection{Data and priors}

In order to predict the future flood height distribution as accurately as possible throughout the water bassin under study, we wish to reduce the uncertainty on the Strickler coefficients through Bayesian calibration, using historical measures of discharge/water height values at Marmande and Mas-Agenais. In the following, we will note these observations $(q_{\rm s,i}, h^o_{\rm s,i})_{1\leq i\leq n_{\rm s}}$, where $\rm s$ is a site index, taking values $\rm s=MR$ for Marmande and $\rm s=MS$ for Mas Agenais, and $n_{\rm s}$ is the number of observed points at site ${\rm s}$. We will also use the vector notation $({\bf q}_{\rm s}, {\bf h}_{\rm s}^o)$ for simplicity.

Prior information about the Strickler coefficients values are available under the form of lower and upper bounds $(\bf a, \bf b)$, such that $\bf {K_s} \in [\bf a, \bf b]$, where inclusion is to be taken componentwise. We translate this into a uniform prior distribution, though a more refined robust inference approach could also be used in this context, see \cite{RiosInsua00}.

The hydraulic model's behavior throughout this rectangular prior domain was explored by means of a latin hypercube sample (LHS) comprising $N=787$ points. Calculations, amounting to a grand total of several thousand hours computation time, were performed on an EDF R\&D scientific computing cluster.

\subsection{Code emulation and linearization}

Based on the framework of \cite{Bachoc2014}, we sought a linear approximation to the code with respect to the uncertain parameters $\bf K_s$, in the neighborhood of a `well-chosen' (in a sense to be made clear shortly) point $\widehat {\bf K}_{\bf s}$, according to:
\begin{eqnarray}\label{eq:linearization}
{\bf h_{\rm s}} := f(\bf{ q_{\rm s}}; \bf K_s) &\approx& f(\bf{ q_{\rm s}}; \widehat {\bf K}_{\bf s}) + J_{\rm s} \cdot (\bf K_s - \widehat {\bf K}_{\bf s}),
\end{eqnarray}
where $J_{\rm s} := \nabla_{\bf K_s}f(\bf{ q_{\rm s}}; \widehat {\bf K}_{\bf s})$ is the $n_{\rm s}\times 5$ matrix of partial derivatives of $f$ with respect to each component of $\bf K_s$ at point $\widehat {\bf K}_{\bf s}$ and for all observed input values ${\bf q_{\rm s}} = (q_{\rm s,1},\ldots,q_{\rm s,n_{\rm s}})$.

In practice, to build this approximation, we first fitted a Gaussian process emulator to both outputs (Marmande and Mas Agenais) of the above LHS numerical design, using the scikit-learn Python library \url{http://scikit-learn.org/}. The Mat\'ern 1.5 covariance kernel gave the best results in both cases, with excellent leave-one out predictive scores (Q2) of $98.8\%$ in Mas Agenais and $99.9\%$ in Marmande. 
This justified substituting the Gaussian process mean prediction $m_{\rm s}(\bf K_s)$ to the actual code output $f(\bf{ q_{\rm s}}; \bf K_{\bf s})$ in each site $\rm s$. Hence, the covariance matrix $c_{\rm s}(\bf K_s, \bf K_s')$, quantifying the emulator's uncertainty, was not used here.

Next, we chose to define the reference point $\widehat {\bf K}_{\bf s}$ as the ordinary least-squares estimate~:
\begin{eqnarray*}
\widehat {\bf K}_{\bf s} &=& \arg\min_{\bf K_s} \sum_{\rm s \in \{{\rm MR},{\rm MS}\}} ||{\bf h}_{\rm s}^o - m_{\rm s}(\bf K_s)||^2.
\end{eqnarray*}
We then linearized the Gaussian process mean $m_{\rm s}(\bf K_s)$ around $\widehat {\bf K}_{\bf s}$ according to Equation~(\ref{eq:linearization}), and using finite differences to evaluate the partial derivatives. We observed that the hydraulic model emulator varies linearly around the OLS estimate in both sites with respect to the Strickler coefficients, and that, as shown in Figure~\ref{fig:linearization}:
\begin{itemize}
\item the water height at Marmande seems mainly influenced by ${K_s}_3$
\item the water height at Mas Agenais seems mainly influenced by ${K_s}_4$, though it is less clear than for Marmande.
\end{itemize}

In view of these observations, we decided to estimate ${K_s}_3$ using the data from Marmande, and ${K_s}_4$ using the data from Mas Agenais.

 \begin{figure}[h]
 \centering
 \includegraphics[height=0.2\textheight]{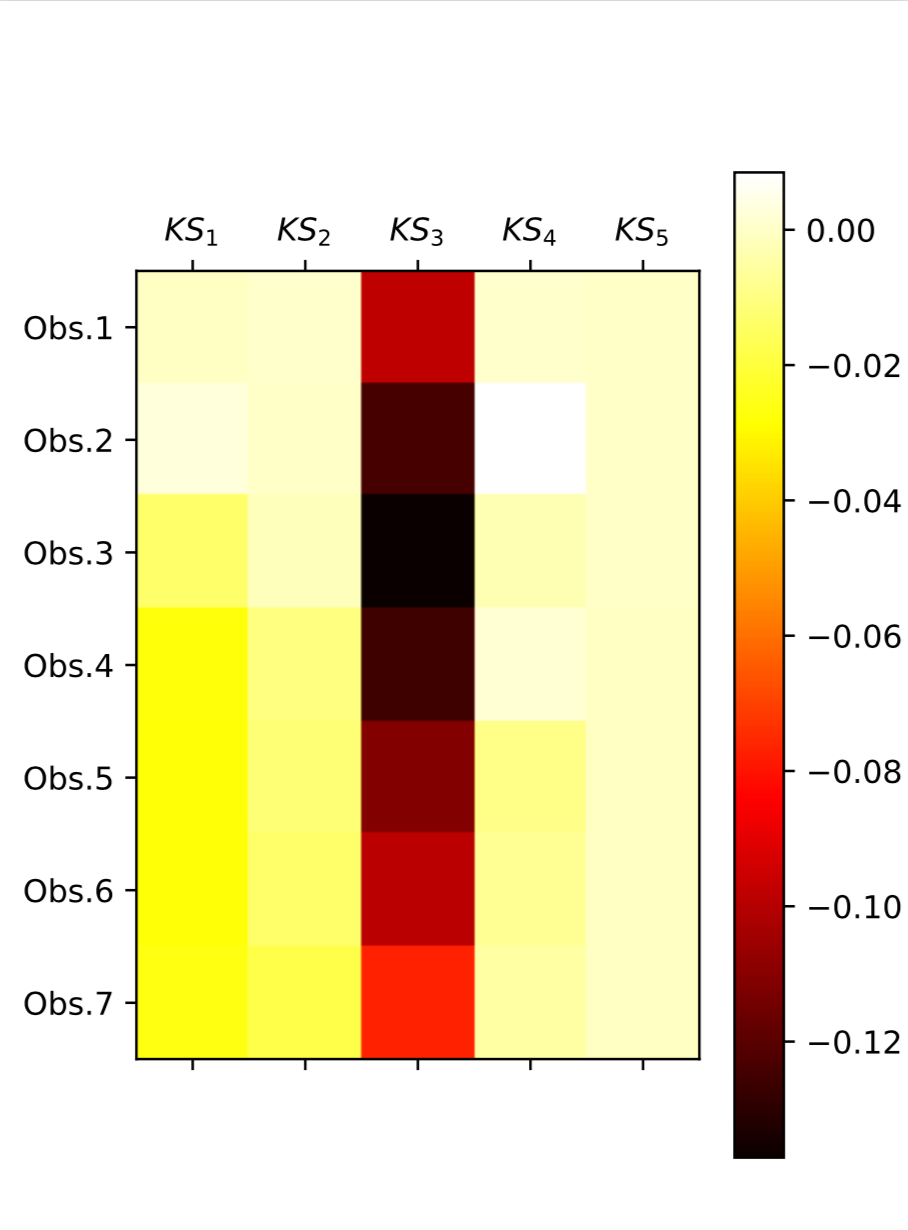}
 \includegraphics[height=0.2\textheight]{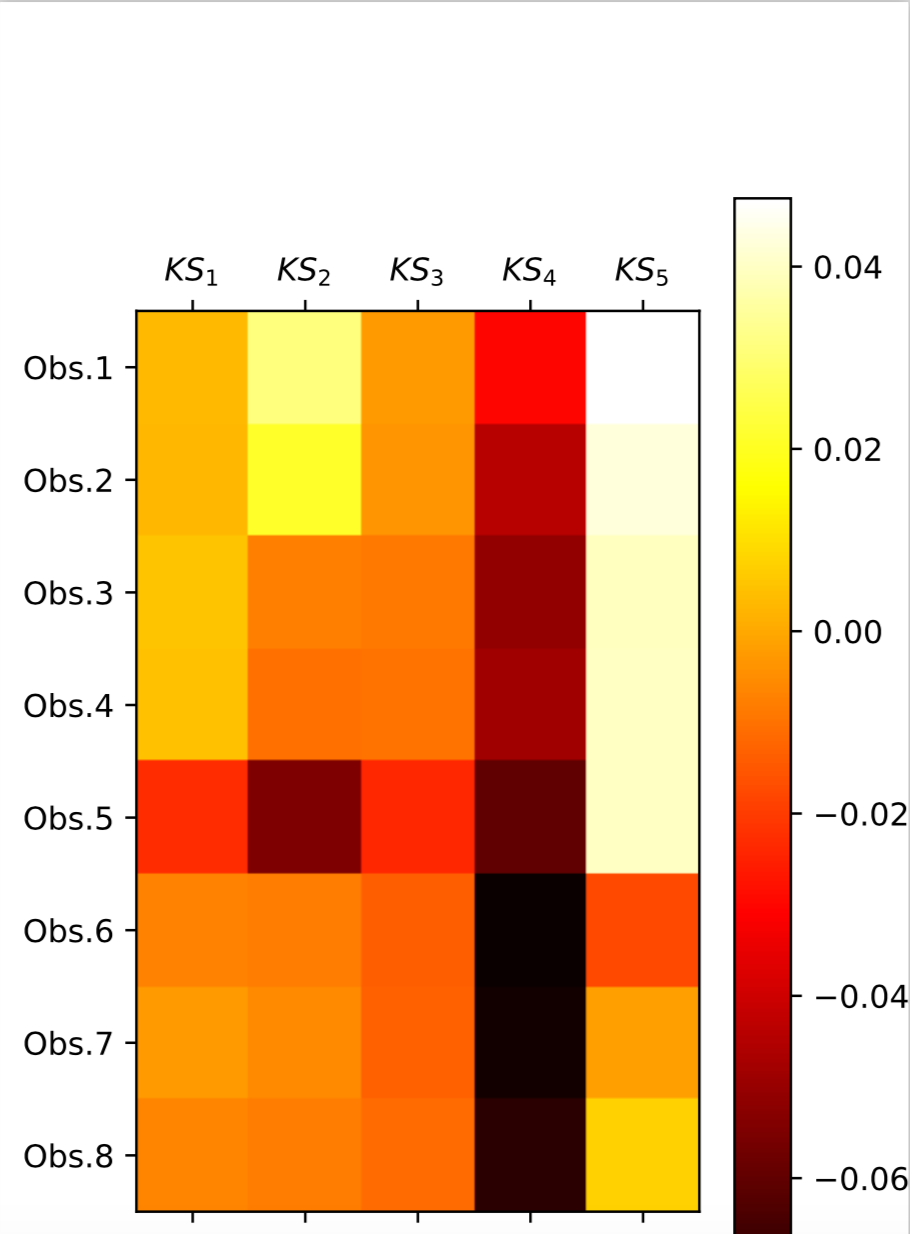}\\
 \caption{\label{fig:linearization} Jacobian matrices of the Gaussian process emulator with respect to the Strickler coefficients, in the neighborhood of the OLS estimate, at Marmande (top) and Mas Agenais (bottom).}
 \end{figure}

\subsection{Code calibration / validation}

Following the results of the sensitivity analysis described in the previous section, we used the mixture model approach to estimate:
\begin{itemize}
\item the ${K_s}_3$ STrickler coefficient by fitting the linearized code to the Marmande dataset, considering the other parameters as fixed;
\item the ${K_s}_4$ Strickler coefficient by fitting the linearized code to the Mas Agenais dataset, considering the other parameters as fixed.
\end{itemize}

More specifically, we used the MCMC algorithm~\ref{algo:mwg} to generate $N=10^4$ posterior draws from the mixture model parameters in each case, using the following priors for the hyperparameters of $\MF_1$: $\delta\sim \mathcal{GP}(0_n, \Sigma_{\delta})$, where $\Sigma_{\delta}$ is an exponential covariance function, $\gamma_\delta \sim \mathcal{B}eta(1,1)$,  $k \sim \mathcal{B}eta(1,1)$ and $\alpha\sim \mathcal{B}eta(1,1)$ for the proportion of biased code predictions. This last prior differs from the $\mathcal{B}eta(0.5,0.5)$ distribution advocated in \cite{KKKMCPRJR2014} and used in the numerical experiments. Indeed, this specific prior is well-suited for large datasets and when we suspect that all observations come from the same class of the mixture. In contrast, the present case study concerns a limited dataset, in which we have no reason to believe that code predictions are valid for either all observations or none simultaneously; hence a uniform prior seems to make more sense. 

\subsection{Results for the Marmande and Mas Agenais datasets}

\paragraph{Parameter estimation.}

The posterior densities of are shown in Figure~\ref{fig:posterior_densities} . As can be seen, they are all significantly more concentrated then the priors they derive from, showing that, even though the available data is limited, it does contain enough information to learn the mixture model. Furthermore, it note that the Strickler coefficient is better estimated for Marmance than Mas Agenais. This is consistent with the fact that the proportion $\alpha$ of observations well explained by the code is on the average higher in Marmande than in Mas Agenais (middle), which implies that the data provide more information on the Strickler coefficients in the first site. In contrast, the estimated standard deviation of measurement errors is very similar in both sites (second from left), with posterior mean below $1$~meter, which is consistent with our knowldege about water height measure procedures.

\begin{figure}[h]
\centering
\includegraphics[width=0.2\textwidth]{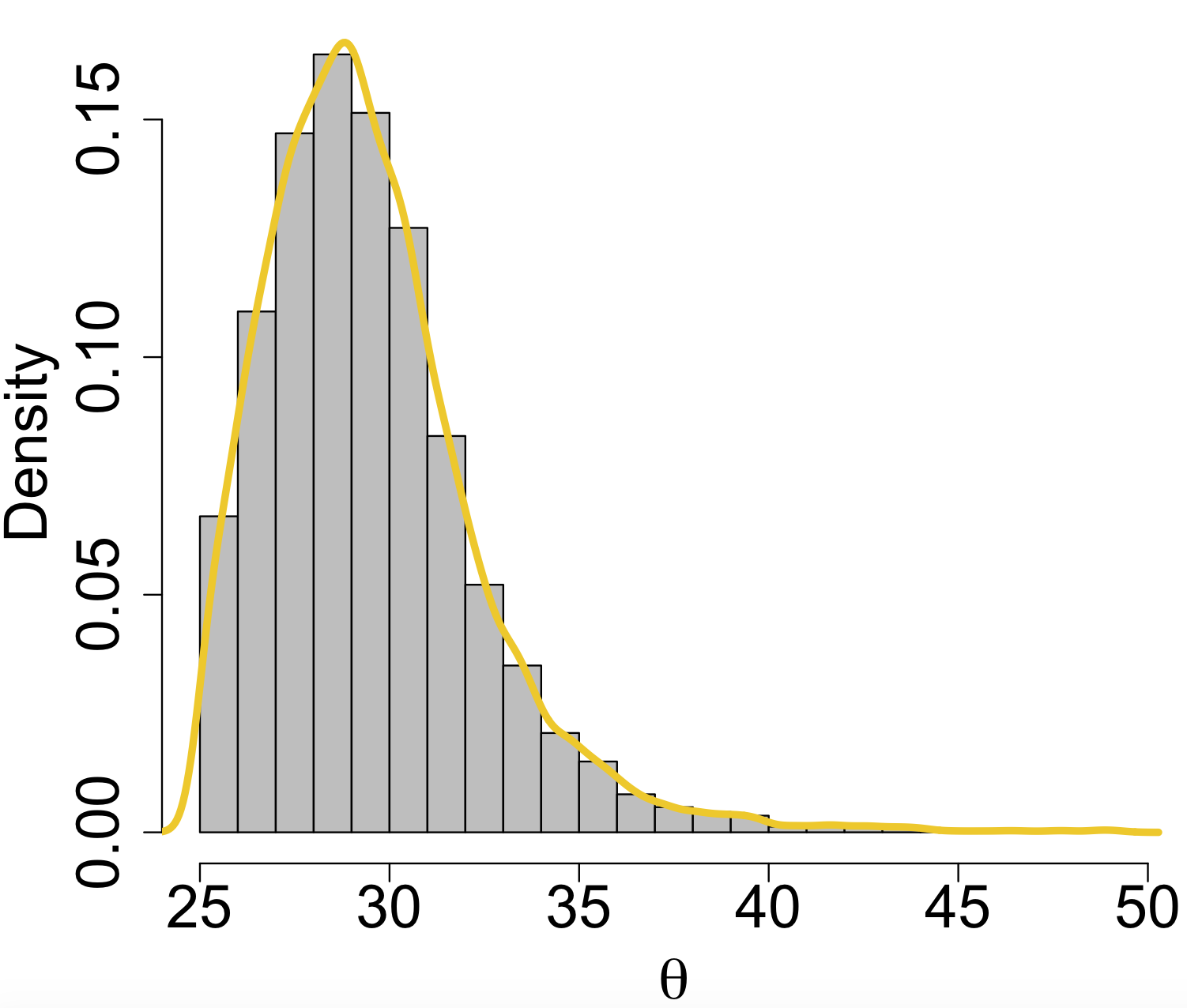}
\includegraphics[width=0.2\textwidth]{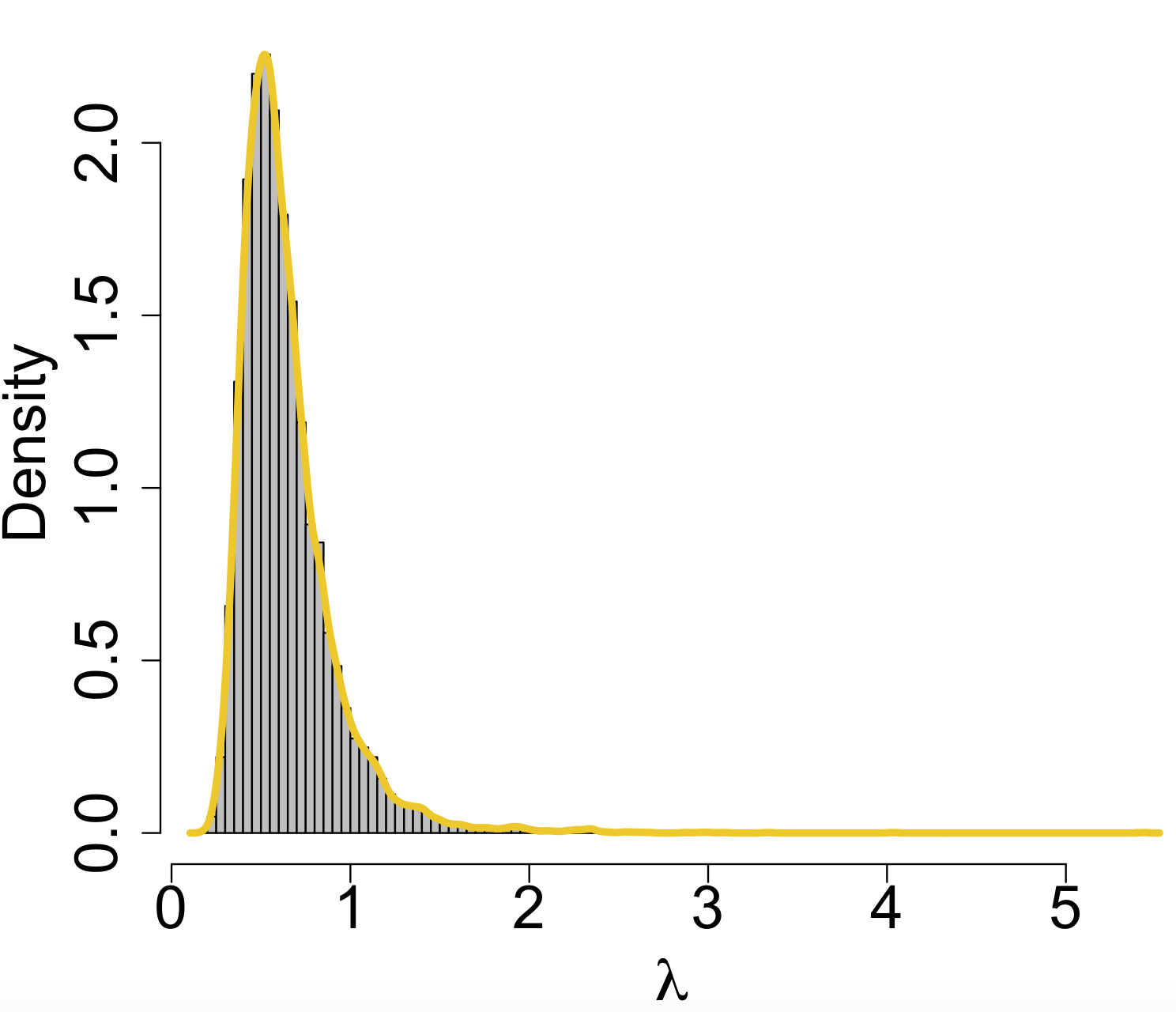}
\includegraphics[width=0.2\textwidth]{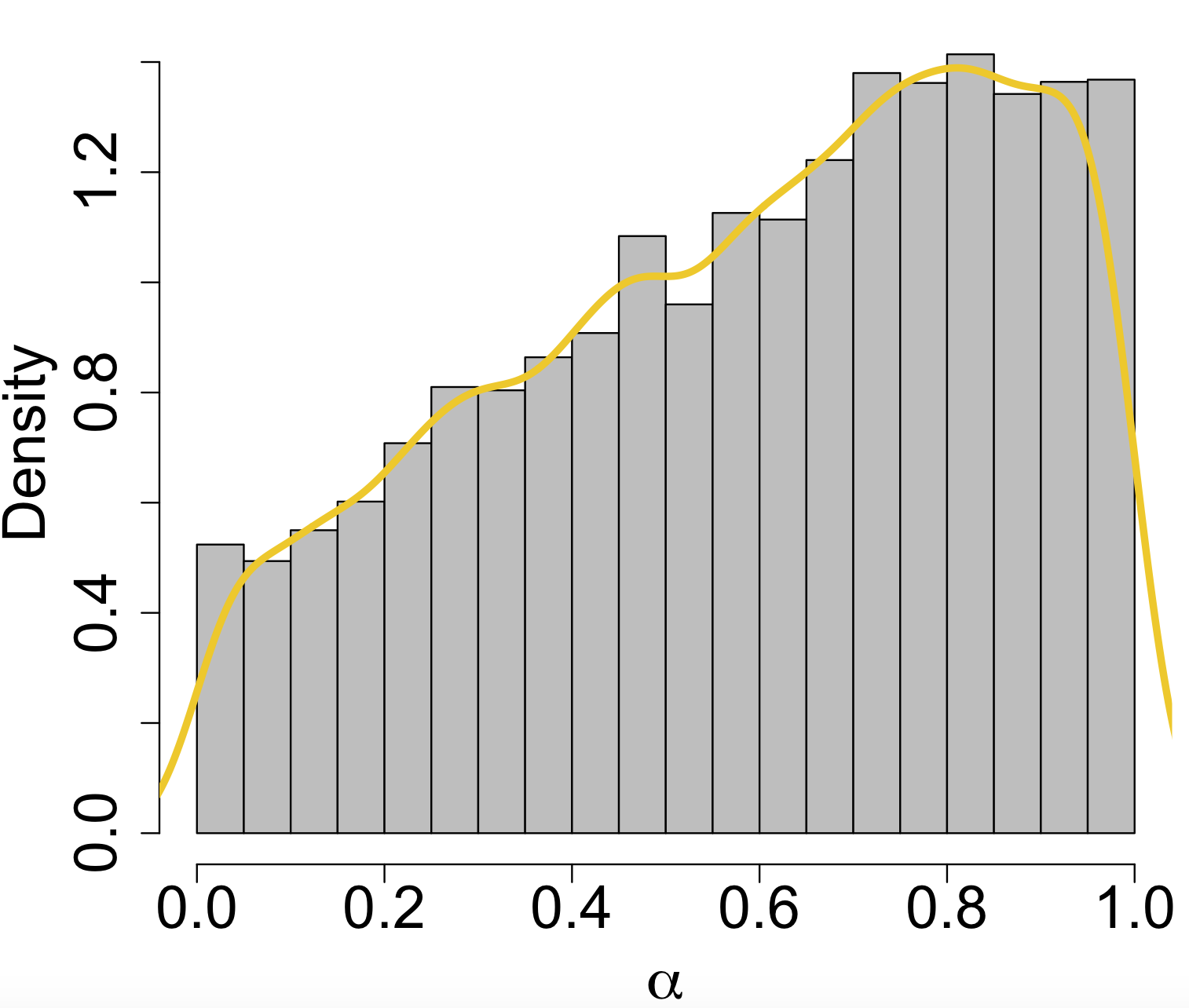}
\includegraphics[width=0.2\textwidth]{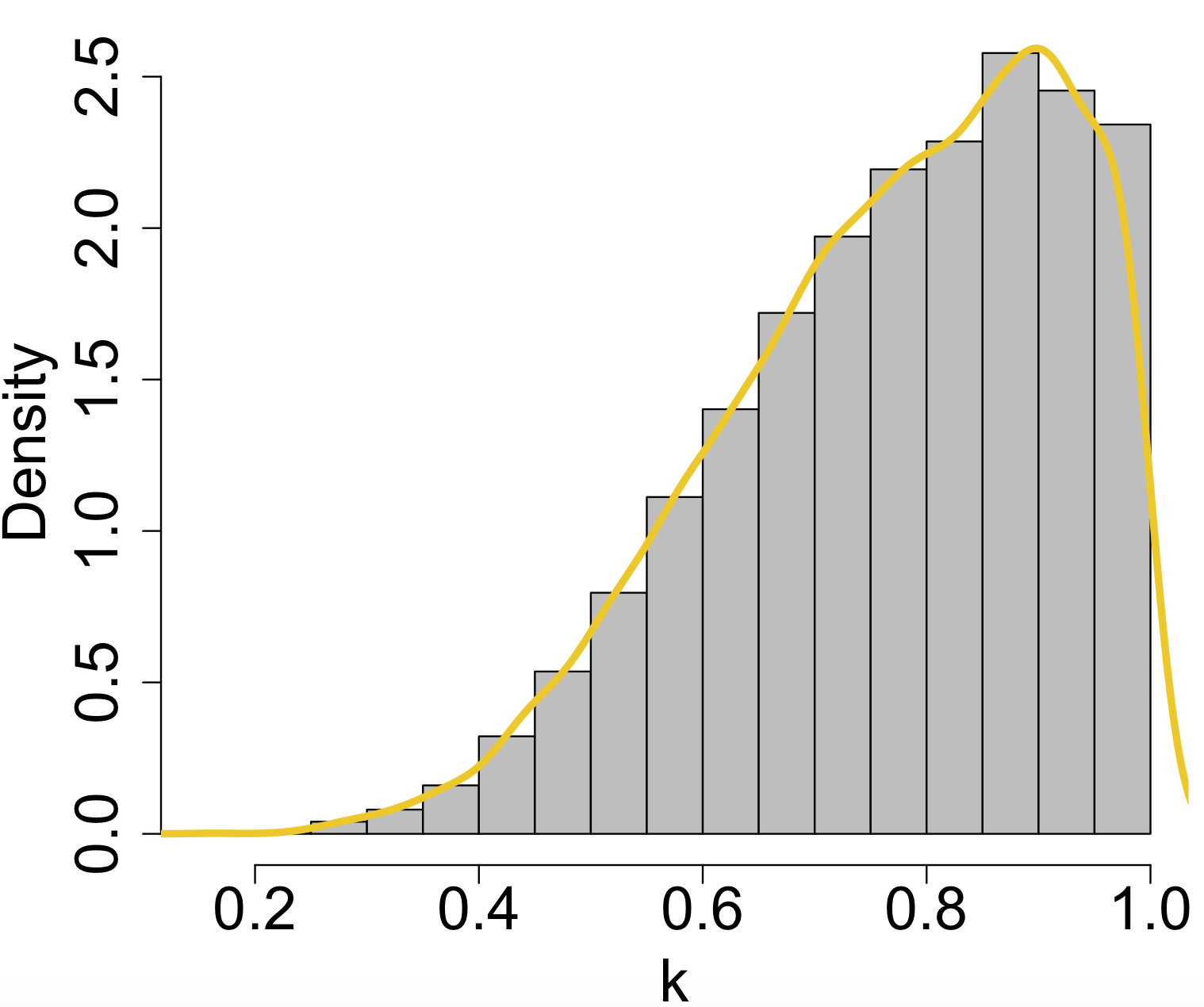}
\includegraphics[width=0.2\textwidth]{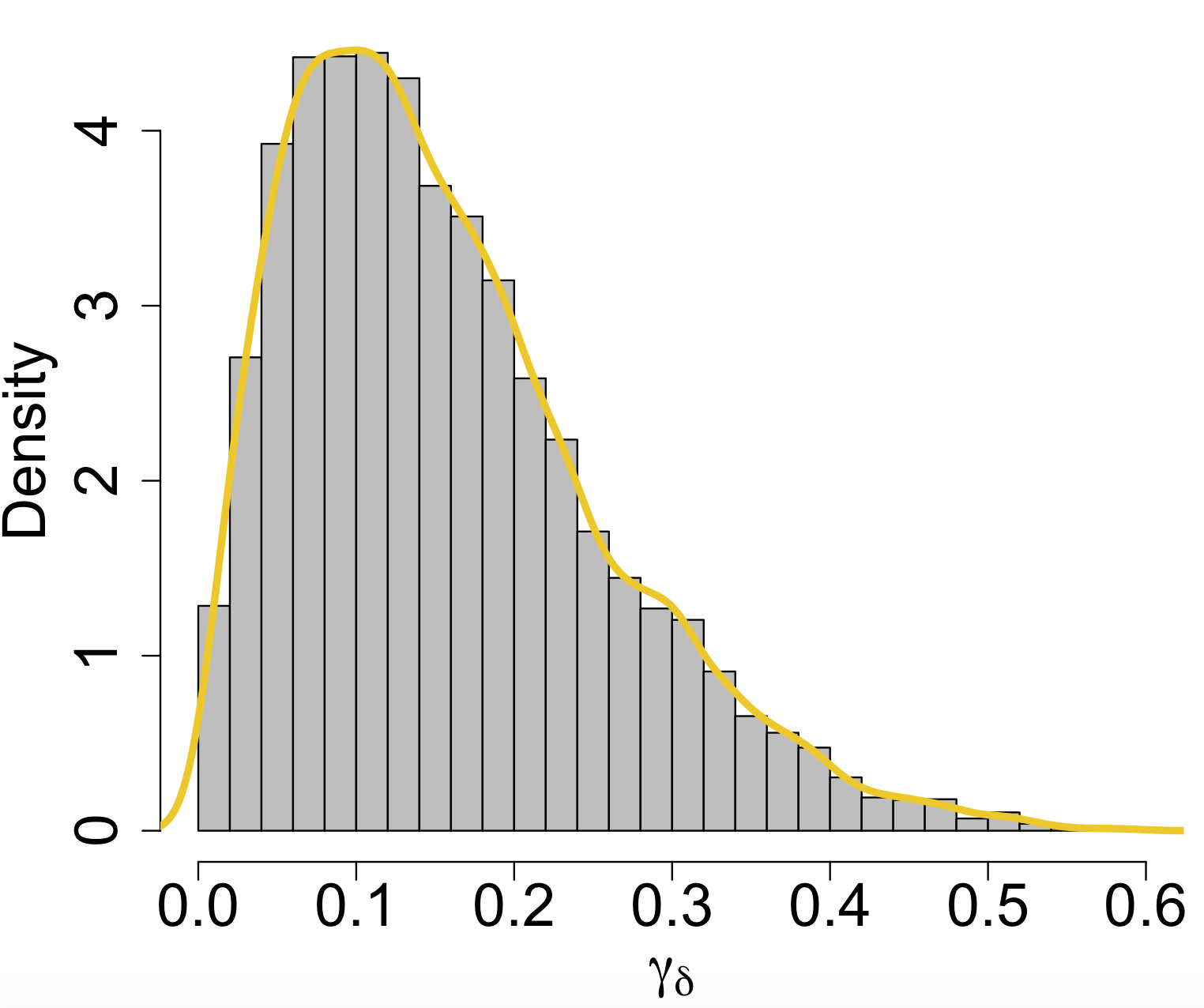}\\

\includegraphics[width=0.2\textwidth]{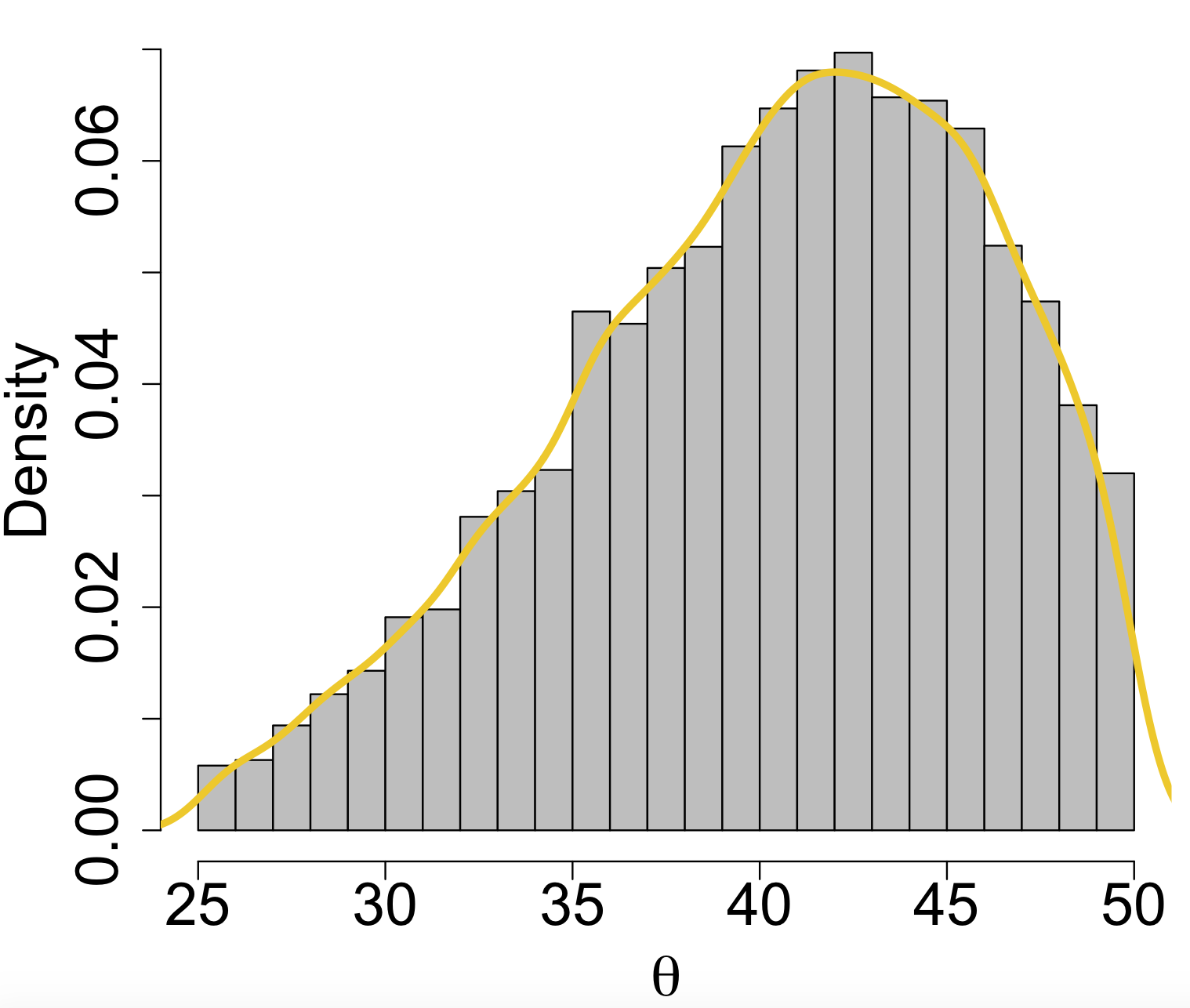}
\includegraphics[width=0.2\textwidth]{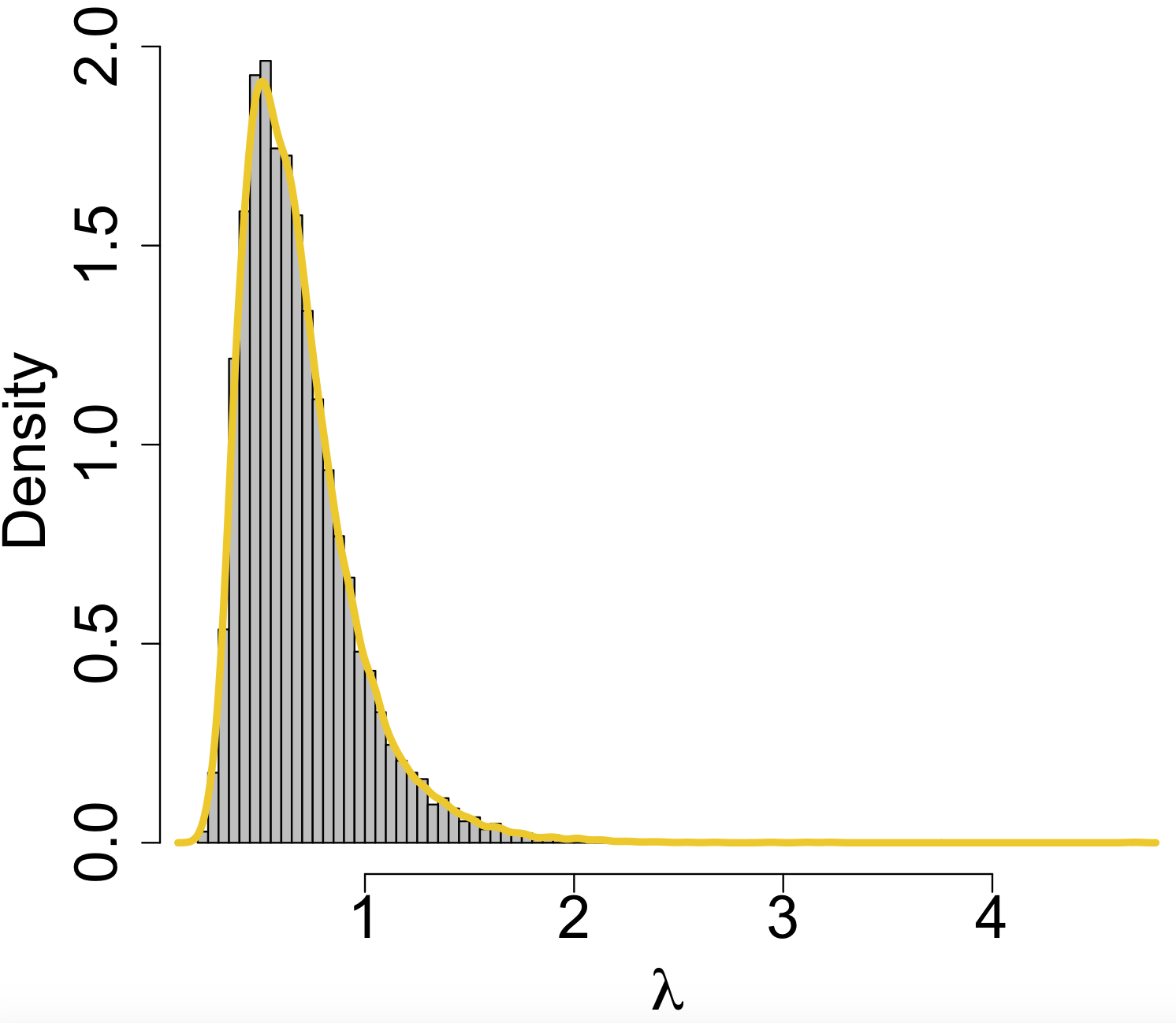}
\includegraphics[width=0.2\textwidth]{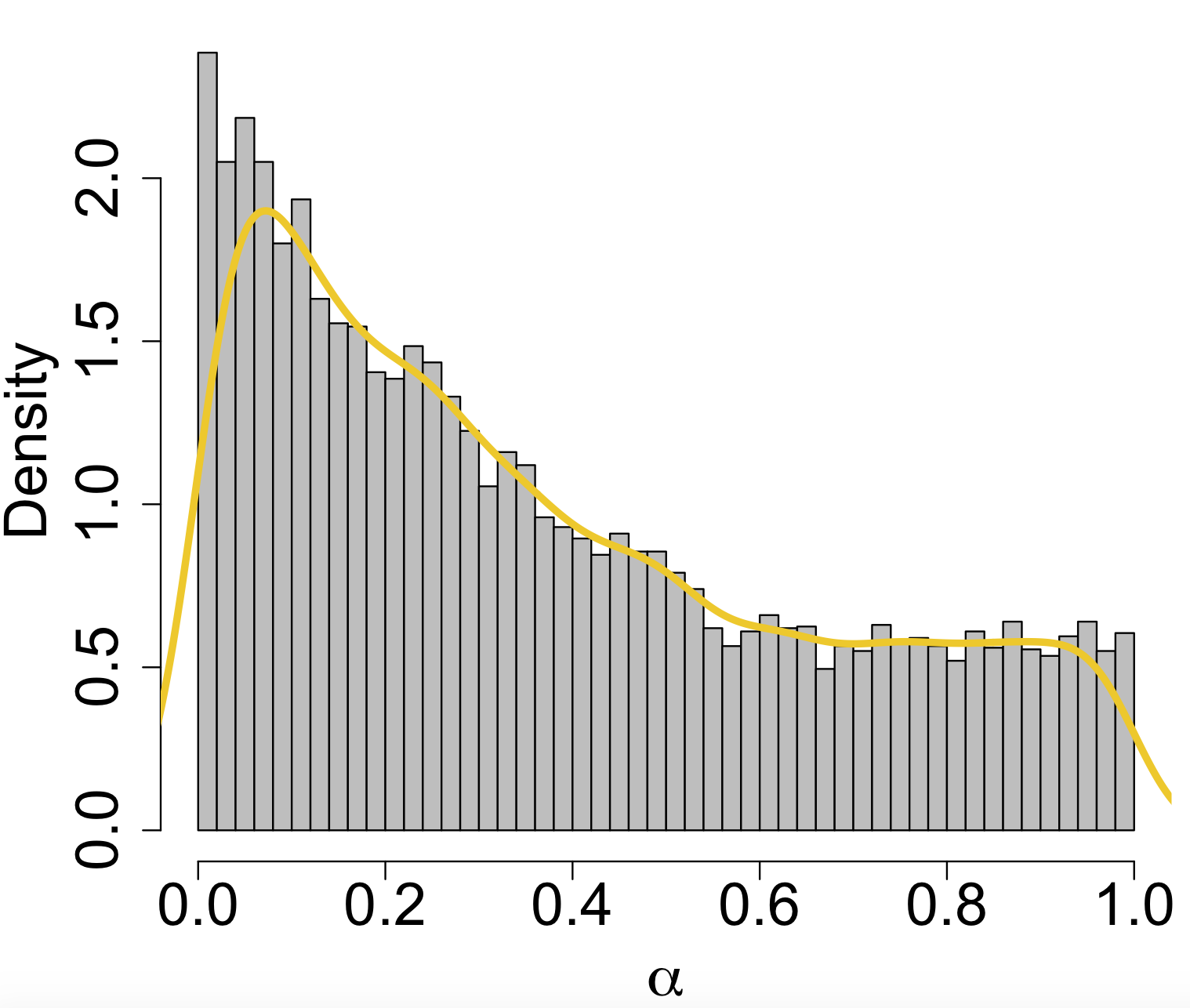}
\includegraphics[width=0.2\textwidth]{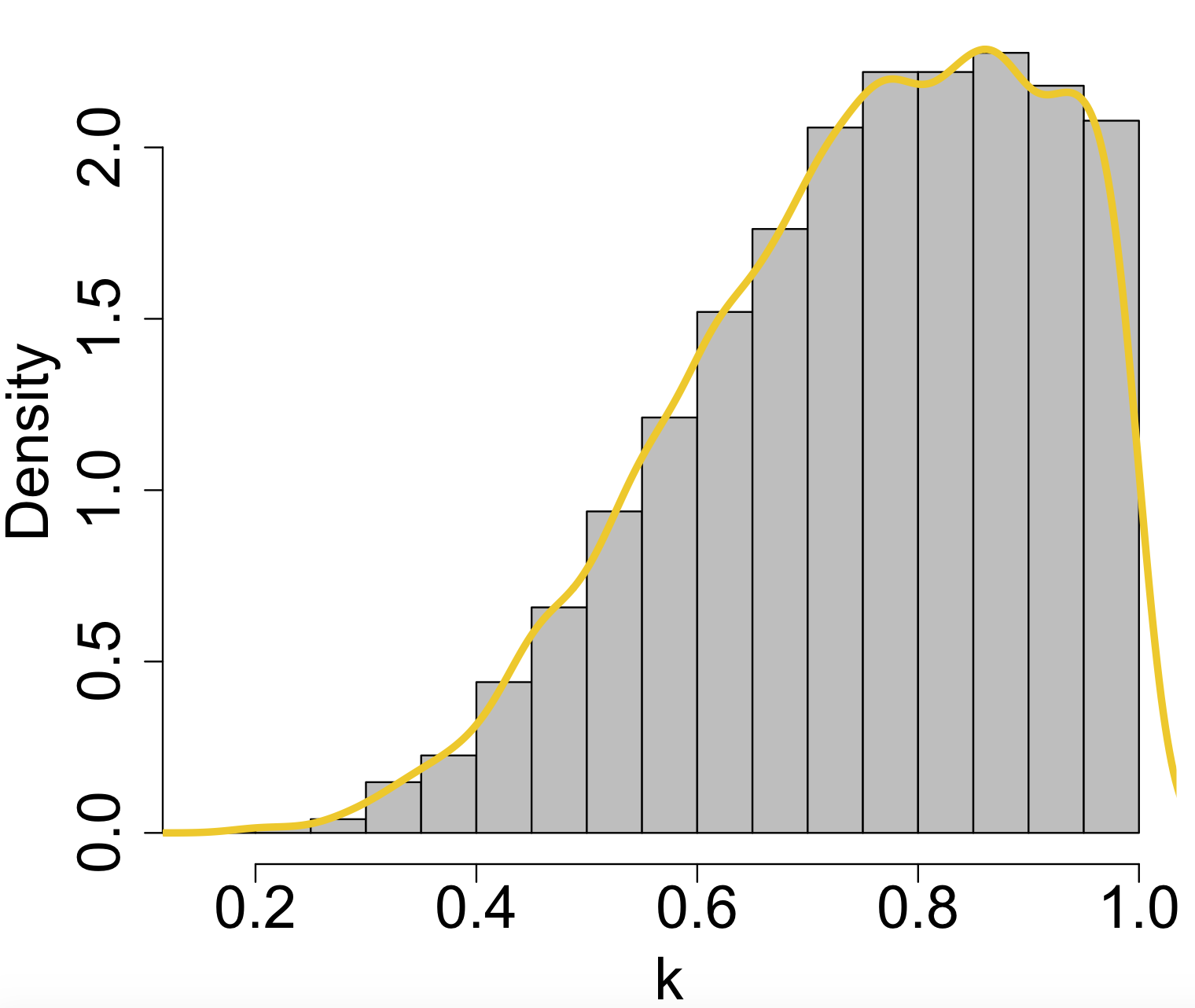}
\includegraphics[width=0.2\textwidth]{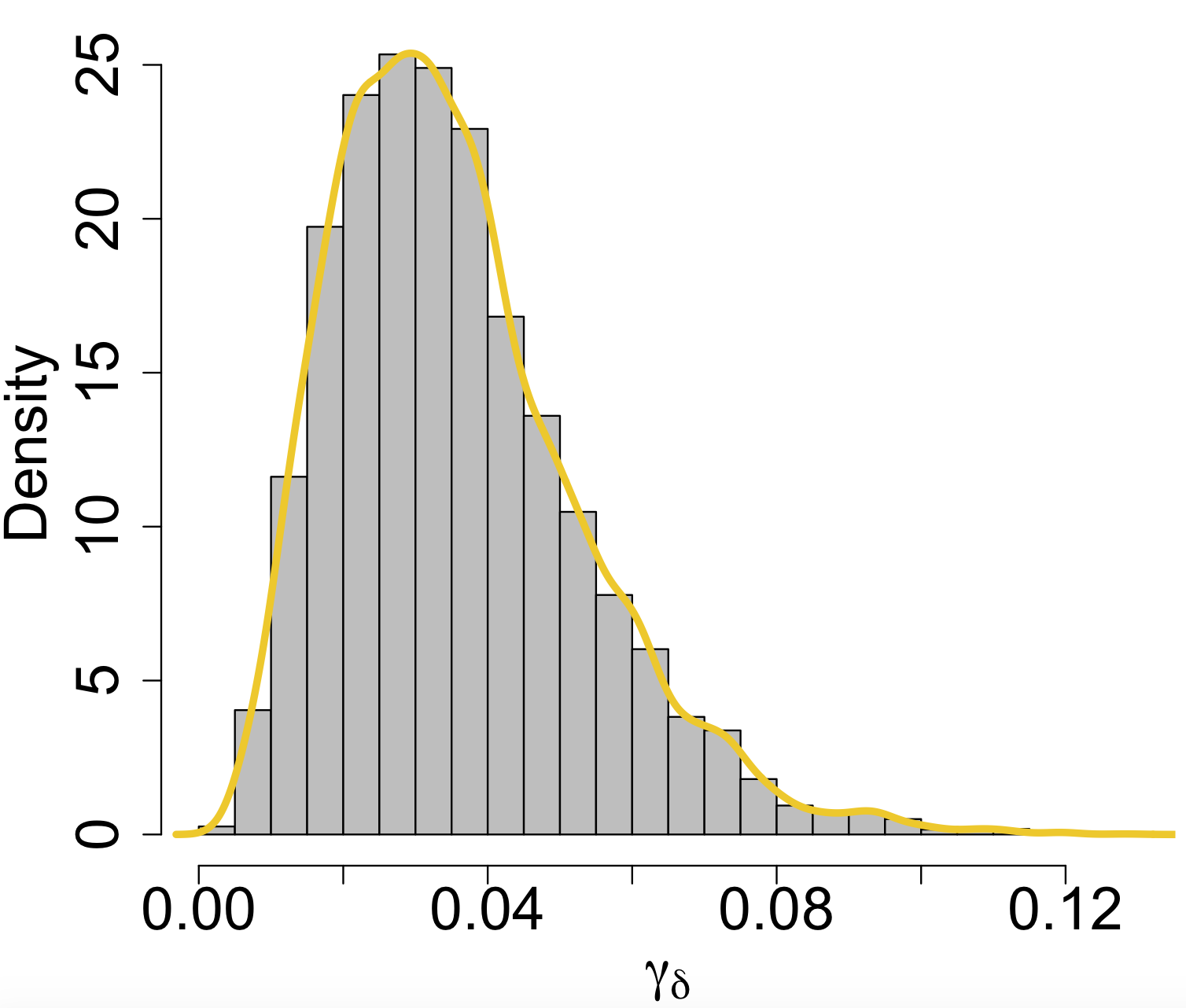}

\caption{\label{fig:posterior_densities} Histogram of marginal posterior density samples for the Marmande (top) and Mas Agenais (bottom) dataset. From left to right and top to bottom: third Strickler coefficient, standard deviation of measurement errors, proportion of observations for which the code is valid, variance ratio between measurement errors and model bias, and correlation length.}
\end{figure}

\paragraph{Bias detection.}

An exciting feature of the mixture model approach is the possiblity to detect a model bias separately for each observation.
Specifically, we can compute the probability of a code bias $\mathbb P[\zeta_i=1|z]$
through Rao-Blackwellization, using (\ref{zeta}), taking the mean of the conditional probabilities: 
\begin{eqnarray*}
\widehat{\mathbb P}_{RB}[\zeta_i=1|z] &=& \frac{1}{T}\sum_{t=1}^T P[\zeta_i=1|\alpha^{(t)},\theta^{(t)},\lambda^{(t)},\delta_i^{(t)},z]\\
&=& \frac{1}{T}\sum_{t=1}^T
\frac
{
(1-\alpha)^{(t)}f_{\MF_1}( y_i | x_i, \delta^{(t-1)}, \pmb{\theta}^{(t)}, \lambda^{(t)}, k^{(t)}, \gamma_\delta^{(t)} )
}
{
\alpha^{(t)}f_{\MF_0}(y_i|x_i,\pmb{\theta}^{(t)},\lambda^{(t)})
+
(1-\alpha)^{(t)}f_{\MF_1}( y_i | x_i, \delta^{(t-1)}, \pmb{\theta}^{(t)}, \lambda^{(t)}, k^{(t)}, \gamma_\delta^{(t)} )
}
\end{eqnarray*}
Results are given in Tables~\ref{tab:nobias_p_Marmande} and \ref{tab:nobias_p_Masagenais}. In both cases, a significant amount of uncertainty remains about the presence or absence of a bias for each observation, as could be expected given the limited amount of available data. However, two interesting features are that: the presence of a model bias is most probable at low flow values, and in Mas Agenais rather than in Marmande, consistent with the estimated proportion of valid predictions in Figure~\ref{fig:posterior_densities}.

\begin{table}[h!]
\begin{tabular}{c|ccccccc}
Observation nb.		& 1		 & 2	  & 3	   & 4		& 5		 & 6	  & 7    \\ 
\hline
Bias probability  &  0.464 & 0.477 & 0.361 & 0.346 & 0.359 & 0.336 & 0.351 \\
\end{tabular}
\caption{\label{tab:nobias_p_Marmande} Probability of a code bias for each observation in Marmande.}
\end{table}

\begin{table}[h!]
\begin{tabular}{c|cccccccc}
Observation nb.		& 1		 & 2	  & 3	   & 4		& 5		 & 6	  & 7&8    \\ 
\hline
Bias probability  &  0.738 & 0.695 & 0.677 & 0.701 & 0.596 & 0.686 & 0.629 & 0.61 \\
\end{tabular}
\caption{\label{tab:nobias_p_Masagenais} Probability of a code bias for each observation in Mas Agenais.}
\end{table}

\paragraph{Pure code vs bias-corrected predictions.}


 We are now interested in comparing predictions from the mixture model~(\ref{eq:6}) with the actual data points. Introducing the latent variables~$\zeta_i$, it boils down to:
 $$
 y_i | \theta, \delta, \zeta_i, \lambda \stackrel{ind}{\sim}\mathcal N\left(g(x_i)\theta+\delta(x_i)\boldsymbol 1_{\{\zeta_i=1\}}, \lambda^2\right)
 $$
 Here we focus on the pure-code predictions $g(x_i)\theta$ and effective biases $\delta(x_i)\boldsymbol 1_{\{\zeta_i=1\}}$; suming the two results in bias-corrected predictions. 
 
Posterior distribution of both pure code vs bias-corrected predictions are shown in Figure~\ref{fig:posterior_vs_data} (left and right, respectively). 
Note that a positive code bias seems apparent at low flow values in both sites (left), though it is swamped in the observation noise, especially in Marmande. Bias-corrected predictions help mitigate this problem, at the cost of a slightly higher dispersion. Also, note that the bias correction is most apparent in Mas Agenais, as could be expected from the posterior probabilities of model biases in Table~\label{tab:nobias_p_Masagenais} and the posterior distributions of valid code predictions in Figure~\ref{fig:posterior_densities}.

 \begin{figure}[h]
 \centering
 \includegraphics[width=0.5\textwidth]{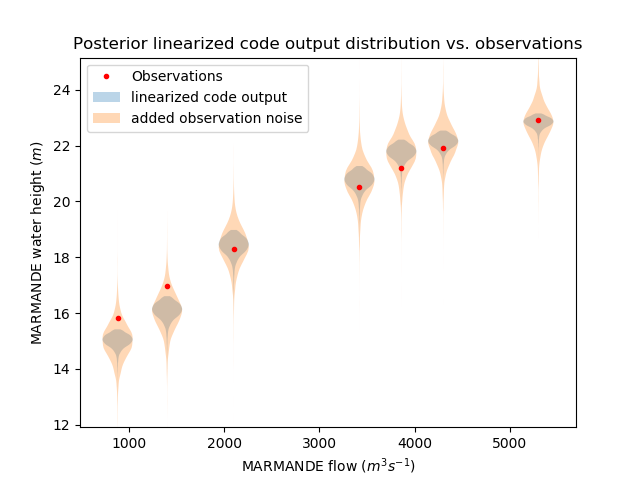}\hfill
 \includegraphics[width=0.5\textwidth]{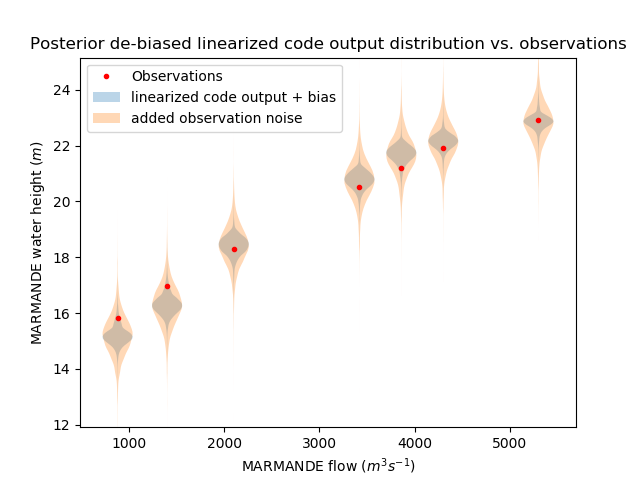}\\
 \includegraphics[width=0.5\textwidth]{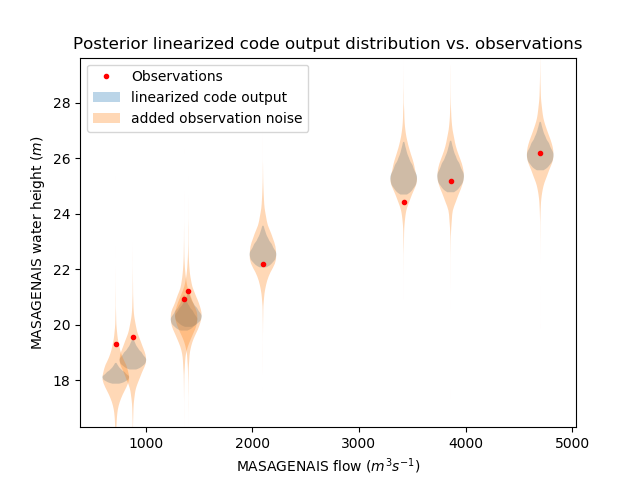}\hfill
 \includegraphics[width=0.5\textwidth]{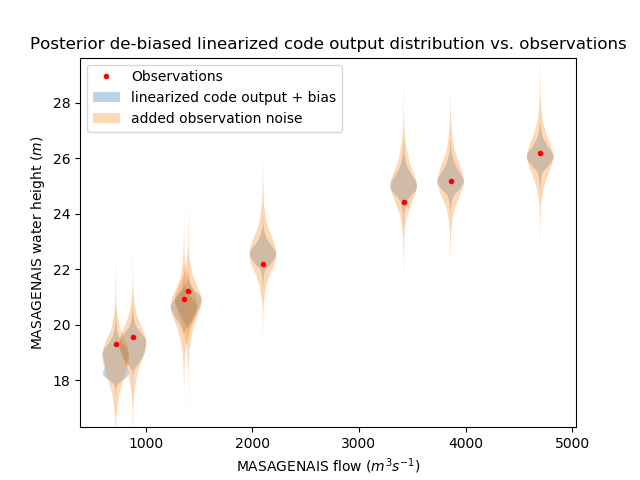}\\
 \caption{\label{fig:posterior_vs_data} Posterior linearized code predictions of water heights vs data, in Marmande (top) and Mas Agenais (bottom), before (left) and after (right) model bias correction.}
 \end{figure}

\section*{Conclusion}
Recently, substantial statistical research has focused on the development of methodology for using detailed numerical codes to carry out inference. The physical systems are modeled by the computer code that requires a set of inputs with some known and specified, others unknown. 
However, lack of enough available field data from the true physical system disables us to inform about the unknown inputs and also to inform about the uncertainty that is associated with a simulation based prediction. In other words, some uncertainties can arise from unknown calibration parameters or limited simulation runs.
Another may be provided by the discrepancy between the simulation code and the actual physical system. A lot of works has focused on overall verification and validation process of the numerical models so as to answer the question of whether or not the simulation model is useful for prediction.
Bayesian inference has been widely used in order to carry out both model calibration and prediction by relying on Gaussian process models to model unknown functions of the model inputs. The posterior distribution of the discrepancy term is therefore considered as an indicator of how well the code is matching reality.
Our focus here is on the validation of the numerical models in the case where the code is a simple linear function of the input and the calibration parameter. To do so, we first rely on \cite{GDMKPBAPEP2016}'s proposition who considered the code validation question as a Bayesian model selection problem. While the pure code and the discrepancy-corrected prediction are the two models under comparison, \cite{KKKMCPRJR2014}'s method has then been used for making decision. By embedding the competing models as the components of an encompassing mixture model, \cite{KKKMCPRJR2014}'s method draws on
the posterior distribution of the weights of both components for deciding about which model is most favoured by the data.   
The Bayesian analysis of the mixture model has been done based on an improper prior for the code parameter, the measurement error that are shared between the competing models and proper priors for other mixture parameters. The estimation is carried out using Markov chain Monte Carlo method. 
We study the sensitivity of the resulting posterior distribution to the prior choice of the correlation length of the model discrepancy Gaussian process prior for various simulated samples. Furthermore the behaviour of the parameter point estimates has been evaluated in the case where the code error is small and simulated based on a Gaussian process centered on the vector $0_n$. 
Because the model selection via mixture estimation model necessitates estimating all parameters of the mixture model, this method allows for the code calibration and validation at once. Our experiences have been that: (a) when the sample is simulated from the pure code, the posterior estimate of $\alpha$ is always close to one in favour of the true model and the posterior distribution of the component-wise parameters converges to the true value. 
(b) when the sample is simulated from discrepancy-corrected prediction, the simulation results illustrated  dependence of the posterior distribution of the mixture weight and the measurement noise on the value of the correlation length priors. 
To be precise, the simulation studies illustrate that the posterior distribution of $\alpha$ strongly supports the true model when the true correlation length $\gamma_\delta$ is equal to or greater than $0.1$ (Figure \ref{figure:m1_1}). For the values of the correlation length very close to zero ($\gamma_\delta=0.01$), the similarity between the calibrated code and the discrepancy-corrected prediction leads $\alpha$ to approach the boundaries of the unit interval meaning that $\alpha$ supports both models for the data in that case. This leads to obtain a posterior estimate shifted to greater value that the true one for the measurement noise (as displayed by Figure \ref{figure:m1_1}).    
On the other hand, the Bayesian estimate of the code parameter $\pmb{\theta}$ is always close to the true value whatever the value of the correlation length is.
A last point about the results is that by increasing the sample size, the posterior estimate of $\alpha$ very quickly converges to one for the true model when the correlation length is $0.3$ ($n>22$ as shown by Figure \ref{figure:3}).

\section*{ACKNOWLEDGMENTS}

We are grateful to Christian Robert for fruitful discussions that we had about this work. This work was supported 
by the research contract $\text{n}^\circ$ 8610-5920026033 between \' Electricit\'e de
France and AgroParisTech. 

\newpage
\bibliographystyle{ims}  
\bibliography{references}

\newpage
\appendix
\label{app}

\section{Proof of Theorem 1}\label{appendix}
\begin{proof}
For $Y=(y_1, y_2, \ldots, y_n)$ and $X=(x_1 \quad x_2 \quad x_n)^T$, the marginal likelihood is defined as 
\scriptsize \begin{equation}\label{eq:18}
m_{\MF_{\alpha}}(Y|X)=\int\prod_{i=1}^n\alpha \nicefrac{\exp\left(-\frac{(y_i-g(x_i)\pmb{\theta})^2}{2\lambda^2} \right)}{(2\pi\lambda^2)^{1/2}}+ (1-\alpha)\nicefrac{\exp\left(-\frac{(y_i-g(x_i)\pmb{\theta}-\delta(x_i))^2}{2\lambda^2}\right)}{(2\pi\lambda^2)^{1/2}}\pi(\pmb{\theta},\lambda,\alpha,\delta,k,\gamma_\delta)\mathrm{d}(\pmb{\theta},\lambda,\alpha,\delta,k,\gamma_\delta)
\end{equation}
\normalsize

The $n-$fold binomial product \scriptsize $\prod_{i=1}^n\alpha \exp\left(-\frac{1}{2\lambda^2}(y_i-g(x_i)\pmb{\theta})^2 \right)+ (1-\alpha)\exp\left(-\frac{1}{2\lambda^2}(y_i-g(x_i)\pmb{\theta}-\delta(x_i))^2\right)$ \normalsize can be expanded into a sum involving terms of the form 
$$\alpha^a (1-\alpha)^b \exp\left(-\frac{1}{2\lambda^2}\left[\sum_{i\in \mathcal{A}\subset \{1, \ldots, n\}}(y_i-g(x_i)\pmb{\theta})^2+\sum_{j\neq i}(y_i-g(x_i)\pmb{\theta}-\delta(x_i))^2\right] \right)$$
where the exponents $a$ and $b$ are nonnegative integers with $a+b = n$. We show here that for any sample size $n>d$, it is possible to expand the $n-$fold binomial product in \eqref{eq:18} into a summation of $2^n$ terms. To do so, let $\mathcal{S}=\{1, 2, \ldots, n \}$ and for $\nu=1, \ldots, n-1$, $\mathcal{A}^{(\nu)}$ be a set of all possible $\nu-$combination of $\mathcal{S}$'s elements. 
In other words, $\mathcal{A}^{(\nu)}$ is supposed to be a set of all possible subsets of $\nu$ distinct elements of $\mathcal{S}$. 
$\mathcal{A}^{(\nu)}$ has a cardinality of $\binom n\nu$ denoted by $|\mathcal{A}^{(\nu)}|$ since the number of all $\nu-$combination of $S$ equals $\binom n\nu$. If $\Omega=\{\mathcal{A}^{(1)}, \mathcal{A}^{(2)}, \ldots, \mathcal{A}^{(n-1)} \}$, then it is easy to show that $|\Omega|=\sum_{\nu=1}^{n-1}|\mathcal{A}^{(\nu)}|=2^n-2$.
We also denote by $\mathcal{A}^{(\nu)}_j$, the $j-$th element of $\mathcal{A}^{(\nu)}$ which is a subset of $S$ with $\nu$ elements. The binomial product in \eqref{eq:18} is then given by

\small \begin{align}\label{eq:19}
\prod_{i=1}^n\alpha& \nicefrac{\exp\left(-\frac{1}{2\lambda^2}(y_i-g(x_i)\pmb{\theta})^2 \right)}{(2\pi\lambda^2)^{1/2}}+ (1-\alpha)\nicefrac{\exp\left(-\frac{1}{2\lambda^2}(y_i-g(x_i)\pmb{\theta}-\delta(x_i))^2\right)}{(2\pi\lambda^2)^{1/2}}\nonumber\\
&=\alpha^n \nicefrac{\exp\left(-\frac{1}{2\lambda^2}\left[\sum_{i=1}^n(y_i-g(x_i)\pmb{\theta})^2\right] \right)}{(2\pi\lambda^2)^{n/2}}\nonumber\\
&+\alpha^{n-1}(1-\alpha)\sum_{j=1}^{\binom n1}\nicefrac{\exp\left(-\frac{1}{2\lambda^2}\left[\sum_{i\in \mathcal{S}-\mathcal{A}^{(1)}_j}(y_i-g(x_i)\pmb{\theta})^2+\sum_{i\in \mathcal{A}^{(1)}_j}(y_i-g(x_i)\pmb{\theta}-\delta(x_i))^2\right] \right)}{(2\pi\lambda^2)^{n/2}}\nonumber\\
&+\alpha^{n-2}(1-\alpha)^2\sum_{j=1}^{\binom n2}\nicefrac{\exp\left(-\frac{1}{2\lambda^2}\left[\sum_{i\in \mathcal{S}-\mathcal{A}^{(2)}_j}(y_i-g(x_i)\pmb{\theta})^2+\sum_{i\in \mathcal{A}^{(2)}_j}(y_i-g(x_i)\pmb{\theta}-\delta(x_i))^2\right] \right)}{(2\pi\lambda^2)^{n/2}}\nonumber\\
&\vdots\nonumber\\
&+\alpha(1-\alpha)^{n-1}\sum_{j=1}^{\binom n{n-1}}\nicefrac{\exp\left(-\frac{1}{2\lambda^2}\left[\sum_{i\in \mathcal{S}-\mathcal{A}^{(n-1)}_j}(y_i-g(x_i)\pmb{\theta})^2+\sum_{i\in \mathcal{A}^{(n-1)}_j}(y_i-g(x_i)\pmb{\theta}-\delta(x_i))^2\right] \right)}{(2\pi\lambda^2)^{n/2}}\nonumber\\
&+(1-\alpha)^n \nicefrac{\exp\left(-\frac{1}{2\lambda^2}\left[\sum_{i=1}^n(y_i-g(x_i)\pmb{\theta}-\delta(x_i))^2\right] \right)}{(2\pi\lambda^2)^{n/2}}
\end{align}
\normalsize
which can be rewritten as
\small \begin{align}\label{eq:20}
\prod_{i=1}^n\alpha& \nicefrac{\exp\left(-\frac{1}{2\lambda^2}(y_i-g(x_i)\pmb{\theta})^2 \right)}{(2\pi\lambda^2)^{1/2}}+ (1-\alpha)\nicefrac{\exp\left(-\frac{1}{2\lambda^2}(y_i-g(x_i)\pmb{\theta}-\delta(x_i))^2\right)}{(2\pi\lambda^2)^{1/2}}\nonumber\\
&=\alpha^n \nicefrac{\exp\left(-\frac{1}{2\lambda^2}\left[\sum_{i=1}^n(y_i-g(x_i)\pmb{\theta})^2\right] \right)}{(2\pi\lambda^2)^{n/2}}\nonumber\\
&+\sum_{l=1}^n\alpha^{n-l}(1-\alpha)^l\sum_{j=1}^{\binom nl}\nicefrac{\exp\left(-\frac{1}{2\lambda^2}\left[\sum_{i\in \mathcal{S}-\mathcal{A}^{(l)}_j}(y_i-g(x_i)\pmb{\theta})^2+\sum_{i\in \mathcal{A}^{(l)}_j}(y_i-g(x_i)\pmb{\theta}-\delta(x_i))^2\right] \right)}{(2\pi\lambda^2)^{n/2}}\nonumber\\
&+(1-\alpha)^n \nicefrac{\exp\left(-\frac{1}{2\lambda^2}\left[\sum_{i=1}^n(y_i-g(x_i)\pmb{\theta}-\delta(x_i))^2\right] \right)}{(2\pi\lambda^2)^{n/2}}.
\end{align}
\normalsize
The marginal likelihood is then computed by marginalizing out the parameters $\alpha, \pmb{\theta}, \lambda, \delta, \gamma_\delta$ and $k$ from \eqref{eq:20}. To do so, we first integrate out the parameter $\delta$, 
\small \begin{align}\label{eq:21}
m_{\MF_{\alpha}}(Y|X)&=\int\int_{\delta}\alpha^n \nicefrac{\exp\left(-\frac{1}{2\lambda^2}\left[\sum_{i=1}^n(y_i-g(x_i)\pmb{\theta})^2\right] \right)}{(2\pi\lambda^2)^{n/2}}\pi(\delta)\mathrm{d}(\delta)\pi(\pmb{\theta}, \lambda, \alpha, k,\gamma_\delta)\mathrm{d}(\pmb{\theta}, \lambda, \alpha, k,\gamma_\delta)\nonumber\\
+\int \int_{\delta}\sum_{l=1}^n&\alpha^{n-l}(1-\alpha)^l\sum_{j=1}^{\binom nl}\nicefrac{\exp\left(-\frac{1}{2\lambda^2}\left[\sum_{i\in \mathcal{S}-\mathcal{A}^{(l)}_j}(y_i-g(x_i)\pmb{\theta})^2+\sum_{i\in \mathcal{A}^{(l)}_j}(y_i-g(x_i)\pmb{\theta}-\delta(x_i))^2\right] \right)}{(2\pi\lambda^2)^{n/2}}\pi(\delta)\mathrm{d}(\delta)\nonumber\\
&\pi(\pmb{\theta}, \lambda, \alpha, k,\gamma_\delta)\mathrm{d}(\pmb{\theta}, \lambda, \alpha, k,\gamma_\delta)\nonumber\\
+\int\int_{\delta}&(1-\alpha)^n \nicefrac{\exp\left(-\frac{1}{2\lambda^2}\left[\sum_{i=1}^n(y_i-g(x_i)\pmb{\theta}-\delta(x_i))^2\right] \right)}{(2\pi\lambda^2)^{n/2}}\pi(\delta)\mathrm{d}(\delta)\pi(\pmb{\theta}, \lambda, \alpha, k,\gamma_\delta)\mathrm{d}(\pmb{\theta}, \lambda, \alpha, k,\gamma_\delta).
\end{align}
\normalsize
For the first term on the right side of equation \eqref{eq:21}, the integral with respect to $\delta$ is removed since $\int_{\delta}\pi(\delta)\mathrm{d}(\delta)=1$. For the second term, let  $n_j^l=|\mathcal{A}^{(l)}_j|$ be the cardinality of $\mathcal{A}^{(l)}_j$. 
We then denote by $Y_{\mathcal{A}^{(l)}_j}$: an $n_j^l-$components column vector of $y_i$ for all $i \in \mathcal{A}^{(l)}_j$. We also suppose that $g(x_i)_{i \in \mathcal{A}^{(l)}_j}$ and $\delta(x_i)_{i \in \mathcal{A}^{(l)}_j}$ indicate an $n_j^l\times d$ matrix of $g(x_i)$s and $n_j^l-$components column vector of $\delta(x_i)$s obtained for all $i \in \mathcal{A}^{(l)}_j$.  Since $\delta(X)$ follows a Gaussian process centered on $n-$components vector of $0$ and covariance matrix $\Sigma_\delta=\nicefrac{1}{k}\lambda^2\text{Corr}_{\gamma_\delta}$, then the marginal distribution over the subset $\{ i \in \mathcal{A}^{(l)}_j\}$ of $\delta(x_i)$ is a Gaussian process with the mean vector: $n_j^l-$components column vector of $0$ and $n_j^l\times n_j^l$ covariance matrix $\Sigma_{\mathcal{A}^{(l)}_j}=\nicefrac{1}{k}\lambda^2\text{Corr}_{\gamma_\delta;\mathcal{A}^{(l)}_j}$ obtained by dropping the irrelevant variables, $\delta(x_i); i \in \{\mathcal{S}-\mathcal{A}^{(l)}_j\}$, from the mean vector and the covariance matrix of $\delta(X)$. Using these notations, we can write 
\small $$\sum_{i\in \mathcal{A}^{(l)}_j}(y_i-g(x_i)\pmb{\theta}-\delta(x_i))^2=(Y_{\mathcal{A}^{(l)}_j}-g(x_i)_{i \in \mathcal{A}^{(l)}_j}\pmb{\theta}-\delta(x_i)_{i \in \mathcal{A}^{(l)}_j})^T(Y_{\mathcal{A}^{(l)}_j}-g(x_i)_{i \in \mathcal{A}^{(l)}_j}\pmb{\theta}-\delta(x_i)_{i \in \mathcal{A}^{(l)}_j})$$
\normalsize
when for any vector or matrix $A$, $A^T$ indicates the transpose of $A$. By replacing $\pi(\delta)$ by the Gaussian process prior density in \eqref{eq:21}, we obtain
\scriptsize \begin{align}\label{eq:22}
m_{\MF_{\alpha}}(Y|X)&=\int\int_{\pmb{\theta}}\alpha^n \nicefrac{\exp\left(-\frac{1}{2\lambda^2}\left[\sum_{i=1}^n(y_i-g(x_i)\pmb{\theta})^2\right] \right)}{(2\pi\lambda^2)^{n/2}}\pi(\pmb{\theta}, \lambda, \alpha, k,\gamma_\delta)\mathrm{d}(\pmb{\theta}, \lambda, \alpha, k,\gamma_\delta)\nonumber\\
+\int \sum_{l=1}^n&\alpha^{n-l}(1-\alpha)^l\sum_{j=1}^{\binom nl}\nicefrac{\exp\left(-\frac{1}{2\lambda^2}\left[\sum_{i\in \mathcal{S}-\mathcal{A}^{(l)}_j}(y_i-g(x_i)\pmb{\theta})^2\right] \right)}{(2\pi\lambda^2)^{n/2}}|\text{Corr}_{\gamma_\delta;\mathcal{A}^{(l)}_j}|^{-1/2}\nonumber\\
\int_{\delta(x_i)_{i \in \mathcal{A}^{(l)}_j}}&\nicefrac{\exp\left(-\frac{1}{2\lambda^2}\left[(Y_{\mathcal{A}^{(l)}_j}-g(x_i)_{i \in \mathcal{A}^{(l)}_j}\pmb{\theta}-\delta(x_i)_{i \in \mathcal{A}^{(l)}_j})^T(Y_{\mathcal{A}^{(l)}_j}-g(x_i)_{i \in \mathcal{A}^{(l)}_j}\pmb{\theta}-\delta(x_i)_{i \in \mathcal{A}^{(l)}_j})+\delta^T(x_i)_{i \in \mathcal{A}^{(l)}_j}k\text{Corr}_{\gamma_\delta;\mathcal{A}^{(l)}_j}^{-1}\delta(x_i)_{i \in \mathcal{A}^{(l)}_j}\right] \right)}{( \nicefrac{2\pi\lambda^2}{k})^{\nicefrac{n_j^l}{2}}}\nonumber\\
&\mathrm{d}(\delta(x_i)_{i \in \mathcal{A}^{(l)}_j})\pi(\pmb{\theta}, \lambda, \alpha, k,\gamma_\delta)\mathrm{d}(\pmb{\theta}, \lambda, \alpha, k,\gamma_\delta)\nonumber\\
+\int\int_{\delta}&(1-\alpha)^n|\text{Corr}_{\gamma_\delta}|^{-1/2} (2\pi\lambda^2)^{-n/2}\nicefrac{\exp\left(-\frac{1}{2\lambda^2}\left[\sum_{i=1}^n(y_i-g(x_i)\pmb{\theta}-\delta(x_i))^2+\delta^T(X)(k\text{Corr}_{\gamma_\delta})^{-1}\delta(X)\right] \right)}{(\nicefrac{2\pi \lambda^2}{k})^{n/2}}\mathrm{d}(\delta)\nonumber\\
&\pi(\pmb{\theta}, \lambda, \alpha, k,\gamma_\delta)\mathrm{d}(\pmb{\theta}, \lambda, \alpha, k,\gamma_\delta).
\end{align}
\normalsize
If we denote \small $\sum_{i=1}^n(y_i-g(x_i)\pmb{\theta}-\delta(x_i))^2=(Y-g(X)\pmb{\theta}-\delta(X))^T(Y-g(X)\pmb{\theta}-\delta(X))$ \normalsize, then the integrals with respect to $\delta$ can be written as follows
\scriptsize \begin{align}\label{eq:23}
\int_{\delta(x_i)_{i \in \mathcal{A}^{(l)}_j}}&\nicefrac{\exp\left(-\frac{1}{2\lambda^2}\left[(Y_{\mathcal{A}^{(l)}_j}-g(x_i)_{i \in \mathcal{A}^{(l)}_j}\pmb{\theta}-\delta(x_i)_{i \in \mathcal{A}^{(l)}_j})^T(Y_{\mathcal{A}^{(l)}_j}-g(x_i)_{i \in \mathcal{A}^{(l)}_j}\pmb{\theta}-\delta(x_i)_{i \in \mathcal{A}^{(l)}_j})+\delta^T(x_i)_{i \in \mathcal{A}^{(l)}_j}k\text{Corr}_{\gamma_\delta;\mathcal{A}^{(l)}_j}^{-1}\delta(x_i)_{i \in \mathcal{A}^{(l)}_j}\right] \right)}{(\nicefrac{2\pi \lambda^2}{k})^{\nicefrac{n_j^l}{2}}}\nonumber\\
&\mathrm{d}(\delta(x_i)_{i \in \mathcal{A}^{(l)}_j})\nonumber\\
&=\int_{\delta(x_i)_{i \in \mathcal{A}^{(l)}_j}}\nicefrac{\exp\left(-\frac{1}{2\lambda^2}\left[
(\delta(x_i)_{i \in \mathcal{A}^{(l)}_j}-\pmb{\mu}_{j,l}^{\delta})^T\pmb{\Sigma}_{j,l}(\delta(x_i)_{i \in \mathcal{A}^{(l)}_j}-\pmb{\mu}_{j,l}^{\delta})+\text{Const}_j^l
\right] \right)}{(\nicefrac{2\pi \lambda^2}{k})^{\nicefrac{n_j^l}{2}}}\mathrm{d}(\delta(x_i)_{i \in \mathcal{A}^{(l)}_j})\nonumber\\
\end{align}
\normalsize
and
\small \begin{align}\label{eq:24}
\int_{\delta}&\nicefrac{\exp\left(-\frac{1}{2\lambda^2}\left[\sum_{i=1}^n(y_i-g(x_i)\pmb{\theta}-\delta(x_i))^2+\delta^T(X)k\text{Corr}_{\gamma_\delta}^{-1}\delta(X)\right] \right)}{(\nicefrac{2\pi \lambda^2}{k})^{n/2}}\mathrm{d}(\delta)\nonumber\\
&=\int_{\delta}\nicefrac{\exp\left(-\frac{1}{2\lambda^2}\left[
(\delta(X)-\pmb{\mu}_{n}^{\delta})^T\pmb{\Sigma}_n(\delta(X)-\pmb{\mu}_{n}^{\delta})+\text{Const}_n
\right] \right)}{(\nicefrac{2\pi \lambda^2}{k})^{n/2}}\mathrm{d}(\delta)
\end{align}
\normalsize
where
\small \begin{align}\label{const:1}
\pmb{\mu}_{j,l}^{\delta}&=\left(\pmb{\Sigma}_{j,l} \right)^{-1}(Y_{\mathcal{A}^{(l)}_j}-g(x_i)_{i \in \mathcal{A}^{(l)}_j}\pmb{\theta})\nonumber\\
\pmb{\Sigma}_{j,l}&=\mathbb{I}_{n_j^l}+k\text{Corr}_{\gamma_\delta;\mathcal{A}^{(l)}_j}^{-1}\nonumber\\
\text{Const}_j^l&=(Y_{\mathcal{A}^{(l)}_j}-g(x_i)_{i \in \mathcal{A}^{(l)}_j}\pmb{\theta})^T
(Y_{\mathcal{A}^{(l)}_j}-g(x_i)_{i \in \mathcal{A}^{(l)}_j}\pmb{\theta})
-(Y_{\mathcal{A}^{(l)}_j}-g(x_i)_{i \in \mathcal{A}^{(l)}_j}\pmb{\theta})^T
\left(\pmb{\Sigma}_{j,l} \right)^{-1}
(Y_{\mathcal{A}^{(l)}_j}-g(x_i)_{i \in \mathcal{A}^{(l)}_j}\pmb{\theta})\nonumber\\
\pmb{\mu}_n^{\delta}&=\left(\pmb{\Sigma}_n \right)^{-1}(Y-g(X)\pmb{\theta})\nonumber\\
\pmb{\Sigma}_n&=\mathbb{I}_n+k\text{Corr}_{\gamma_\delta}^{-1}\nonumber\\
\text{Const}_n&=(Y-g(X)\pmb{\theta})^T(Y-g(X)\pmb{\theta})-(Y-g(X)\pmb{\theta})^T\left(\pmb{\Sigma}_n \right)^{-1}(Y-g(X)\pmb{\theta})
\end{align}
\normalsize
and $\mathbb{I}_{n_j^l}$ indicates an identity matrix of size $n_j^l$. 
In order to remove the integral in \eqref{eq:23} and \eqref{eq:24}, we must prove that $\pmb{\Sigma}_{j,l}$ is positive definite. For every non-zero column vector $\nu$ of $n$ real numbers, if we define $w=\text{Corr}_{\gamma_\delta;\mathcal{A}^{(l)}_j} \nu$, then $$w^T\text{Corr}_{\gamma_\delta;\mathcal{A}^{(l)}_j}^{-1}w=\nu^T\text{Corr}_{\gamma_\delta;\mathcal{A}^{(l)}_j}^T\text{Corr}_{\gamma_\delta;\mathcal{A}^{(l)}_j}^{-1}\text{Corr}_{\gamma_\delta;\mathcal{A}^{(l)}_j}\nu=\nu^T\text{Corr}_{\gamma_\delta;\mathcal{A}^{(l)}_j}^T\nu.$$
Since the correlation matrix 
$\text{Corr}_{\gamma_\delta;\mathcal{A}^{(l)}_j}$ is symmetric and positive definite, so 
$\text{Corr}_{\gamma_\delta;\mathcal{A}^{(l)}_j}^T=\text{Corr}_{\gamma_\delta;\mathcal{A}^{(l)}_j}$ and $w^T\text{Corr}_{\gamma_\delta;\mathcal{A}^{(l)}_j}^{-1}w=\nu^T\text{Corr}_{\gamma_\delta;\mathcal{A}^{(l)}_j}\nu>0$. 
Beside that, for the diagonal matrix 
$\mathbb{I}_{n_j^l}$, we have $\nu^T\mathbb{I}_{n_j^l}\nu=\sum_{i=1}^{n_j^l}\nu_i^2>0$. Because both $\mathbb{I}_{n_j^l}$, $\text{Corr}_{\gamma_\delta;\mathcal{A}^{(l)}_j}^{-1}$ 
are symmetric, and from $k\in (0, 1)$ and 
$\nu^T(\mathbb{I}_{n_j^l}+k \text{Corr}_{\gamma_\delta;\mathcal{A}^{(l)}_j}^{-1})\nu=\nu^T\mathbb{I}_{n_j^l}\nu+\nu^Tk\text{Corr}_{\gamma_\delta;\mathcal{A}^{(l)}_j}^{-1}\nu>0 $, 
we conclude that the matrix $\pmb{\Sigma}_{j,l}$ is therefore symmetric and positive definite. By using the same method, we can easily demonstrate that $\pmb{\Sigma}_n$ is also symmetric and positive definite.
Two integrals \eqref{eq:23} and \eqref{eq:24} are then given by
\scriptsize \begin{align}\label{eq:25}
\int_{\delta(x_i)_{i \in \mathcal{A}^{(l)}_j}}&\nicefrac{\exp\left(-\frac{1}{2\lambda^2}\left[(Y_{\mathcal{A}^{(l)}_j}-g(x_i)_{i \in \mathcal{A}^{(l)}_j}\pmb{\theta}-\delta(x_i)_{i \in \mathcal{A}^{(l)}_j})^T(Y_{\mathcal{A}^{(l)}_j}-g(x_i)_{i \in \mathcal{A}^{(l)}_j}\pmb{\theta}-\delta(x_i)_{i \in \mathcal{A}^{(l)}_j})+\delta^T(x_i)_{i \in \mathcal{A}^{(l)}_j}k\text{Corr}_{\gamma_\delta;\mathcal{A}^{(l)}_j}^{-1}\delta(x_i)_{i \in \mathcal{A}^{(l)}_j}\right] \right)}{(\nicefrac{2\pi \lambda^2}{k})^{n_j^l/2}}\nonumber\\
&\mathrm{d}(\delta(x_i)_{i \in \mathcal{A}^{(l)}_j})\nonumber\\
&=\exp\left(-\frac{1}{2\lambda^2}\left[\text{Const}_j^l\right] \right)|\Sigma_{j,l}|^{-\nicefrac{1}{2}}k^{\nicefrac{n_j^l}{2}}\nonumber\\
\end{align}
\normalsize
and
\small \begin{align}\label{eq:26}
\int_{\delta}&\nicefrac{\exp\left(-\frac{1}{2\lambda^2}\left[\sum_{i=1}^n(y_i-g(x_i)\pmb{\theta}-\delta(x_i))^2+\delta^T(X)k\text{Corr}_{\gamma_\delta}^{-1}\delta(X)\right] \right)}{(\nicefrac{2\pi \lambda^2}{k})^{n/2}}\mathrm{d}(\delta)\nonumber\\
&=\exp\left(-\frac{1}{2\lambda^2}\left[\text{Const}_n\right] \right)|\Sigma_{n}|^{-\nicefrac{1}{2}}k^{\nicefrac{n}{2}}
\end{align}
\normalsize
By replacing two results \eqref{eq:25} and \eqref{eq:26} in \eqref{eq:22}, we obtain
\small \begin{align}\label{eq:27}
m_{\MF_{\alpha}}(Y|X)&=\int\int_{\pmb{\theta}}\alpha^n \nicefrac{\exp\left(-\frac{1}{2\lambda^2}\left[\sum_{i=1}^n(y_i-g(x_i)\pmb{\theta})^2\right] \right)}{(2\pi\lambda^2)^{n/2}}\pi(\pmb{\theta}, \lambda, \alpha, k)\mathrm{d}(\pmb{\theta}, \lambda, \alpha, k)\nonumber\\
&+\int \int_{\pmb{\theta}}\sum_{l=1}^n\alpha^{n-l}(1-\alpha)^l\sum_{j=1}^{\binom nl}\nicefrac{\exp\left(-\frac{1}{2\lambda^2}\left[\sum_{i\in \mathcal{S}-\mathcal{A}^{(l)}_j}(y_i-g(x_i)\pmb{\theta})^2\right] \right)}{(2\pi\lambda^2)^{n/2}}|\text{Corr}_{\gamma_\delta;\mathcal{A}^{(l)}_j}|^{-1/2}\nonumber\\
&\exp\left(-\frac{1}{2\lambda^2}\left[\text{Const}_j^l\right] \right)|\Sigma_{j,l}|^{-\nicefrac{1}{2}}k^{\nicefrac{n_j^l}{2}}\pi(\pmb{\theta}, \lambda, \alpha, k,\gamma_\delta)\mathrm{d}(\pmb{\theta}, \lambda, \alpha, k,\gamma_\delta)\nonumber\\
+\int\int_{\pmb{\theta}}(1-\alpha)^n&|\text{Corr}_{\gamma_\delta}|^{-1/2} (2\pi\lambda^2)^{-n/2}\exp\left(-\frac{1}{2\lambda^2}\left[\text{Const}_n\right] \right)|\Sigma_{n}|^{-\nicefrac{1}{2}}k^{\nicefrac{n}{2}}\pi(\pmb{\theta}, \lambda, \alpha, k,\gamma_\delta)\mathrm{d}(\pmb{\theta}, \lambda, \alpha, k,\gamma_\delta)
\end{align}
\normalsize
which is rewritten as follows
\small \begin{align}\label{eq:28}
m_{\MF_{\alpha}}(Y|X)&=\int\int_{\pmb{\theta}}\alpha^n \nicefrac{\exp\left(-\frac{1}{2\lambda^2}\left[\sum_{i=1}^n(y_i-g(x_i)\pmb{\theta})^2\right] \right)}{(2\pi\lambda^2)^{n/2}}\pi(\pmb{\theta}, \lambda, \alpha, k)\mathrm{d}(\pmb{\theta}, \lambda, \alpha, k)\nonumber\\
&+\int \sum_{l=1}^n\alpha^{n-l}(1-\alpha)^l\sum_{j=1}^{\binom nl}\nicefrac{|\text{Corr}_{\gamma_\delta;\mathcal{A}^{(l)}_j}|^{-1/2}|\Sigma_{j,l}|^{-\nicefrac{1}{2}}k^{\nicefrac{n_j^l}{2}}}{(2\pi\lambda^2)^{n/2}}\nonumber\\
&\int_{\pmb{\theta}}\exp\left(-\frac{1}{2\lambda^2}\left[\sum_{i\in \mathcal{S}-\mathcal{A}^{(l)}_j}(y_i-g(x_i)\pmb{\theta})^2+\text{Const}_j^l\right] \right)\pi(\pmb{\theta}, \lambda, \alpha, k,\gamma_\delta)\mathrm{d}(\pmb{\theta}, \lambda, \alpha, k,\gamma_\delta)\nonumber\\
+\int(1-\alpha)^n&\nicefrac{|\text{Corr}_{\gamma_\delta}|^{-1/2}|\Sigma_{n}|^{-\nicefrac{1}{2}}k^{\nicefrac{n}{2}}}{(2\pi\lambda^2)^{n/2}}\int_{\pmb{\theta}}\exp\left(-\frac{1}{2\lambda^2}\left[\text{Const}_n\right] \right)\pi(\pmb{\theta}, \lambda, \alpha, k,\gamma_\delta)\mathrm{d}(\pmb{\theta}, \lambda, \alpha, k,\gamma_\delta)
\end{align}
\normalsize

Since the prior distribution of $\pmb{\theta}, \lambda$ is $\pi(\pmb{\theta}, \lambda)=\nicefrac{1}{\lambda}$, then the first integral with respect to $\pmb{\theta}$ in \eqref{eq:28} can be written as 
\begin{align}\label{eq:add}
\int_{\pmb{\theta}}& \nicefrac{\exp\left(-\frac{1}{2\lambda^2}\left[\sum_{i=1}^n(y_i-g(x_i)\pmb{\theta})^2\right] \right)}{(2\pi\lambda^2)^{n/2}}\mathrm{d}(\pmb{\theta})\nonumber\\
&=\int_{\pmb{\theta}} \nicefrac{\exp\left(-\frac{1}{2\lambda^2}(\pmb{\theta}-\pmb{\mu}_{(1,n)}^{\pmb{\theta}})^T(g^T(X)g(X))(\pmb{\theta}-\pmb{\mu}_{(1,n)}^{\pmb{\theta}})-\frac{\text{Const}_{(1,n)}}{2\lambda^2} \right)}{(2\pi\lambda^2)^{n/2}}\mathrm{d}(\pmb{\theta})\nonumber\\
&=(2\pi \lambda^2)^{\nicefrac{(d-n)}{2}}|g^T(X)g(X)|^{-\nicefrac{1}{2}}\exp(-\frac{\text{Const}_{(1,n)}}{2\lambda^2}). 
\end{align}
where
\begin{align}\label{cons1n}
\pmb{\mu}_{(1,n)}^{\pmb{\theta}}&=(g^T(X)g(X))^{-1}g^T(X)Y\nonumber\\
\text{Const}_{(1,n)}&=Y^TY-Y^Tg(X)(g^T(X)g(X))^{-1}g^T(X)Y
\end{align}
For the second integral with respect to $\pmb{\theta}$ in \eqref{eq:28}, we first rewrite \small $\sum_{i\in \mathcal{S}-\mathcal{A}^{(l)}_j}(y_i-g(x_i)\pmb{\theta})^2$ \normalsize as \small $(Y_{\mathcal{S}-\mathcal{A}^{(l)}_j}-g(x_i)_{i \in \mathcal{S}-\mathcal{A}^{(l)}_j}\pmb{\theta})^T(Y_{\mathcal{S}-\mathcal{A}^{(l)}_j}-g(x_i)_{i \in \mathcal{S}-\mathcal{A}^{(l)}_j}\pmb{\theta})$ \normalsize or equivalently 
\small \begin{align}
\sum_{i\in \mathcal{S}-\mathcal{A}^{(l)}_j}(y_i-g(x_i)\pmb{\theta})^2&=(\pmb{\theta}-\pmb{\mu}_{(1,j)}^{\pmb{\theta}})^T\left(g^T(x_i)_{i \in \mathcal{S}-\mathcal{A}^{(l)}_j}g(x_i)_{i \in \mathcal{S}-\mathcal{A}^{(l)}_j}\right)(\pmb{\theta}-\pmb{\mu}_{(1,j)}^{\pmb{\theta}})+\text{Const}_{(1,j)}
\end{align} \normalsize
where
\small \begin{align}
\pmb{\mu}_{(1,j)}^{\pmb{\theta}}&=\left(g^T(x_i)_{i \in \mathcal{S}-\mathcal{A}^{(l)}_j}g(x_i)_{i \in \mathcal{S}-\mathcal{A}^{(l)}_j} \right)^{-1}g^T(x_i)_{i \in \mathcal{S}-\mathcal{A}^{(l)}_j}Y_{\mathcal{S}-\mathcal{A}^{(l)}_j}\nonumber\\
\text{Const}_{(1,j)}=Y^T_{\mathcal{S}-\mathcal{A}^{(l)}_j}Y_{\mathcal{S}-\mathcal{A}^{(l)}_j}&-Y^T_{\mathcal{S}-\mathcal{A}^{(l)}_j}g(x_i)_{i \in \mathcal{S}-\mathcal{A}^{(l)}_j}\left(g^T(x_i)_{i \in \mathcal{S}-\mathcal{A}^{(l)}_j}g(x_i)_{i \in \mathcal{S}-\mathcal{A}^{(l)}_j} \right)^{-1}g^T(x_i)_{i \in \mathcal{S}-\mathcal{A}^{(l)}_j}Y_{\mathcal{S}-\mathcal{A}^{(l)}_j}
\end{align} \normalsize
$Y_{\mathcal{S}-\mathcal{A}^{(l)}_j}$ is an $(n-n_j^l)-$components column vector of $y_i$ for all $i \notin \mathcal{A}^{(l)}_j$ and $g(x_i)_{i \in \mathcal{S}-\mathcal{A}^{(l)}_j}$ indicates an $(n-n_j^l)\times d$ matrix of $g(x_i)$s obtained for all $i \notin \mathcal{A}^{(l)}_j$. On the other hand, $\text{Const}_j^l$ in \eqref{eq:28} can be written as follows
\small \begin{align}\label{consjll}
\text{Const}_j^l&=(Y_{\mathcal{A}^{(l)}_j}-g(x_i)_{i \in \mathcal{A}^{(l)}_j}\pmb{\theta})^T\left(\mathbb{I}_{n_j^l}-\left(\pmb{\Sigma}_{j,l} \right)^{-1}\right)
(Y_{\mathcal{A}^{(l)}_j}-g(x_i)_{i \in \mathcal{A}^{(l)}_j}\pmb{\theta})\nonumber\\
&=(\pmb{\theta}-\mu_{(2,j)}^{\pmb{\theta}})^Tg^T(x_i)_{i \in \mathcal{A}^{(l)}_j}\left(\mathbb{I}_{n_j^l}-\left(\pmb{\Sigma}_{j,l} \right)^{-1}\right)g(x_i)_{i \in \mathcal{A}^{(l)}_j}
(\pmb{\theta}-\mu_{(2,j)}^{\pmb{\theta}})+\text{Const}_{(2,j)}
\end{align} \normalsize
where
\scriptsize \begin{align}
\mu_{(2,j)}^{\pmb{\theta}}&=\left[g^T(x_i)_{i \in \mathcal{A}^{(l)}_j}\left(\mathbb{I}_{n_j^l}-\left(\pmb{\Sigma}_{j,l} \right)^{-1}\right)g(x_i)_{i \in \mathcal{A}^{(l)}_j}\right]^{-1}g^T(x_i)_{i \in \mathcal{A}^{(l)}_j}\left(\mathbb{I}_{n_j^l}-\left(\pmb{\Sigma}_{j,l} \right)^{-1}\right)Y_{\mathcal{A}^{(l)}_j}\nonumber\\
\text{Const}_{(2,j)}&=Y^T_{\mathcal{A}^{(l)}_j}\left(\mathbb{I}_{n_j^l}-\left(\pmb{\Sigma}_{j,l} \right)^{-1}\right)Y_{\mathcal{A}^{(l)}_j}\nonumber\\
-Y^T_{\mathcal{A}^{(l)}_j}
&\left(\mathbb{I}_{n_j^l}-\left(\pmb{\Sigma}_{j,l} \right)^{-1}\right)g(x_i)_{i \in \mathcal{A}^{(l)}_j}
\left[g^T(x_i)_{i \in \mathcal{A}^{(l)}_j}\left(\mathbb{I}_{n_j^l}-\left(\pmb{\Sigma}_{j,l} \right)^{-1}\right)g(x_i)_{i \in \mathcal{A}^{(l)}_j}\right]^{-1}
g^T(x_i)_{i \in \mathcal{A}^{(l)}_j}\left(\mathbb{I}_{n_j^l}-\left(\pmb{\Sigma}_{j,l} \right)^{-1}\right)Y_{\mathcal{A}^{(l)}_j}
\end{align}\normalsize
This yields 
\small \begin{align}\label{eq:29}
\int_{\pmb{\theta}}&\exp\left(-\frac{1}{2\lambda^2}\left[\sum_{i\in \mathcal{S}-\mathcal{A}^{(l)}_j}(y_i-g(x_i)\pmb{\theta})^2+\text{Const}_j^l\right] \right)\mathrm{d}(\pmb{\theta})\nonumber\\
&=\int_{\pmb{\theta}}\exp\left(-\nicefrac{\left[\sum_{i\in \mathcal{S}-\mathcal{A}^{(l)}_j}(y_i-g(x_i)\pmb{\theta})^2+(Y_{\mathcal{A}^{(l)}_j}-g(x_i)_{i \in \mathcal{A}^{(l)}_j}\pmb{\theta})^T\left(\mathbb{I}_{n_j^l}-\left(\pmb{\Sigma}_{j,l} \right)^{-1}\right)
(Y_{\mathcal{A}^{(l)}_j}-g(x_i)_{i \in \mathcal{A}^{(l)}_j}\pmb{\theta})
\right]}{2\lambda^2} \right)\nonumber\\%
&=\int_{\pmb{\theta}}\exp\left(-\nicefrac{\left[(\pmb{\theta}-\pmb{\mu}_{(1,j)}^{\pmb{\theta}})^T\left(g^T(x_i)_{i \in \mathcal{S}-\mathcal{A}^{(l)}_j}g(x_i)_{i \in \mathcal{S}-\mathcal{A}^{(l)}_j}\right)(\pmb{\theta}-\pmb{\mu}_{(1,j)}^{\pmb{\theta}})+\text{Const}_{(1,j)}
\right]}{2\lambda^2} \right)\nonumber\\
&\exp\left(-\nicefrac{\left[(\pmb{\theta}-\mu_{(2,j)}^{\pmb{\theta}})^Tg^T(x_i)_{i \in \mathcal{A}^{(l)}_j}\left(\mathbb{I}_{n_j^l}-\left(\pmb{\Sigma}_{j,l} \right)^{-1}\right)g(x_i)_{i \in \mathcal{A}^{(l)}_j}
(\pmb{\theta}-\mu_{(2,j)}^{\pmb{\theta}})+\text{Const}_{(2,j)}
\right]}{2\lambda^2} \right)\mathrm{d}(\pmb{\theta})\nonumber\\
&\leq \{\int_{\pmb{\theta}}\left[\exp\left(-\nicefrac{\left[(\pmb{\theta}-\pmb{\mu}_{(1,j)}^{\pmb{\theta}})^T\left(g^T(x_i)_{i \in \mathcal{S}-\mathcal{A}^{(l)}_j}g(x_i)_{i \in \mathcal{S}-\mathcal{A}^{(l)}_j}\right)(\pmb{\theta}-\pmb{\mu}_{(1,j)}^{\pmb{\theta}})+\text{Const}_{(1,j)}
\right]}{2\lambda^2} \right)\right]^2\mathrm{d}(\pmb{\theta})\}^{\nicefrac{1}{2}}\nonumber\\
&\{
\int_{\pmb{\theta}}\left[\exp\left(-\nicefrac{\left[(\pmb{\theta}-\mu_{(2,j)}^{\pmb{\theta}})^Tg^T(x_i)_{i \in \mathcal{A}^{(l)}_j}\left(\mathbb{I}_{n_j^l}-\left(\pmb{\Sigma}_{j,l} \right)^{-1}\right)g(x_i)_{i \in \mathcal{A}^{(l)}_j}
(\pmb{\theta}-\mu_{(2,j)}^{\pmb{\theta}})+\text{Const}_{(2,j)}
\right]}{2\lambda^2} \right)\right]^2\mathrm{d}(\pmb{\theta})
\}^{\nicefrac{1}{2}}\nonumber\\
&=\{\int_{\pmb{\theta}}\exp\left(-2\nicefrac{\left[(\pmb{\theta}-\pmb{\mu}_{(1,j)}^{\pmb{\theta}})^T\left(g^T(x_i)_{i \in \mathcal{S}-\mathcal{A}^{(l)}_j}g(x_i)_{i \in \mathcal{S}-\mathcal{A}^{(l)}_j}\right)(\pmb{\theta}-\pmb{\mu}_{(1,j)}^{\pmb{\theta}})+\text{Const}_{(1,j)}
\right]}{2\lambda^2} \right)
\mathrm{d}(\pmb{\theta})\}^{\nicefrac{1}{2}}\nonumber\\
&\{
\int_{\pmb{\theta}}\exp\left(-2\nicefrac{\left[(\pmb{\theta}-\mu_{(2,j)}^{\pmb{\theta}})^Tg^T(x_i)_{i \in \mathcal{A}^{(l)}_j}\left(\mathbb{I}_{n_j^l}-\left(\pmb{\Sigma}_{j,l} \right)^{-1}\right)g(x_i)_{i \in \mathcal{A}^{(l)}_j}
(\pmb{\theta}-\mu_{(2,j)}^{\pmb{\theta}})+\text{Const}_{(2,j)}
\right]}{2\lambda^2} \right)
\mathrm{d}(\pmb{\theta})
\}^{\nicefrac{1}{2}}\nonumber\\
\end{align}
\normalsize
when $\text{Const}_j^l$ in \eqref{eq:28} has been replaced by the expression defined in equation \eqref{consjll} and the inequalities hold based on {\em H\"older's inequality}. Since $g(x_i)\neq 0$, then for any non zero column vector $\nu$, 
$$
\nu^Tg^T(x_i)_{\mathcal{S}-\mathcal{A}^{(l)}_j}g(x_i)_{\mathcal{S}-\mathcal{A}^{(l)}_j}\nu=\left(g(x_i)_{\mathcal{S}-\mathcal{A}^{(l)}_j}\nu\right)^Tg(x_i)_{\mathcal{S}-\mathcal{A}^{(l)}_j}\nu
$$
is always positive which means that $g^T(x_i)_{\mathcal{S}-\mathcal{A}^{(l)}_j}g(x_i)_{\mathcal{S}-\mathcal{A}^{(l)}_j}$ is a $d \times d$ positive definite matrix. So the integral 
$$\int_{\pmb{\theta}}\exp\left(-2\nicefrac{\left[(\pmb{\theta}-\pmb{\mu}_{(1,j)}^{\pmb{\theta}})^T\left(g^T(x_i)_{i \in \mathcal{S}-\mathcal{A}^{(l)}_j}g(x_i)_{i \in \mathcal{S}-\mathcal{A}^{(l)}_j}\right)(\pmb{\theta}-\pmb{\mu}_{(1,j)}^{\pmb{\theta}})+\text{Const}_{(1,j)}
\right]}{2\lambda^2} \right)
\mathrm{d}(\pmb{\theta})$$
 is proper and equals $(\pi \lambda^2)^{\nicefrac{d}{2}}|g^T(x_i)_{\mathcal{S}-\mathcal{A}^{(l)}_j}g(x_i)_{\mathcal{S}-\mathcal{A}^{(l)}_j}|^{-\nicefrac{1}{2}}\exp(\nicefrac{-\text{Const}_{(1,j)}}{\lambda^2})$. %

In order to demonstrate the propriety of the second integral in \eqref{eq:29}, we must first check whether the term $g^T(x_i)_{i\in \mathcal{A}^{(l)}_j}\left(\mathbb{I}_{n_j^l}-\left(\pmb{\Sigma}_{j,l} \right)^{-1}\right)g(x_i)_{i\in \mathcal{A}^{(l)}_j}$ is 1) symmetric, 2) positive definite. 
So as to show the matrix symmetry, we compute the matrix transpose and compare it with our matrix. 
As we have demonstrated before, the matrix $\left(\pmb{\Sigma}_{j,l} \right)$ is symmetric positive definite and so is $\left(\pmb{\Sigma}_{j,l} \right)^{-1}$. Since the transpose does not change the diagonal matrix $\mathbb{I}_{n_j^l}$, the following equation  
\begin{align*}
\left(g^T(x_i)_{\mathcal{S}-\mathcal{A}^{(l)}_j}\left(\mathbb{I}_{n_j^l}-\left(\pmb{\Sigma}_{j,l} \right)^{-1}\right)g(x_i)_{\mathcal{S}-\mathcal{A}^{(l)}_j}\right)^T&=g^T(x_i)_{i\in \mathcal{A}^{(l)}_j}\left[\mathbb{I}^T_{n_j^l}-\left(\pmb{\Sigma}^T_{j,l}\right)^{-1}\right]g(x_i)_{i\in \mathcal{A}^{(l)}_j}\\
&=g^T(x_i)_{i\in \mathcal{A}^{(l)}_j}\left[\mathbb{I}_{n_j^l}-
\left(\pmb{\Sigma}_{j,l} \right)^{-1}\right]g(x_i)_{i\in \mathcal{A}^{(l)}_j}
 \end{align*}
leads us to conclude that $g^T(x_i)_{i\in \mathcal{A}^{(l)}_j}\left(\mathbb{I}_{n_j^l}-\left(\pmb{\Sigma}_{j,l} \right)^{-1}\right)g(x_i)_{i\in \mathcal{A}^{(l)}_j}$ is symmetric. For demonstrating that $g^T(x_i)_{i\in \mathcal{A}^{(l)}_j}\left(\mathbb{I}_{n_j^l}-\left(\pmb{\Sigma}_{j,l} \right)^{-1}\right)g(x_i)_{i\in \mathcal{A}^{(l)}_j}$ or equivalently 
$$g^T(x_i)_{i\in \mathcal{A}^{(l)}_j}\left[\mathbb{I}_{n_j^l}-\left(\mathbb{I}_{n_j^l}+k\text{Corr}_{\gamma_\delta;\mathcal{A}^{(l)}_j}^{-1}\right)^{-1}\right]g(x_i)_{i\in \mathcal{A}^{(l)}_j}$$
is positive definite, we must verify whether the related eigenvalues are all positive. Because the correlation matrix $\text{Corr}_{\mathcal{A}^{(l)}_j}$ is positive definite, then there exists an orthogonal matrix $Q$ of size $n_j^l\times n_j^l$ (i.e. $QQ^T=\mathbb{I}_{n_j^l}$) such that $\text{Corr}_{\gamma_\delta;\mathcal{A}^{(l)}_j}=Q\Psi Q^T$ when $\Psi$ is a diagonal matrix whose diagonal elements are the eigenvalues of the correlation matrix. This definition leads to rewrite the term $\text{Corr}_{\gamma_\delta;\mathcal{A}^{(l)}_j}^{-1}$ as follows
\begin{align}\label{ccccr}
\text{Corr}_{\gamma_\delta;\mathcal{A}^{(l)}_j}^{-1}&=(Q\Psi Q^T)^{-1}\nonumber\\
&=Q\Psi^{-1} Q^T
 \end{align}
where the inverse of the diagonal matrix $\Psi=\left[ \psi_{ii}\right]_{i=1}^{n_j^l}$ is obtained by replacing each element in the diagonal with its reciprocal as $\Psi^{-1}=\left[ \nicefrac{1}{\psi_{ii}}\right]_{i=1}^{n_j^l}$. Since $\mathbb{I}_{n_j^l}=Q\mathbb{I}_{n_j^l} Q^T$, from \eqref{ccccr}, we then obtain
\begin{align*}
\mathbb{I}_{n_j^l}-\left(\pmb{\Sigma}_{j,l} \right)^{-1}&=\mathbb{I}_{n_j^l}-\left(\mathbb{I}_{n_j^l}+k\text{Corr}_{\gamma_\delta;\mathcal{A}^{(l)}_j}^{-1}\right)^{-1}\\
&=Q\mathbb{I}_{n_j^l} Q^T-\left(Q\mathbb{I}_{n_j^l} Q^T+Q\left[ \nicefrac{k}{ \psi_{ii}}\right]_{i=1}^{n_j^l}Q^T\right)^{-1}\\
&=Q\mathbb{I}_{n_j^l} Q^T-(Q^T)^{-1}\left(\mathbb{I}_{n_j^l} +\left[ \nicefrac{k}{ \psi_{ii}}\right]_{i=1}^{n_j^l} \right)^{-1}Q^{-1}\\
&=Q\mathbb{I}_{n_j^l} Q^T-Q \left(\left[ 1+\nicefrac{k}{ \psi_{ii}}\right]_{i=1}^{n_j^l} \right)^{-1}Q^T\\
&=Q \left[ 1-\nicefrac{1}{1+\nicefrac{k}{ \psi_{ii}}}\right]_{i=1}^{n_j^l}Q^T\\
&=Q \left[ \nicefrac{k}{(\psi_{ii}+k) }\right]_{i=1}^{n_j^l}Q^T.
 \end{align*}
where $k\in (0, 1)$ and $\psi_{ii}>0$ and consequently $\mathbb{I}_{n_j^l}-\left(\pmb{\Sigma}_{j,l} \right)^{-1}$ has the positive eigenvalues. If we then suppose that the non zero vector $\omega=Q^Tg(x_i)_{i\in \mathcal{A}^{(l)}_j}\nu$, then  
$$\nu^Tg^T(x_i)_{i\in \mathcal{A}^{(l)}_j}\left[\mathbb{I}_{n_j^l}-\left(\mathbb{I}_{n_j^l}+k\text{Corr}_{\gamma_\delta;\mathcal{A}^{(l)}_j}^{-1}\right)^{-1}\right]g(x_i)_{i\in \mathcal{A}^{(l)}_j}\nu=\omega^T\left[ \nicefrac{k}{(\psi_{ii}+k) }\right]_{i=1}^{n_j^l}\omega$$
is always positive and $g^T(x_i)_{i\in \mathcal{A}^{(l)}_j}\left[\mathbb{I}_{n_j^l}-\left(\mathbb{I}_{n_j^l}+k\text{Corr}_{\gamma_\delta;\mathcal{A}^{(l)}_j}^{-1}\right)^{-1}\right]g(x_i)_{i\in \mathcal{A}^{(l)}_j}$ is therefore positive definite. 
The integral 
$$\int_{\pmb{\theta}}\exp\left(-2\nicefrac{\left[(\pmb{\theta}-\mu_{(2,j)}^{\pmb{\theta}})^Tg^T(x_i)_{i \in \mathcal{A}^{(l)}_j}\left(\mathbb{I}_{n_j^l}-\left(\pmb{\Sigma}_{j,l} \right)^{-1}\right)g(x_i)_{i \in \mathcal{A}^{(l)}_j}
(\pmb{\theta}-\mu_{(2,j)}^{\pmb{\theta}})+\text{Const}_{(2,j)}
\right]}{2\lambda^2} \right)
\mathrm{d}(\pmb{\theta})$$
 in inequality \eqref{eq:29} is then proper and equal to \scriptsize $(\pi \lambda^2)^{\nicefrac{d}{2}}|g^T(x_i)_{i \in \mathcal{A}^{(l)}_j}\left(\mathbb{I}_{n_j^l}-\left(\pmb{\Sigma}_{j,l} \right)^{-1}\right)g(x_i)_{i \in \mathcal{A}^{(l)}_j}|^{\nicefrac{-1}{2}}\exp(-\nicefrac{\text{Const}_{(2,j)}}{\lambda^2})$. \normalsize    
\eqref{eq:29} can then be rewritten as
 \scriptsize \begin{align}\label{eq:30}
\int_{\pmb{\theta}}&\exp\left(-\frac{1}{2\lambda^2}\left[\sum_{i\in \mathcal{S}-\mathcal{A}^{(l)}_j}(y_i-g(x_i)\pmb{\theta})^2+\text{Const}_j^l\right] \right)\mathrm{d}(\pmb{\theta})\nonumber\\
&\leq (\pi \lambda^2)^{\nicefrac{d}{2}}|g^T(x_i)_{\mathcal{S}-\mathcal{A}^{(l)}_j}g(x_i)_{\mathcal{S}-\mathcal{A}^{(l)}_j}|^{-\nicefrac{1}{4}}|g^T(x_i)_{i \in \mathcal{A}^{(l)}_j}\left(\mathbb{I}_{n_j^l}-\left(\pmb{\Sigma}_{j,l} \right)^{-1}\right)g(x_i)_{i \in \mathcal{A}^{(l)}_j}|^{\nicefrac{-1}{4}}\exp(-\nicefrac{(\text{Const}_{(1,j)}+\text{Const}_{(2,j)})}{2\lambda^2})
\end{align}
\normalsize
The last integral with respect to $\pmb{\theta}$ in \eqref{eq:28} is given by
 \small \begin{align}\label{eq:31}
\int_{\pmb{\theta}}&\exp\left(-\frac{1}{2\lambda^2}\left[\text{Const}_n\right] \right)\mathrm{d}(\pmb{\theta})\nonumber\\
&=\int_{\pmb{\theta}}\exp\left(-\frac{1}{2\lambda^2}\left[(Y-g(X)\pmb{\theta})^T(Y-g(X)\pmb{\theta})-(Y-g(X)\pmb{\theta})^T\left(\pmb{\Sigma}_n \right)^{-1}(Y-g(X)\pmb{\theta})\right] \right)\mathrm{d}(\pmb{\theta})\nonumber\\
&=\int_{\pmb{\theta}}\exp\left(-\frac{1}{2\lambda^2}\left[(\pmb{\theta}-\pmb{\mu}_{(2,n)}^{\pmb{\theta}})^Tg^T(X)\left(\mathbb{I}_n-\left(\pmb{\Sigma}_n \right)^{-1} \right)g(X)(\pmb{\theta}-\pmb{\mu}_{(2,n)}^{\pmb{\theta}})+\text{Const}_{(2,n)}\right] \right)\mathrm{d}(\pmb{\theta})\nonumber\\
&=(2\pi \lambda^2)^{\nicefrac{d}{2}}|g^T(X)\left(\mathbb{I}_n-\left(\pmb{\Sigma}_n \right)^{-1} \right)g(X)|^{-\nicefrac{1}{2}}\exp(-\nicefrac{\text{Const}_{(2,n)}}{2\lambda^2})
\end{align}
\normalsize
where
\scriptsize \begin{align}
\pmb{\mu}_{(2,n)}^{\pmb{\theta}}&=\left[g^T(X)\left(\mathbb{I}_n-\left(\pmb{\Sigma}_n \right)^{-1} \right)g(X)\right]^{-1}g^T(X)\left(\mathbb{I}_n-\left(\pmb{\Sigma}_n \right)^{-1} \right)Y\nonumber\\
\text{Const}_{(2,n)}&=Y^T\left(\mathbb{I}_n-\left(\pmb{\Sigma}_n \right)^{-1} \right)Y-Y^T\left(\mathbb{I}_n-\left(\pmb{\Sigma}_n \right)^{-1} \right)g(X)\left[g^T(X)\left(\mathbb{I}_n-\left(\pmb{\Sigma}_n \right)^{-1} \right)g(X)\right]^{-1}g^T(X)\left(\mathbb{I}_n-\left(\pmb{\Sigma}_n \right)^{-1} \right)Y
\end{align}\normalsize
Because $\text{Corr}_{\gamma_\delta}^{-1}$ is symmetric positive definite, we can easily prove that 
 $$g^T(X)\left(\mathbb{I}_n-\left(\pmb{\Sigma}_n \right)^{-1} \right)g(X)=g^T(X)\left(\mathbb{I}_n-\left(\mathbb{I}_n+k\text{Corr}_{\gamma_\delta}^{-1} \right)^{-1} \right)g(X)$$
 is also a symmetric positive definite matrix (the proof in the same manner as that for $g^T(x_i)_{i \in \mathcal{A}^{(l)}_j}(\mathbb{I}_{n_j^l}-\left(\pmb{\Sigma}_{j,l} \right)^{-1})g(x_i)_{i \in \mathcal{A}^{(l)}_j}$).
By replacing \eqref{eq:add}, \eqref{eq:30} and \eqref{eq:31} in the marginal likelihood \eqref{eq:28}, we obtain

\small \begin{align}\label{eq:32}
m_{\MF_{\alpha}}(Y|X)\leq \int \alpha^n &(2\pi )^{\nicefrac{(d-n)}{2}}|g^T(X)g(X)|^{-\nicefrac{1}{2}}\int_{\lambda}(\lambda^2)^{\nicefrac{(d-n-1)}{2}}\exp(-\nicefrac{\text{Const}_{(1,n)}}{2\lambda^2})\mathrm{d}(\lambda)\pi(\alpha, k)\mathrm{d}(\alpha, k)\nonumber\\
+\int \sum_{l=1}^n&\alpha^{n-l}(1-\alpha)^l\sum_{j=1}^{\binom nl}\nicefrac{|\text{Corr}_{\gamma_\delta;\mathcal{A}^{(l)}_j}|^{-1/2}|\Sigma_{j,l}|^{-\nicefrac{1}{2}}k^{\nicefrac{n_j^l}{2}}}{(2\pi)^{n/2}}(\pi)^{\nicefrac{d}{2}}|g^T(x_i)_{\mathcal{S}-\mathcal{A}^{(l)}_j}g(x_i)_{\mathcal{S}-\mathcal{A}^{(l)}_j}|^{-\nicefrac{1}{4}}\nonumber\\
|g^T(x_i)_{i \in \mathcal{A}^{(l)}_j}&\left(\mathbb{I}_{n_j^l}-\left(\pmb{\Sigma}_{j,l} \right)^{-1}\right)g(x_i)_{i \in \mathcal{A}^{(l)}_j}|^{\nicefrac{-1}{4}}\nonumber\\
&\int_{\lambda}(\lambda^2)^{\nicefrac{(d-n-1)}{2}}\exp(-\nicefrac{(\text{Const}_{(1,j)}+\text{Const}_{(2,j)})}{2\lambda^2})\mathrm{d}(\lambda)\pi(\alpha, k,\gamma_\delta)\mathrm{d}(\alpha, k,\gamma_\delta)\nonumber\\
+\int(1-\alpha)^n&\nicefrac{|\text{Corr}_{\gamma_\delta}|^{-1/2}|\Sigma_{n}|^{-\nicefrac{1}{2}}k^{\nicefrac{n}{2}}|g^T(X)\left(\mathbb{I}_n-\left(\pmb{\Sigma}_n \right)^{-1} \right)g(X)|^{-\nicefrac{1}{2}}}{(2\pi)^{\nicefrac{(n-d)}{2}}}\nonumber\\
&\int_{\lambda}(\lambda^2)^{\nicefrac{(d-n-1)}{2}}\exp(-\nicefrac{\text{Const}_{(2,n)}}{2\lambda^2})\mathrm{d}(\lambda)\pi(\alpha, k,\gamma_\delta)\mathrm{d}(\alpha, k,\gamma_\delta)
\end{align}
\normalsize
\eqref{eq:32} is integrable with respect to $\lambda$ if the the terms $\text{Const}_{(1,n)}, \text{Const}_{(1,j)}, \text{Const}_{(2,j)}$ and $\text{Const}_{(2,n)}$ are all positive. To do so, we first consider $\text{Const}_{(1,n)}$ defined in \eqref{cons1n} which can be rewritten as 
\begin{align}
\text{Const}_{(1,n)}&=Y^T\left[\mathbb{I}_n-\mathbf{H}_g(X)\right]Y
\end{align}
in which $\mathbf{H}_g(X)=g(X)(g^T(X)g(X))^{-1}g^T(X)$ is a hat matrix or orthogonal projection matrix. From a classical point of view, this is basically because  $\mathbf{H}_g(X)$ projects the vector of observations onto the vector of predictions for linear regression model, $\hat{Y}=\mathbf{H}_g(X)Y$, \citep{DCHREW1978}. 
Among a large number of useful algebraic properties, three facts of $\mathbf{H}_g(X)$ are summarized as follows: First, $\mathbf{H}_g(X)$ is symmetric (as shown before) and idempotent, leading that the diagonal elements $\left[\mathbf{H}_g(X)\right]_{ii}; i=1, \ldots, n$ satisfy $0\leq \left[\mathbf{H}_g(X)\right]_{ii}\leq 1$; Second, since $g(X)$ is a full rank matrix, it is easy to show that $\mathbf{H}_g(X)$ is positive definite with eigendecomposition $\mathbf{H}_g(X)=Q_{\mathbf{H}_g(X)}\Psi_{\mathbf{H}_g(X)}Q^T_{\mathbf{H}_g(X)}$; And third, the eigenvalues of $\mathbf{H}_g(X)$ denoted by $\psi^{\mathbf{H}_g(X)}_{ii}=\left[\Psi_{\mathbf{H}_g(X)}\right]_{ii}$ are either $0$ or $1$ and so $\psi^{\mathbf{H}_g(X)}_{ii}=0$ or $\psi^{\mathbf{H}_g(X)}_{ii}=1$. From the second and third properties, we deduce that       
\begin{align}\label{cons1n2}
\text{Const}_{(1,n)}&=Y^TQ_{\mathbf{H}_g(X)}\left[\mathbb{I}_n-
\Psi_{\mathbf{H}_g(X)}
\right]Q^T_{\mathbf{H}_g(X)}Y\nonumber\\
&=Y^TQ_{\mathbf{H}_g(X)}\left[ 1-\psi^{\mathbf{H}_g(X)}_{ii}\right]_{i=1}^nQ^T_{\mathbf{H}_g(X)}Y\nonumber\\
&=\sum_{i=1}^n \left((1-\psi^{\mathbf{H}_g(X)}_{ii})(\sum_{j=1}^n y_j q_{ji})^2\right)
\end{align}
in which $1-\psi^{\mathbf{H}_g(X)}_{ii}$ are also either $0$ or $1$ and $q_{ji}$ is the $ji-$th element of the matrix $Q_{\mathbf{H}_g(X)}$. Furthermore, from $\text{Const}_{(1,n)}=Y^T\left[Y-\hat{Y}\right]$, the $\text{Const}_{(1,n)}$ is almost never zero since $\mathbb{P}(Y=\hat{Y})=0$, a.s. This leads to conclude that the last equation in \eqref{cons1n2} is obviously positive. In the same manner, we can prove that $\text{Const}_{(1,j)}$ is always positive. 
The $\text{Const}_{(2,n)}$ and $\text{Const}_{(2,j)}$ have the same form and so we just prove that $\text{Const}_{(2,n)}$ is always positive. In order to simplify the formulas, we denote by $\Upsilon(\mathbb{I}_n, \pmb{\Sigma}_n)$, the term $\left(\mathbb{I}_n-\left(\pmb{\Sigma}_n \right)^{-1} \right)$. 
Since the matrix $\Upsilon(\mathbb{I}_n, \pmb{\Sigma}_n)$ is positive definite, we first show that there exists a matrix $\Upsilon(\mathbb{I}_n, \pmb{\Sigma}_n)^{\nicefrac{1}{2}}>0$ such that $\Upsilon(\mathbb{I}_n, \pmb{\Sigma}_n)=\Upsilon(\mathbb{I}_n, \pmb{\Sigma}_n)^{\nicefrac{1}{2}}\Upsilon(\mathbb{I}_n, \pmb{\Sigma}_n)^{\nicefrac{1}{2}}$. If the diagonal matrix of its eigenvalues is denoted by $\Psi_{\Upsilon(\mathbb{I}_n, \pmb{\Sigma}_n)}=\left[\psi_{ii}^{\Upsilon(\mathbb{I}_n, \pmb{\Sigma}_n)} \right]_{i=1}^n$, then the eigendecomposition is given by 
\small \begin{align}\label{cons2n}
\Upsilon(\mathbb{I}_n, \pmb{\Sigma}_n)&=Q_{\Upsilon}\Psi_{\Upsilon(\mathbb{I}_n, \pmb{\Sigma}_n)}Q^T_{\Upsilon};\quad Q_{\Upsilon}Q^T_{\Upsilon}=\mathbb{I}_n; \quad Q^T_{\Upsilon}Q_{\Upsilon}=\mathbb{I}_n\nonumber\\
&=Q_{\Upsilon}\Psi_{\Upsilon(\mathbb{I}_n, \pmb{\Sigma}_n)}^{\nicefrac{1}{2}}Q^T_{\Upsilon}Q_{\Upsilon}\Psi_{\Upsilon(\mathbb{I}_n, \pmb{\Sigma}_n)}^{\nicefrac{1}{2}}Q^T_{\Upsilon}; \quad \left[\Psi_{\Upsilon(\mathbb{I}_n, \pmb{\Sigma}_n)}^{\nicefrac{1}{2}} \right]_{ii}=\sqrt{\psi_{ii}^{\Upsilon(\mathbb{I}_n, \pmb{\Sigma}_n)}}\nonumber\\
&=\Upsilon(\mathbb{I}_n, \pmb{\Sigma}_n)^{\nicefrac{1}{2}}\Upsilon(\mathbb{I}_n, \pmb{\Sigma}_n)^{\nicefrac{1}{2}}
\end{align}\normalsize
From \eqref{cons2n}, we can rewritten $\text{Const}_{(2,n)}$ as follows 
\scriptsize \begin{align}
\text{Const}_{(2,n)}&=Y^T\Upsilon(\mathbb{I}_n, \pmb{\Sigma}_n)^{\nicefrac{1}{2}}\Upsilon(\mathbb{I}_n, \pmb{\Sigma}_n)^{\nicefrac{1}{2}}Y\nonumber\\
-&Y^T\Upsilon(\mathbb{I}_n, \pmb{\Sigma}_n)^{\nicefrac{1}{2}}\Upsilon(\mathbb{I}_n, \pmb{\Sigma}_n)^{\nicefrac{1}{2}}g(X)\left[g^T(X)\Upsilon(\mathbb{I}_n, \pmb{\Sigma}_n)^{\nicefrac{1}{2}}\Upsilon(\mathbb{I}_n, \pmb{\Sigma}_n)^{\nicefrac{1}{2}}g(X)\right]^{-1}g^T(X)\Upsilon(\mathbb{I}_n, \pmb{\Sigma}_n)^{\nicefrac{1}{2}}\Upsilon(\mathbb{I}_n, \pmb{\Sigma}_n)^{\nicefrac{1}{2}}Y\nonumber\\
=&Y^T\Upsilon(\mathbb{I}_n, \pmb{\Sigma}_n)^{\nicefrac{1}{2}}\left[ \mathbb{I}_n-\Upsilon(\mathbb{I}_n, \pmb{\Sigma}_n)^{\nicefrac{1}{2}}g(X)\left[g^T(X)\Upsilon(\mathbb{I}_n, \pmb{\Sigma}_n)^{\nicefrac{1}{2}}\Upsilon(\mathbb{I}_n, \pmb{\Sigma}_n)^{\nicefrac{1}{2}}g(X)\right]^{-1}g^T(X)\Upsilon(\mathbb{I}_n, \pmb{\Sigma}_n)^{\nicefrac{1}{2}}
\right]\Upsilon(\mathbb{I}_n, \pmb{\Sigma}_n)^{\nicefrac{1}{2}}Y.
\end{align}\normalsize
Because \scriptsize $\Upsilon(\mathbb{I}_n, \pmb{\Sigma}_n)^{\nicefrac{1}{2}}g(X)\left[g^T(X)\Upsilon(\mathbb{I}_n, \pmb{\Sigma}_n)^{\nicefrac{1}{2}}\Upsilon(\mathbb{I}_n, \pmb{\Sigma}_n)^{\nicefrac{1}{2}}g(X)\right]^{-1}g^T(X)\Upsilon(\mathbb{I}_n, \pmb{\Sigma}_n)^{\nicefrac{1}{2}}$ \normalsize is an orthogonal projection matrix, so the positiveness of $\text{Const}_{(2,n)}$ is trivially true. The integrals with respect to $\lambda$ are then equal to \small $\nicefrac{2^{(\nicefrac{n-d}{2})-1}\Gamma(\nicefrac{n-d}{2})}{\text{Const}_{(1,n)}^{\nicefrac{n-d}{2}}}$, 
$\nicefrac{2^{(\nicefrac{n-d}{2})-1}\Gamma(\nicefrac{n-d}{2})}{(\text{Const}_{(1,j)}+\text{Const}_{(2,j)})^{\nicefrac{n-d}{2}}}$ and $\nicefrac{2^{(\nicefrac{n-d}{2})-1}\Gamma(\nicefrac{n-d}{2})}{\text{Const}_{(2,n)}^{\nicefrac{n-d}{2}}}$, \normalsize respectively. 

After replacing the prior distribution of $\alpha$ in \eqref{eq:32}, the integrals with respect to $\alpha$ are then equal to $\nicefrac{\Gamma(n+a_0)\Gamma(2a_0)}{\Gamma(a_0)\Gamma(n+2a_0)}$, 
$\nicefrac{\Gamma(n-l+a_0)\Gamma(l+a_0)\Gamma(2a_0)}{\Gamma(n+2a_0)\Gamma(a_0)^2}$ and $\nicefrac{\Gamma(n+a_0)\Gamma(2a_0)}{\Gamma(a_0)\Gamma(n+2a_0)}$, respectively. We then obtain
\small \begin{align}\label{eq:33}
m_{\MF_{\alpha}}&(Y|X)\leq\int_{\gamma_\delta} \nicefrac{\left(\Gamma(n+a_0)\Gamma(2a_0)|g^T(X)g(X)|^{-\nicefrac{1}{2}}\Gamma(\nicefrac{n-d}{2})\right)}{2\left(\Gamma(a_0)\Gamma(n+2a_0)\text{Const}_{(1,n)}^{\nicefrac{n-d}{2}}(\pi )^{\nicefrac{(n-d)}{2}}\right)} \mathrm{d}(\gamma_\delta)
\nonumber\\
+\int_{\gamma_\delta}\sum_{l=1}^n&\sum_{j=1}^{\binom nl}\nicefrac{\left(\Gamma(n-l+a_0)\Gamma(l+a_0)\Gamma(2a_0)|\text{Corr}_{\gamma_\delta;\mathcal{A}^{(l)}_j}|^{-1/2}|\Sigma_{j,l}|^{-\nicefrac{1}{2}}|g^T(x_i)_{\mathcal{S}-\mathcal{A}^{(l)}_j}g(x_i)_{\mathcal{S}-\mathcal{A}^{(l)}_j}|^{-\nicefrac{1}{4}}\right)}{\left(\Gamma(n+2a_0)\Gamma(a_0)^2 2^{d/2+1}\pi^{\nicefrac{(n-d)}{2}}\right)}
\nonumber\\
&\nicefrac{\left(|g^T(x_i)_{i \in \mathcal{A}^{(l)}_j}\left(\mathbb{I}_{n_j^l}-\left(\pmb{\Sigma}_{j,l} \right)^{-1}\right)g(x_i)_{i \in \mathcal{A}^{(l)}_j}|^{\nicefrac{-1}{4}} \Gamma(\nicefrac{n-d}{2})\right)}{\left((\text{Const}_{(1,j)}+\text{Const}_{(2,j)})^{\nicefrac{n-d}{2}}\right)}\int k^{\nicefrac{n_j^l}{2}}\pi(k)\mathrm{d}(k)\mathrm{d}(\gamma_\delta)\nonumber\\
+&\int_{\gamma_\delta}\nicefrac{\left(|\text{Corr}_{\gamma_\delta}|^{-1/2}|\Sigma_{n}|^{-\nicefrac{1}{2}}\Gamma(n+a_0)\Gamma(2a_0)|g^T(X)\left(\mathbb{I}_n-\left(\pmb{\Sigma}_n \right)^{-1} \right)g(X)|^{-\nicefrac{1}{2}}\Gamma(\nicefrac{n-d}{2})\right)}{\left(2(\pi)^{\nicefrac{(n-d)}{2}}\Gamma(a_0)\Gamma(n+2a_0)\text{Const}_{(2,n)}^{\nicefrac{n-d}{2}}\right)}\mathrm{d}(\gamma_\delta)\nonumber\\
&\int k^{\nicefrac{n}{2}}\pi(k)\mathrm{d}(k)
\end{align}
\normalsize

Since the prior distribution of $k$, $\gamma_\delta$ are supposed to be proper and $\gamma_\delta\in [0,1)$, the integrals with respect to $k$ and then to $\gamma_\delta$ are proper. This means that the marginal likelihood $m_{\MF_{\alpha}}(Y|X)$ is therefore finite and the posterior distribution of the mixture model parameters is consequently proper. 
\end{proof}   

\end{document}